\newtheorem{thm}{Theorem}
\newtheorem{lem}[thm]{Lemma}
\newtheorem{cor}[thm]{Corollary}
\newtheorem{defin}[thm]{Definition}
\newtheorem{claim}[thm]{Claim}
\newcommand{\sket}[1]{{\ensuremath{\lvert#1\rangle}}}
\newcommand{\lket}[1]{{\ensuremath{\left\lvert#1\right\rangle}}}
\newcommand{\ket}[1]{\mathchoice{\lket{#1}}{\sket{#1}}{\sket{#1}}{\sket{#1}}}
\newcommand{\sbra}[1]{{\ensuremath{\langle#1\rvert}}}
\newcommand{\lbra}[1]{{\ensuremath{\left\langle#1\right\rvert}}}
\newcommand{\bra}[1]{\mathchoice{\lbra{#1}}{\sbra{#1}}{\sbra{#1}}{\sbra{#1}}}
\newcommand{\sbraket}[2]{{\ensuremath{\langle#1\rvert#2\rangle}}}
\newcommand{\lbraket}[2]{{\ensuremath{\left\langle#1\!\left\rvert\vphantom{#1}#2\right.\!\right\rangle}}}
\newcommand{\braket}[2]{\mathchoice{\lbraket{#1}{#2}}{\sbraket{#1}{#2}}{\sbraket{#1}{#2}}{\sbraket{#1}{#2}}}
\newcommand{\sketbra}[2]{{\ensuremath{\lvert #1\rangle\langle #2\rvert}}}
\newcommand{\lketbra}[2]{{\ensuremath{\left\lvert #1\middle\rangle\middle\langle #2\right\rvert}}}
\newcommand{\ketbra}[2]{\mathchoice{\lketbra{#1}{#2}}{\sketbra{#1}{#2}}{\sketbra{#1}{#2}}{\sketbra{#1}{#2}}}
\newcommand{\proj}[1]{\ketbra{#1}{#1}}
\newcommand{\floor}[1]{\left\lfloor #1 \right\rfloor}
\newcommand{\ident}{\id}
\DeclareMathOperator{\tr}{\mathrm{Tr}}
\DeclareMathOperator{\Herm}{Herm}
\newcommand{\demi}{\frac{1}{2}}
\newcommand{\hmax}{{\rm H}_{\max}}
\newcommand{\hmin}{{\rm H}_{\min}}
\newcommand{\htwo}{{\rm H}_{2}}
\newcommand{\hmineps}{{\rm H}_{\min}^{\varepsilon}}
\newcommand{\mbC}{\mathbb{C}}
\newcommand{\mbE}{\mathbb{E}}
\newcommand{\mbR}{\mathbb{R}}
\newcommand{\mfS}{\mathfrak{S}}
\renewcommand{\otimes}{\varotimes}
\DeclareMathOperator{\opt}{opt}
\newcommand{\commentout}[1]{}
\DeclareMathOperator{\Pos}{Pos}
\def\01{\{0,1\}}
\newcommand{\nc}{\newcommand}
\nc{\dg}{\dagger}
\nc{\cA}{{\cal A}}
\nc{\cB}{{\cal B}}
\nc{\cC}{{\cal C}}
\nc{\cD}{{\cal D}}
\nc{\cE}{{\cal E}}
\nc{\cF}{{\cal F}}
\nc{\cG}{{\cal G}}
\nc{\cH}{{\cal H}}
\nc{\cI}{{\cal I}}
\nc{\cJ}{{\cal J}}
\nc{\cK}{{\cal K}}
\nc{\cL}{{\cal L}}
\nc{\cM}{{\cal M}}
\nc{\cN}{{\cal N}}
\nc{\cO}{{\cal O}}
\nc{\cP}{{\cal P}}
\nc{\cR}{{\cal R}}
\nc{\cS}{{\cal S}}
\nc{\cT}{{\cal T}}
\nc{\cU}{{\cal U}}
\nc{\cV}{{\cal V}}
\nc{\cX}{{\cal X}}
\nc{\cZ}{{\cal Z}}
\nc{\eps}{\varepsilon}
\nc{\e}{\varepsilon}
\nc{\binent}{h}
\nc{\id}{{\operatorname{id}}}
\nc{\entI}{{\bf I}}
\nc{\entIarrow}{{\bf I}^{\leftarrow}}
\nc{\entH}{{\bf H}}
\nc{\entS}{{\bf S}}
\nc{\entHmin}{\hmin} 
\nc{\entHtwo}{\htwo}
\nc{\entHmax}{\hmax}
\nc{\aentHmin}{\hat{\mathbf{H}}_{\min}}
\nc{\hin}{\mathcal{H}_{\rm in}}
\nc{\hout}{\mathcal{H}_{\rm out}}
\nc{\msg}{M}
\nc{\qubitchannel}{\id_2}
\nc{\bitchannel}{\overline{\id}_2}
\nc{\clchannel}{\overline{\id}}
\nc{\ox}{\otimes}
\nc{\sym}{ { \rm sym } }
\nc{\states}{\cS}
\newcommand{\rhat}{\widehat{\rho}}
\nc{\supp}{\mathrm{supp}}
\newcommand{\eqdef}	{\stackrel{\textrm{def}}{=}}
\newcommand{\prob}[1]	{\mathbf{Pr}\left\{ #1 \right\}}
\newcommand{\pr}[1]	{\prob{#1}}
\begin{document}
\title{Entanglement sampling and applications}
\author{Frédéric \surname{Dupuis}}
\email[]{dupuis@cs.au.dk}
\affiliation{Department of Computer Science, Aarhus University, Åbogade 34, 8200 Aarhus, Denmark}
\affiliation{Institute for Theoretical Physics, ETH Z\"{u}rich, 8093 Zürich, Switzerland}
\author{Omar \surname{Fawzi}}
\email[]{ofawzi@phys.ethz.ch}
\affiliation{Institute for Theoretical Physics, ETH Z\"{u}rich, 8093 Zürich, Switzerland}
\author{Stephanie \surname{Wehner}}
\email[]{steph@locc.la}
\affiliation{Centre for Quantum Technologies, National University of Singapore, 2 Science Drive 3, 117543 Singapore}
\affiliation{School of Computing, National University of Singapore, 13 Computing Drive, 117417 Singapore}

\date{\today}
\begin{abstract}
	A natural measure for the amount of quantum information that a physical system $E$ holds about another system $A = A_1,\ldots,A_n$ is given by the min-entropy $\hmin(A|E)$.
	Specifically, the min-entropy measures the amount of entanglement between $E$ and $A$, and is the relevant measure when analyzing a wide variety of problems 
	ranging from randomness extraction in quantum cryptography, decoupling used in channel coding, to physical processes such as thermalization or the thermodynamic
	work cost (or gain) of erasing a quantum system.
	As such, it is a central question to determine the behaviour of the min-entropy after some process $\mathcal{M}$ is applied to the system $A$. Here we introduce a new generic tool relating the resulting min-entropy to the original one, and apply it to several settings of interest.

	\begin{itemize}
		\item A simple example of such a process is the one of \emph{sampling}, where a subset $S$ of the systems $A_1,\ldots,A_n$ is selected at random. The question is then to quantify the entanglement that $E$ has with the selected systems $A_S$, i.e., $\hmin(A_S|ES)$ as a function of the original $\hmin(A|E)$. 
			This has two applications by itself. First, it directly provides the first local quantum-to-classical randomness extractors for use in quantum cryptography, as well as decoupling operations acting on only a small fraction $A_S$ of the input $A$. Moreover, it gives lower bounds on the dimension of $k$-out-of-$n$ fully quantum random access encodings.

		\item Another natural example of such a process is a measurement in e.g., BB84 bases commonly used in quantum cryptography. We establish the first entropic uncertainty relations with quantum side information that are nontrivial whenever $E$ is not maximally entangled with $A$. 

		\item As a consequence, we are able to prove optimality of quantum cryptographic schemes in the noisy-storage model (NSM).
			This model allows for the secure implementation of two-party cryptographic primitives under the assumption that the adversary cannot store quantum information perfectly. A special case is the bounded-quantum-storage model (BQSM) which assumes that the adversary's quantum memory device is noise-free but limited in size. Ever since the inception of the BQSM \cite{DFSS05}, it has been a vexing open question to determine whether security is possible as long as the adversary can only store strictly less than 
the number of qubits $n$ transmitted during the protocol.  
Here, we show that security is even possible as long as the adversary's device is not larger than $n - O(\log^2 n)$ qubits, which finally settles the fundamental limits of the BQSM.

	\end{itemize}
\end{abstract}
\maketitle

\section{Introduction}
A central task in quantum theory is to effectively quantify the amount of information that some system $E$ holds about some classical or quantum data $A$.
For classical data, i.e., $A$ is a string $X^n = X_1,\ldots,X_n$, 
the \emph{min-entropy} $\hmin(X^n|E)$ forms a particularly
relevant measure because it determines the length of a secure key that can be 
obtained from $X^n$. This is the setting typically considered in quantum key distribution where $E$ is some information that an adversary Eve has gathered during the course
of the protocol, and $X^n$ is the so-called raw key.
More precisely, the maximum number $\ell$ of 
(almost) random bits~\footnote{We restrict ourselves to bits in the introduction, however, all our results also apply to higher 
dimensional alphabets.} that can be obtained from $X^n$ that are both uniform and uncorrelated from $E$ 
obeys $\ell \approx \hmin(X^n|E)$, if $E$ is classical~\cite{ILL89} and quantum~\cite{Ren05}. The process
by which such randomness is obtained is known as \emph{randomness extraction} (see~\cite{vadhan:survey} for a survey) or privacy amplification. 
Classically, a (strong) 
randomness extractor is simply a set of functions $\mathcal{F} = \{f: \{0,1\}^n \rightarrow \{0,1\}^\ell\}$ such that 
for almost all functions $f \in \mathcal{F}$, its output $f(X^n)$ is close to uniform 
and uncorrelated from the adversary, even if he learns which
function was applied. That is, the output is of the form $\rho_{F(X)EF} \approx
\id/2^n \otimes \rho_{EF}$. A well known example of such a set $\mathcal{F}$ is a set of two-universal hash functions which 
are used in quantum cryptography to turn a raw key $X^n$ into a secure key $f(X^n)$. 
The min-entropy also has a very intuitive interpretation as it can be expressed as $\hmin(X^n|E) = - \log P_{\rm guess}(X^n|E)$ where $P_{\rm guess}(X^n|E)$ is the probability that the adversary manages to guess $X^n$ maximized over all measurements on $E$~\cite{KRS09}. 

What can we say in the case of quantum data $A$? 
It turns out that the fully quantum min-entropy $\hmin(A|E)$ provides us with a similarly useful way to quantify the amount of information that $E$ holds about $A$. 
Its first significance is to quantum cryptography where $E$ is again held by an adversary.
More specifically, it has been shown that a quantum-to-classical extractor (QC-extractor) can produce exactly $\ell \approx \hmin(A|E)+\log |A|$ classical bits which are uniform and uncorrelated from $E$~\cite{BFW12}. Instead of applying functions to a classical string, a QC-extractor consists of a set of projective measurements on $A$ giving a classical string as a measurement outcome. Such extractors form a useful 
tool in two-party quantum cryptography where one might have an estimate of $\hmin(A|E)$, but not of the min-entropy of any classical 
string $X^n$ produced from $A$. Thus $\hmin(A|E)$ is directly related to the amount of cryptographic randomness that can be produced from $A$. 

More generally, the min-entropy is of significance in quantum information theory where it quantifies the number of qubits of $A$ that can be decoupled from $E$~\cite{HHYW08, DBWR10}.
A decoupling operation is given by a quantum operation
$\mathcal{K}_{A \rightarrow B}$ on the system $A$ that (approximately) transforms a state $\rho_{AE}$ to 
$\tau_B \otimes \rho_E$, where $\tau_B$ depends only on $\mathcal{K}$ but not on $\rho_A$. 
When $\tau_B = \id/|B|$ is the maximally mixed state, the operation $\mathcal{K}_{A\rightarrow B}$ again generates randomness with respect to $E$ and can hence be understood as a fully quantum-to-quantum extractor (QQ-extractor). 
When decoupling is used in quantum information theory, $E$ is typically the environment of a channel $\mathcal{N}_{\bar{A}\rightarrow B}$ acting on half of a maximally entangled state $\Phi_{A\bar{A}}$, and the 
number of qubits that can be decoupled relates directly to the number of qubits that can be 
transmitted correctly 
through the channel $\mathcal{N}_{\bar{A}\rightarrow B}$ (see~\cite{Dup09} for an in-depth exposition).
Recently, the min-entropy has also gained prominence in related areas such as the study of thermalization~\cite{adrian:thesis,lidia:inprep} and well as the thermodynamics work cost (or gain!) of erasing a quantum system~\cite{renato:workGain}. 


It turns out that the fully quantum min-entropy also enjoys a very appealing operational interpretation~\cite{KRS09}.
More precisely, 
\begin{align}\label{eq:minDef}
	\hmin(A|E) = - \log |A| \max_{\Lambda_{E\rightarrow \bar{A}}} F(\Phi_{A\bar{A}}^N,\id_{A} \otimes \Lambda_{E\rightarrow \bar{A}}(\rho_{AE}))^2\ ,
\end{align}
where $F$ is the fidelity (see below) and $\Phi_{A\bar{A}}^N$ is the normalized maximally entangled state across $A$ and $\bar{A}$. That is, $\hmin(A|E)$ measures how close $\rho_{AE}$ can be brought to the maximally entangled state by performing 
a quantum operation on $E$. Intuitively, this quantifies how close the adversary $E$ can bring himself to being quantumly maximally correlated with $A$ --- exactly analogous to maximizing his classical correlations by trying to guess $X^n$. 

\subsection{Results}

Given the significance of the min-entropy in quantum information, it is a natural question to ask how the min-entropy changes if we apply a quantum operation $\mathcal{M}$ to $A$. More precisely, one might ask how $\hmin(\mathcal{M}(A)|E)$ relates to $\hmin(A|E)$, for some completely positive trace preserving map $\mathcal{M}$. At present, we know that the min-entropy satisfies $\hmin(\mathcal{M}(A)|E) \geqslant \hmin(A|E)$ if $\mathcal{M}$ is unital~\cite{Tom12}. Can we make more refined statements?

Of particular interest to us is the case where the quantum system consist of $n$ qudits $A^n = A_1,\ldots,A_n$.
Our main result is to establish the following very general theorem for maps $\mathcal{M}$ with the property that we can diagonalize
$((\mathcal{M}^\dagger \circ \mathcal{M}) \otimes \id_{\bar{A}^n})(\Phi_{A^n\bar{A}^n}) = \sum_{s \in \{0,\ldots,d^2-1\}} \lambda_s \Phi_s$ where $A^n = A_1,\ldots,A_n$, $d=|A_j|$ is the
dimension of one of the individual qudits, $\Phi_{A^n\bar{A}^n}$ is again the unnormalized maximally entangled state, and 
$\{\Phi_s\}_s$ is a basis for the space $A^n \otimes \bar{A}^n$ 
consisting of maximally entangled vectors (see Sections~\ref{sec:prelim} and Section~\ref{sec:evol} for precise definitions and statement of the theorem). 
In other words, the unnormalized state $((\mathcal{M}^\dagger \circ \mathcal{M}) \otimes \id_{\bar{A}^n})(\Phi_{A^n\bar{A}^n})$ on $A^n \bar{A}^n$ has eigenvalues $\lambda_s$ and eigenvectors $\ket{\Phi_s}$.
In terms of the smooth min-entropy $\hmineps$, which, loosely speaking, is equal to the min-entropy except with error probability $\varepsilon$, our first contribution can be stated as

\begin{itemize}
	\item {\bf Main result (Informal)} For any partition of $\{0,\ldots,d^2-1\}^{n} = \mathfrak{S}_+ \cup \mathfrak{S}_-$ into subsets $\mathfrak{S}_+$,$\mathfrak{S}_-$ we have 
		$2^{-\hmin^{\varepsilon}(\mathcal{M}(A^n)|E)} \lessapprox \sum_{s \in \mathfrak{S}_+} \lambda_s 2^{-\hmin(A^n|E)} + (\max_{s \in \mathfrak{S}_-} \lambda_s) d^n$. 
\end{itemize}

At first glance, our condition on the maps $\mathcal{M}$ may seem rather unintuitive and indeed restrictive. Yet, it turns out that many interesting maps do indeed satisfy these conditions, allowing us to establish the following results.

\bigskip
\noindent
{\bf Entanglement sampling} 
In the study of classical extractors, a goal was to construct families of functions $f$ that are \emph{locally
computable}~\cite{Vad03}.
That is, if our goal were to extract only a very small number of key bits from a long string $X^n$ of length $n$, one might wonder whether this can be done efficiently in the sense that the functions $f$ depend only on a small number of bits of $X^n$. 
Classically, a very beautiful method to answer this question is to show that the min-entropy can in fact be \emph{sampled}~\cite{Vad03,NZ96}.
That is, 
if we choose a subset $S$ of the bits at random, then the min-entropy of the bits $X_S$ in that subset $S$ obeys
\begin{align}\label{eq:classicalSampling}
	\hmin(X_S|ES) \gtrapprox |S| R(\hmin(X^n|E)/n)\ ,
\end{align}
for some function $R$. The function $R$ can be understood
as a rate function that determines the relation of the original min-entropy rate $\frac{\hmin(X^n|E)}{n}$ to the min-entropy rate on a subset $S$ of the bits.
In other words, min-entropy sampling says that
if $X^n$ is hard to guess, then even given 
the choice of subset $S$ it is tricky for the adversary to guess $X_S$. To see why this yields the desired functions $f$ note that one way to construct a randomness extractor would
be to first pick a random subset $S$, and then apply an arbitrary extractor to the much shorter bit string $X_S$. In the classical literature,
this is known as the sample-then-extract approach~\cite{Vad03}. 

Inspired by the classical results of Vadhan~\cite{Vad03}, it is a natural question whether there exists
QC-extractors which are efficient in the sense that the measurements $M \in \mathcal{M}$ only act on a small number of qubits of $A^n = A_1,\ldots,A_n$. 
Or, even more generally, whether there exist decoupling operations which depend on only very few qubits. 
As before, one way to answer this question in generality is to show that even the fully quantum min-entropy can be sampled - that is, that \emph{entanglement} can be sampled. 

\begin{itemize}
\item {\bf Entanglement sampling (Informal)} Entanglement sampling is possible for any quantum state $\rho_{A^nE}$, i.e., 
$\hmin^{\varepsilon}(A_S|ES) \gtrapprox |S|R(\hmin(A^n|E)/n)$ 
for the rate function $R$ plotted in Figure \ref{fig:sampling-plots}. See Theorem \ref{thm:h2-sampling} for a precise statement.
\end{itemize}

It should be noted that even the case of standard min-entropy sampling of a classical string $X^n$, 
but quantum side information $E$ has proved challenging. The results of~\cite{BARdW08} imply that sampling of classical strings is possible when the distribution over the strings $X^n$ is 
uniform (i.e., $\rho_{X^nE} = (1/2^n) \sum_{x \in \01^n} \proj{x} \otimes \rho_E^x$),
and the size of $E$ is bounded, and~\cite{KR07} has shown that sampling of blocks (but not individual bits) is possible. This was later refined in~\cite{Wul10} to show that bitwise sampling is also possible (see Figure~\ref{fig:sampling-plots} for a comparison of the rate function). Very roughly, the techniques used in~\cite{Wul10} relate the adversary's ability to guess the string $X^n$ to his ability to guess the XOR of bits in the string. Clearly, in the case of fully quantum $A^n$ such techniques cannot be used as it is indeed unclear what the XOR of qubits even means. 

As this is the first result on entanglement sampling, 
it required entirely novel techniques. More precisely, it inspired the even more general theorem sketched above, from which entanglement sampling follows by choosing an
appropriate map $\mathcal{M}$. As a byproduct, using the same techniques, we also obtain a stronger statement of sampling a classical string $X^n$ with respect to a quantum system $E$ in the sense that the rate $R$ is improved (see Figure~\ref{fig:sampling-plots} for a comparison).
What's more, we are able to show an even more precise statement in terms of the entropy $\htwo(A^n|E)_{\rho}$ - without any $\varepsilon$ error terms. Classically, this quantity is known as the (conditional) collision entropy. 
In general, it is very closely related to the min-entropy, and in fact enjoys a very similar 
operational interpretation. More specifically, it can be expressed in the same form as~\eqref{eq:minDef} where the optimization over all quantum operations $\Lambda_{E \rightarrow \bar{A}^n}$ is replaced by the so-called \emph{pretty good recovery} map $\Lambda^{\rm pg}_{E \rightarrow \bar{A}^n}$ which is close to optimal~\cite{BarnKnil02}. 

\bigskip
\noindent
{\bf Application to quantum random access codes }
Another application of our entanglement sampling result is to the fully quantum random access codes. Previous works have considered encodings of $n$ classical bits $X^n=X_1,\ldots,X_n$ into quantum states $\rho_E^{X^n}$ such that any desired bit can be retrieved with a particular success probability $p$~\cite{antv02,Nayak99}. 
This was later generalized to retrieving any subset of $k$ bits from the encoding~\cite{BARdW08}. The goal of~\cite{Nayak99,BARdW08} was to derive a bound on the necessary
dimension of $\rho_E^{X^n}$ as a function of $p$ when the string $X^n$ was chosen uniformly at random. Here, we prove dimension bounds for encoding $n$ \emph{qubits} $A^n=A_1,\ldots,A_n$ 
when we desire to recover any subset of $k$ qubits with a particular fidelity. Or, read in the opposite direction, we establish a bound on the fidelity as a function of the dimension (see Section~\ref{sec:qrac}).

\bigskip
\noindent
{\bf Uncertainty relations } Another consequence of our main result is a new uncertainty relation with quantum side information~\cite{BCCRR10} for measurements of $n$ qubits $A^n = A_1,\ldots,A_n$ in randomly chosen BB84~\cite{BB84} bases. 
Apart from the foundational consequences, such relations have found use in verifying the presence of entanglement~\cite{ur:experiment} as well as in quantum cryptography (see e.g.,~\cite{BFW12}).
Our result establishes the first entropic uncertainty relation with quantum side-information that uses a high-order entropy like the min-entropy and that is nontrivial as soon as the system being measured is not maximally entangled with the observer $E$. In other words, this shows a quantitative bound on the probability of successfully guessing the measurement outcome that is nontrivial as soon as $\hmin(A^n|E) > -n$.~\footnote{The fully quantum min-entropy can be negative up to $\hmin(A^n|E)=-n$ if $\rho_{A^nE}$ is the maximally entangled state.} 
\begin{itemize}
\item {\bf High-order entropic uncertainty relation for BB84 bases} If $X^n$ is obtained by measuring the system $A^n$ in a random BB84 bases $\Theta^n$, we have $\hmin(X^n|E \Theta^n) \geqslant n \cdot \frac{1}{2} \gamma\left(\frac{\hmin(A^n|E)}{n}\right)$, where the function $\gamma$ is plotted in Figure \ref{fig:uncertainty-plots}. See Theorem \ref{thm:ur-h2-bb84} and Corollary \ref{cor:ur-hmin-bb84} for precise statements.
\end{itemize}
We also prove uncertainty relations for qudit-wise measurements in mutually unbiased bases in Theorem \ref{thm:ur-h2-mub}.
Again, these results follow from our very general theorem sketched above, this time for a map $\mathcal{M}$ that represents randomly chosen measurements.

\bigskip
\noindent
{\bf Applications to the noisy-storage model } Our new uncertainty relations have several interesting applications to cryptography. The goal of two-party cryptography is to enable Alice and Bob to solve tasks in cooperation even if they do not trust each other. A classic example of such tasks are bit commitment and oblivious transfer.
Unfortunately, it has been shown that even using quantum communication, none of these tasks can be implemented 
securely without making assumptions~\cite{mayers:bitcom, lo:promise, lo:insecurity, lo&chau:bitcom,kretch:bc,bcs12}. 
What makes such tasks more difficult than quantum key distribution is that Alice and Bob cannot collaborate 
to check on any eavesdropper.
Instead, each party has to fend for itself. 

Nevertheless, because two-party computation is such a central part of modern cryptography, one is willing to make \emph{assumptions} on how powerful an attacker
can be in order to implement them securely. 
Classically, such assumptions generally take the form of computational assumptions, where we assume that a particular mathematical problem cannot be solved in polynomial time.  Here, we consider \emph{physical} assumptions that can enable us to solve such tasks. 
In particular, can the sole assumption of a limited storage device lead to security~\cite{Maurer92b}? This is indeed the case and 
it was shown that security can be obtained if the attacker's \emph{classical} storage is limited~\cite{Maurer92b,cachin:bounded}. 
Yet, apart from the fact that
classical storage is cheap and plentiful, assuming a limited classical storage has one rather crucial caveat: If the honest players need
to store $n$ classical bits to execute the protocol in the first place, \emph{any} classical protocol can be broken if the attacker can store
more than roughly $n^2$ bits~\cite{maurer:imposs}. 
Motivated by this unsatisfactory gap, it was thus suggested to assume that the attacker's \emph{quantum} storage 
was bounded~\cite{BB84,DFRSS07,DFSS05,chris:id1,chris:id2}, or, more generally, noisy~\cite{WST08, STW08,KWW09}. 
The central assumption of the noisy-storage model is that during waiting times $\Delta t$ introduced in the protocol, the attacker can keep quantum information
only in his noisy quantum storage device; otherwise he is all-powerful (see Section~\ref{sec:nsm}). 

The assumption of bounded or noisy quantum storage offers significant advantages in that the proposed 
protocols do not require any quantum storage at all to be implemented by the honest parties. They are typically based on BB84~~\cite{KWW09} or six-state~\cite{BFW12} encodings, and indeed the first implementation of a bit commitment protocol has recently been performed experimentally~\cite{noisy:experiment}. So far it was known that there exist protocols that send $n$ qubits encoded in either the BB84 or six-state encoding, and that are secure as long as the adversary can only store strictly less than $n/2$ or $2n/3$ noise-free qubits respectively. 

Using our new techniques, we are able to show security of the primitive called \emph{weak string erasure}~\cite{KWW09} (see Section~\ref{sec:nsm}), 
which in turn can be supplemented with additional classical or quantum communication~\cite{yao} to obtain primitives such as bit commitment.

\begin{itemize}
\item {\bf Application 1: Bounded storage} There exists a weak string erasure protocol transmitting $n$ qubits that is secure as long as the adversary can store at most strictly less than $n - O(\log^2 n)$ qubits. The protocol does not require any quantum memory to be executed, and merely requires simple quantum operations and measurements. See Theorem \ref{thm:bqsm} for a precise statement.
\end{itemize}

It should be noted that no such protocol can be secure as soon as the adversary can store $n$ 
qubits, so our result is essentially optimal. Our result highlights the sharp contrast between 
the classical and the quantum bounded storage model and answers the main open question in the BQSM. 
The noisy-storage model offers an advantage over the case of bounded-storage not only for implementations using high-dimensional encodings such as the infinite-dimensional states sent in continuous variable experiments, 
but allows security even for arbitrarily large storage devices as long as the noise is large enough.~\footnote{
Note that if we have a general noisy storage device $\cF$ that can be simulated by an $m$-qubit noiseless channel, then security against a device that can reliably store at most $m$ qubits readily implies security against the device $\cF$. In fact, security in the noisy-storage model does follow rather directly from security in the bounded-storage model provided the so-called entanglement cost of the storage device is small enough~\cite[Lemma 18]{BBCW11}. The entanglement cost measures the number of noiseless channels (bounded storage) needed to simulate a certain number of noisy channels, in the presence of classical communication. However, the entanglement cost of a channel is in general larger than its quantum capacity. Thus, proving security in the noisy-storage model up to the quantum capacity does not follow directly from proving security in the bounded-storage model. 
}
Essentially, the noisy-storage model captures our intuition that security should be linked to how much information the adversary can store in his quantum memory. The first proofs linked security to the classical capacity~\cite{KWW09}, the entanglement cost~\cite{BBCW11} and finally the quantum capacity~\cite{BFW12}. The latter result used a protocol based on six-state encodings. 

\begin{itemize}
\item {\bf Application 2: Noisy storage} We 
significantly push the boundaries regarding when security is possible in the noisy-storage model (see Section~\ref{sec:nsm}). 
Furthermore, we link security of a BB84-based protocol to the quantum capacity of the adversary's storage device for the first time. See Theorem \ref{thm:nsm} for a precise statement.
\end{itemize}

\section{Preliminaries}\label{sec:prelim}

\subsection{Basic concepts and notation}

In quantum mechanics, a system such as Alice's or Bob's labs are described mathematically by \emph{Hilbert spaces}, denoted by $A, B, C,\ldots$. Here, we follow the usual convention in quantum cryptography and assume that all Hilbert spaces are finite-dimensional. We write $|A|$ for the dimension of $A$. A system of $n$ qudits is also denoted as $A^n = A_1,\ldots,A_n$, 
where we also use $|A|$ to denote the dimension of one single qudit in $A^n$.
The set of linear operators on $A$ is denoted by $\cL(\cA)$, and we write $\Herm(A)$ and $\Pos(A)$ for the set of hermitian and positive semidefinite operators on $A$ respectively. We denote the adjoint of an operator $M$ by $M^{\dagger}$. A \emph{quantum state} $\rho_{A}$ is an operator $\rho_{A}\in\states(A)$, where $\states(A)=\{\sigma_{A}\in\Pos(A) \mid \tr(\sigma_{A})=1\}$. We will often make use of \emph{operator inequalities}: whenever $X, Y \in \Herm(A)$, we write $X \leqslant Y$ to mean that $Y - X \in \Pos(A)$. 
A quantum operation is given by a completely positive map $\cM : \cL(A) \to \cL(C)$. A map $\cM$ is said to be completely positive if for any system $B$ and $X \in \Pos(A \otimes B)$ we have $(\cM \otimes \id)(X) \geqslant 0$ (see~\cite{hayashi:book} for properties of quantum channels).

Throughout, we use the shorthand $[d] = \{0, 1, \dots, d-1\}$. We will follow the convention to use $H$ to denote the unitary that takes the computational $\{\ket{0},\ket{1}\}$ to the Hadamard basis: $H\ket{0} = \frac{1}{\sqrt{2}}(\ket{0}+\ket{1}), H\ket{1} = \frac{1}{\sqrt{2}} (\ket{0}-\ket{1})$.
When considering $n$ qubits, we also use $H^{\theta^n} =  H^{\theta_1} \otimes \cdots \otimes H^{\theta_n}$ for the unitary defining the basis $\theta^n \in \01^n$.

\subsection{Entropies}


Next to its operational interpretation given in~\eqref{eq:minDef}, the \emph{conditional min-entropy} of a positive operator $\rho_{AB}\in\cS(AB)$ can also be expressed as
\begin{align}
\label{eqn:def-hmin}
\entHmin(A|B)_{\rho}=\max_{\sigma_B \in \states(B)}\entHmin(A|B)_{\rho|\sigma} \ \text{ with } \
\entHmin(A|B)_{\rho|\sigma}=\max\left\{\lambda \in \mbR: 2^{-\lambda}\cdot\id_A \otimes \sigma_B \geqslant \rho_{AB}\right\}\ ,
\end{align}
where the symbol $\id_A$ refers to the identity on $A$. We use the subscript $\rho$ to emphasize the state $\rho_{AB}$ of which we evaluate the min-entropy.
The smoothed version is defined by
$\entHmin^{\varepsilon}(A|B)_{\rho}=\max_{\tilde{\rho}_{AB} \in \cB^{\varepsilon}(\rho_{AB})} \entHmin(A|B)_{\tilde{\rho}}\ ,$
where $\cB^{\varepsilon}(\rho)$ is the set of states at a distance at most $\varepsilon$ from $\rho$. We use the purified distance as the distance measure~\cite{tcr09}. 
We refer to~\cite{Tom12} for a review of the properties of the min-entropy.

It is simpler to state our results in terms of the related collision entropy defined for any $\rho_{AB} \in \Pos(A\otimes B)$ (possibly unnormalized) by
\begin{align}
\label{eqn:def-h2}
\entHtwo(A|B)_{\rho} = -\log \left( \tr\left[ \left(\rho_B^{-1/4} \rho_{AB} \rho_B^{-1/4} \right)^2 \right] \right).
\end{align}
We also use the collision entropy conditioned on a general operator $\sigma_B \in \cS(B)$,
\begin{align}
\label{eqn:def-h2-rho-sigma}
\entHtwo(A|B)_{\rho|\sigma} = -\log \left( \tr\left[ \left(\sigma_{B}^{-1/4} \rho_{AB} \sigma_{B}^{-1/4} \right)^2 \right] \right).
\end{align}
Note that we \emph{do not} normalize this quantity by dividing by the trace of the operator $\rho_{AB}$. In fact the quantity $2^{-\entHtwo(A|B)_{\rho|\sigma}}$ gets multiplied by $\mu^2$ when $\rho$ is multiplied by $\mu \geq 0$.

The two entropy measures $\hmin$ and $\htwo$ are closely related as shown in Lemmas \ref{lem:hmin-h2-fullyquantum}, \ref{lem:hmin-h2} and \ref{lem:smoothhmin-h2}. 
The collision entropy also has an appealing operational interpretation~\cite{bcw13} for normalized $\rho$ as
\begin{align}\label{eq:H2again}
	\entHtwo(A|B)_{\rho}= - \log \left( |A| F(\Phi_{AA'}^N,\id_A \otimes \Lambda_{E\rightarrow A'}^{\rm pg}(\rho_{AE}))^2 \right), 
\end{align}
where $F(\sigma_1,\sigma_2) = \tr\left(\sqrt{\sqrt{\sigma_1}\sigma_2\sqrt{\sigma_1}}\right)$ is the fidelity, and 
$\Lambda_{E \rightarrow A'}^{\rm pg}$ is the \emph{pretty good recovery} map~\cite{BarnKnil02} (see Section \ref{sec:oper-inter-h2} of the appendix).
Finally, we use the binary entropy function $\binent(x) = -x \log x - (1-x) \log(1-x)$.

For the curious reader, we note that the quantity~\eqref{eqn:def-h2-rho-sigma} has indeed also appeared in a slightly different guise in the context of 
norms employed for the study of mixing properties of quantum channels~\cite{temme:mixing}. Specifically, we have $\|\rho_{AB}\|^2_{2, \omega} = 2^{-\entHtwo(A|B)_{\rho|\sigma}}$
with $\omega = \id_{A} \otimes \sigma_B^{-1}$ for the norm $\|\cdot\|_{2,\omega}$ defined in~\cite{temme:mixing}.

\subsection{A convenient basis}

Throughout, we make use of a very convenient basis of maximally entangled states for the space $A \otimes \bar{A}$ where $\bar{A} \simeq A$.
The (unnormalized) maximally entangled state 
\begin{align}
	\ket{\Phi}_{A\bar{A}} = \sum_{a} \ket{a}_A \otimes \ket{a}_{\bar{A}}
\end{align}
will play an important role in our analysis. Here, the vectors $\ket{a}$ label the standard basis of $A$. We use $\ket{\Phi^N}_{A\bar{A}}$ to denote the normalized version $\ket{\Phi^N}_{A\bar{A}} = \frac{1}{\sqrt{|A|}} \ket{\Phi}_{A\bar{A}}$. We repeatedly use the following properties. For any operators $X$ and $Y$ acting on $A$, we have 
\begin{equation}
\label{eq:swap-trick-phi}
\tr[XY] = \tr[X \otimes \top(Y) \Phi_{A \bar{A}}]
\end{equation}
where $\top$ denotes the transpose map in the standard basis and $\Phi_{A\bar{A}} = \ket{\Phi} \bra{\Phi}_{A\bar{A}}$. Moreover, if $X : A \to C$ is a linear operator from $A$ to $C$ we have 
\begin{equation}
\label{eq:transpose-trick}
(X \otimes \id_{\bar{A}}) \ket{\Phi}_{A\bar{A}} = \sum_a X \ket{a}_A \otimes \ket{a}_{\bar{A}} = \sum_c  \ket{c}_C \otimes \sum_{a} \bra{c} X \ket{a} \ket{a}_{\bar{A}} = (\id_{C} \otimes \top(X)) \ket{\Phi}_{C\bar{C}}. 
\end{equation}
Again, the transpose $\top$ is taken with respect to the standard bases $\{\ket{a}\}$ and $\{\ket{c}\}$ of $A$ and $C$.
Using~\eqref{eq:swap-trick-phi}, one can naturally construct an orthogonal basis of $A \bar{A}$ by applying unitary transformations to $\ket{\Phi}$
that are orthogonal with respect to the Hilbert-Schmidt inner product. Define for $s \in [|A|^2]$,  $\ket{\Phi_s} = (W_s \otimes \id) \ket{\Phi}_{A\bar{A}}$
where $W_s$ denote the generalized Pauli operators (see e.g.,~\cite{BBRV02}), sometimes also called Weyl operators. In fact, all our results would hold for any unitary operators $W_s$ that are orthogonal with respect to the Hilbert-Schmidt inner product. We again use $\Phi_s = \proj{\Phi_s}$. 

In particular for $|A| = 2$, $W_0, W_1, W_2, W_3$ are the Pauli operators, and we obtain the well-known Bell basis
\begin{equation}\label{eqn:bellbasis01}
\Phi_0 = \left( \begin{array}{cccc}
1 & 0 & 0 & 1 \\
0 & 0 & 0 & 0 \\
0 & 0 & 0 & 0 \\
1 & 0 & 0 & 1
\end{array} \right), \quad 
\Phi_1 = \left( \begin{array}{cccc}
0 & 0 & 0 & 0 \\
0 & 1 & 1 & 0 \\
0 & 1 & 1 & 0 \\
0 & 0 & 0 & 0
\end{array} \right),
\end{equation}
\begin{equation}\label{eqn:bellbasis23}
\Phi_2 = \left( \begin{array}{cccc}
0 & 0 & 0 & 0 \\
0 & 1 & -1 & 0 \\
0 & -1 & 1 & 0 \\
0 & 0 & 0 & 0
\end{array} \right), \quad 
\Phi_3 = \left( \begin{array}{cccc}
1 & 0 & 0 & -1 \\
0 & 0 & 0 & 0 \\
0 & 0 & 0 & 0 \\
-1 & 0 & 0 & 1
\end{array} \right).
\end{equation}
Note that in this numbering scheme, $\Phi_2$ is the singlet.

For $n > 0$, we will denote by $A^n$ the system $\bigotimes_{i=1}^n A_i$, where each $A_i$ is a copy of $A$. Furthermore, if $S \subseteq \{1,\dots,n\}$, we write $A_S$ to denote $\bigotimes_{i \in S} A_i$.  In other words, $A^n$ consists of $n$ copies of the system $A$, and $A_S$ contains the copies that correspond to indices in $S$. In such a setting the dimension of the system $A$ is denoted $d$. We can naturally define for $s \in [d^2]^n$, $\ket{\Phi_s} = \otimes_{i=1}^n \ket{\Phi_{s_i}}_{A_i \bar{A}_i}$. We then have that $\{ \frac{1}{\sqrt{d^n}} \ket{\Phi_s} \}_s$ is an orthonormal basis of $A^n \bar{A}^n$. 
For such strings $s$, we denote $\supp(s) = \{i \in \{1, \dots, n\} : s_i \neq 0\}$ and $|s| = |\supp(s)|$.


%

\section{Evolution of $\entHtwo$ under general maps}
\label{sec:evolution-h2-general}\label{sec:evol}
In this section, we derive constraints on the evolution of the conditional collision entropy $\entHtwo$ when the system $A^n$ undergoes some transformation described by a completely positive map $\cM$. Our results on entanglement sampling and uncertainty relations are obtained by evaluating this bound for particular channels $\cM$. 
\begin{thm}
\label{thm:general-h2}
Let $\mathcal{M}_{A^n \rightarrow C}$ be a completely positive map such that $( (\mathcal{M}^{\dagger} \circ \mathcal{M})_{A^n} \otimes \id_{\bar{A}^n})(\Phi_{A^n\bar{A}^n}) = \sum_{s \in [d^2]^n} \lambda_s \Phi_s$ and let $\rho_{A^nE} \in \cS(A^nE)$ be a state, where $A^n=A_1,\ldots,A_n$ is comprised of $n$ qudits of dimension $d$. Then for any partition $[d^2]^n = \mfS_+ \cup \mfS_-$ into subsets $\mfS_+$ and $\mfS_-$, we have
	\begin{align}
	2^{-\entHtwo(C|E)_{\cM(\rho)|\rho}} &\eqdef \tr\left[\left(\rho^{-1/4}_E (\cM \otimes \id)(\rho_{A^nE}) \rho_E^{-1/4} \right)^2\right] \notag \\
	&\leqslant \sum_{s \in \mfS_+} \lambda_s 2^{-\entHtwo(A^n|E)_{\rho}} + (\max_{s \in \mfS_-} \lambda_s) d^n. \label{eqn:general-h2}
	\end{align}
\end{thm}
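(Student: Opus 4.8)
The plan is to turn the left-hand side of \eqref{eqn:general-h2} into a single Hilbert--Schmidt inner product on $A^n\bar A^n$ in which the operator $J_{A^n\bar A^n} \eqdef ((\mathcal M^\dagger\circ\mathcal M)\otimes\id_{\bar A^n})(\Phi_{A^n\bar A^n}) = \sum_s\lambda_s\Phi_s$ appears, and then read off the bound by expanding in the basis $\{\Phi_s\}$. First I would absorb the correction factor into the state: since $\rho_E^{-1/4}$ acts on $E$ while $\cM$ acts on $A^n$, setting $\tilde\rho_{A^nE}=(\id_{A^n}\otimes\rho_E^{-1/4})\rho_{A^nE}(\id_{A^n}\otimes\rho_E^{-1/4})$ gives $2^{-\htwo(C|E)_{\cM(\rho)|\rho}}=\tr[((\cM\otimes\id_E)(\tilde\rho))^2]$. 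Applying the adjoint identity $\tr[(\cM\otimes\id_E)(X)\,Y]=\tr[X\,(\cM^\dagger\otimes\id_E)(Y)]$ with $Y=(\cM\otimes\id_E)(\tilde\rho)$, this equals $\tr[\tilde\rho\cdot((\cM^\dagger\circ\cM)\otimes\id_E)(\tilde\rho)]$, so now only the CP map $\cN\eqdef\cM^\dagger\circ\cM$ (whose Choi operator is exactly the hypothesized $J=\sum_s\lambda_s\Phi_s$) matters.

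Next I would express the action of $\cN$ on one tensor factor through its Choi operator. Using $\cN(X)=\tr_{\bar A^n}[J_{A^n\bar A^n}(\id_{A^n}\otimes \top(X)_{\bar A^n})]$ (a consequence of \eqref{eq:transpose-trick}, checked on rank-one inputs and extended by linearity), one rewrites $\tr[\tilde\rho\cdot(\cN\otimes\id_E)(\tilde\rho)]=\tr[J_{A^n\bar A^n}\,\sigma_{A^n\bar A^n}]$, where $\sigma_{A^n\bar A^n}$ is the ``doubled'' operator obtained by contracting $\tilde\rho_{A^nE}$ against its $A^n$-transposed copy $\tilde\rho^{\top_{A^n}}_{\bar A^nE}$ along the shared $E$: if $\tilde\rho_{A^nE}=\sum_\mu P_\mu\otimes Q_\mu$, then $\sigma_{A^n\bar A^n}=\sum_{\mu\nu}\tr[Q_\mu Q_\nu]\,P_\mu\otimes\top(P_\nu)$. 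Getting this reformulation exactly right --- keeping track of which subsystem $\cN$ acts on and where the transposes land --- is the one genuinely delicate bookkeeping step; everything afterwards is a two-line estimate.

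The core of the argument is two elementary facts about $\sigma$. (i) For every $s$, writing $\Phi_s=(W_s\otimes\id_{\bar A^n})\Phi(W_s^\dagger\otimes\id_{\bar A^n})$ with $W_s$ unitary, cyclicity of the trace and \eqref{eq:swap-trick-phi} give $\tr[\Phi_s\sigma]=\tr[(W_s^\dagger\otimes\id_E)\tilde\rho(W_s\otimes\id_E)\cdot\tilde\rho]$; since both factors are positive semidefinite this is $\geqslant 0$, and by Cauchy--Schwarz for the Hilbert--Schmidt inner product together with unitary invariance of $\snorm{\cdot}_2$ it is $\leqslant\snorm{\tilde\rho}_2^2=\tr[\tilde\rho^2]=2^{-\htwo(A^n|E)_\rho}$. (ii) Since $\{d^{-n/2}\ket{\Phi_s}\}_s$ is an orthonormal basis of $A^n\bar A^n$, $\sum_s\Phi_s=d^n\id_{A^n\bar A^n}$, so $\sum_s\tr[\Phi_s\sigma]=d^n\tr\sigma=d^n\tr[\tilde\rho_E^2]=d^n\tr[\rho_E]=d^n$ (using $\tilde\rho_E=\rho_E^{-1/4}\rho_E\rho_E^{-1/4}=\rho_E^{1/2}$, with the generalized inverse restricted to $\supp(\rho_E)$). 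One also needs $\lambda_s\geqslant 0$, which holds because $\cN$ is completely positive, hence $J\geqslant 0$, hence its eigenvalues $d^n\lambda_s$ are nonnegative.

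Finally I would split $\tr[J\sigma]=\sum_s\lambda_s\tr[\Phi_s\sigma]$ according to the partition $[d^2]^n=\mfS_+\cup\mfS_-$: on $\mfS_+$ bound $\tr[\Phi_s\sigma]\leqslant 2^{-\htwo(A^n|E)_\rho}$ and use $\lambda_s\geqslant0$, yielding $\sum_{s\in\mfS_+}\lambda_s\,2^{-\htwo(A^n|E)_\rho}$; on $\mfS_-$ bound $\lambda_s\leqslant\max_{s\in\mfS_-}\lambda_s$ and use $\tr[\Phi_s\sigma]\geqslant0$ together with (ii), yielding $(\max_{s\in\mfS_-}\lambda_s)\sum_{s\in[d^2]^n}\tr[\Phi_s\sigma]=(\max_{s\in\mfS_-}\lambda_s)\,d^n$. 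Adding the two contributions gives exactly \eqref{eqn:general-h2}. I expect the main obstacle to be purely the Choi/doubling manipulation of the second paragraph; the entropic content is entirely carried by fact (i), whose positivity and $\snorm{\cdot}_2$-bound are the reason the theorem is true.
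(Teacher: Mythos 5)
Your proposal is correct and follows essentially the same route as the paper's proof: your operator $\sigma_{A^n\bar A^n}$ is exactly the paper's $\rhat_{A^n\bar A^n}=\tr_{E\bar E}[(\tilde\rho_{A^nE}\otimes\top(\tilde\rho_{\bar A^n\bar E}))\Phi_{E\bar E}]$, and the two constraints you prove (Cauchy--Schwarz giving $0\leqslant\tr[\Phi_s\sigma]\leqslant 2^{-\htwo(A^n|E)_\rho}$, and completeness of $\{d^{-n/2}\ket{\Phi_s}\}$ giving $\sum_s\tr[\Phi_s\sigma]=d^n$) together with the partition split are precisely the paper's argument. The only difference is cosmetic: you reach $\tr[J\sigma]$ via the adjoint identity and the Choi reconstruction formula, whereas the paper doubles the system with the swap trick and a Kraus-operator/transpose manipulation, which is an equivalent bookkeeping of the same step (and your explicit remark that $\lambda_s\geqslant 0$ follows from complete positivity of $\cM^\dagger\circ\cM$ is a point the paper uses only implicitly).
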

Note that in the case where $\cM$ is $\mu$-trace preserving, i.e., $\cM^{\dagger}(\id_{A}) = \mu \id_A$, we can directly relate $\entHtwo(C|E)_{\cM(\rho)|\rho}$ to $\entHtwo(C|E)_{\cM(\rho)}$. In fact, we have $\tr_A[\cM(\rho_{AE})] = \mu \rho_E$ and
	\[
	2^{-\entHtwo(C|E)_{\cM(\rho)}} = \tr\left[\left((\mu \rho)^{-1/4}_E (\cM \otimes \id)(\rho_{A^nE}) (\mu \rho_E)^{-1/4} \right)^2\right] = \frac{1}{\mu}2^{-\entHtwo(A|E)_{\cM(\rho)|\rho}}.
	\]
A statement for the smooth min-entropy follows directly by applying Lemma~\ref{lem:smoothhmin-h2}. In fact, we can first normalize the state $\cM(\rho)^N = \cM(\rho)/\tr[\cM(\rho)]$, then we obtain for $\rho \in \cS(AE)$,
\begin{align*}
2^{-\entHmin^{\e}(C|E)_{\cM(\rho)^N}} 
&\leq \frac{2}{\e^2} \cdot 2^{-\entHtwo(C|E)_{\cM(\rho)^N|\rho}} \\
&= \frac{2}{\e^2 \tr[\cM(\rho)]^2} \cdot 2^{-\entHtwo(C|E)_{\cM(\rho)|\rho}} \\
&\leq \frac{2}{\e^2 \tr[\cM(\rho)]^2}  \left( \sum_{s \in \mfS_+} \lambda_s 2^{-\entHtwo(A^n|E)_{\rho}} + (\max_{s \in \mfS_-} \lambda_s) d^n \right) \\
&\leq \frac{2}{\e^2 \tr[\cM(\rho)]^2} \left( \sum_{s \in \mfS_+} \lambda_s 2^{-\entHmin(A^n|E)_{\rho}} + (\max_{s \in \mfS_-} \lambda_s) d^n \right).
\end{align*}

The maps $\cM$ of interest typically have some symmetry. For example, if the map $\cM$ is invariant under permutations of the $n$ systems $A_1, \dots, A_n$, then the coefficients $\lambda_s$ only depend on the type of $s$, i.e., the number of times each symbol in $[d^2]$ occurs in $s$. In fact, all of the examples we consider here are such that $\lambda_s$ only depends on the weight $|s| = |\{i \in [n]: s_i \neq 0\}|$.
\begin{proof}
	Let $\tilde{\rho}_{A^nE} = \rho_E^{-1/4}\rho_{A^nE} \rho_E^{-1/4}$, and let $\rhat_{A^n \bar{A}^n} = \tr_{E \bar{E}}[(\tilde{\rho}_{A^nE} \otimes \top(\tilde{\rho}_{\bar{A}^n\bar{E}})) \Phi_{E\bar{E}}]$. Note that $\rhat_{A^n \bar{A}^n} \geq 0$ and $\tr[\rhat_{A^n \bar{A}^n}] = \tr[\tilde{\rho}_E^2] = 1$.
	Furthermore, define $\bar{\mathcal{M}}$ by $\bar{\cM}(X) = \top(\cM(\top(X)))$ for all $X$, so that $\bar{\mathcal{M}}(\top(X)) = \top(\mathcal{M}(X))$. Our first goal is to rewrite $\entHtwo(C|E)_{\cM(\rho)|\rho}$ in terms of the basis $\{\Phi_s\}_s$.
	We obtain from~\eqref{eq:swap-trick-phi}
	\begin{align*}
		2^{-\htwo(C|E)_{\cM(\rho)|\rho}} &= \tr[\mathcal{M}(\tilde{\rho}_{A^nE})^2]\\
		&= \tr[(\mathcal{M}(\tilde{\rho}_{A^nE}) \otimes \top(\mathcal{M}(\tilde{\rho}_{\bar{A}^n\bar{E}}))) \Phi_{C \bar{C}} \otimes \Phi_{E\bar{E}}]\\
		&= \tr[(\mathcal{M}(\tilde{\rho}_{A^nE}) \otimes \bar{\mathcal{M}}(\top(\tilde{\rho}_{\bar{A}^n\bar{E}}))) \Phi_{C \bar{C}} \otimes \Phi_{E\bar{E}}]\\
		&= \tr[(\tilde{\rho}_{A^nE} \otimes \top(\tilde{\rho}_{\bar{A}^n\bar{E}})) ( ({\mathcal{M}^{\dagger}}) \otimes ({\bar{\mathcal{M}}^{\dagger}}))(\Phi_{C\bar{C}}) \otimes \Phi_{E\bar{E}}].
\end{align*}
Now write the completely positive map $\cM$ in the Kraus representation. There exists linear operators $K_i : A \to C$ such that $\cM(X) = \sum_i K_i X K_i^{\dagger}$. 
Using \eqref{eq:transpose-trick}, we have
\begin{align*}
(\cM_{A^n\rightarrow C} \otimes \id_{\bar{A}^n})( \Phi_{A^n \bar{A}^n})
&= \sum_i (K_i \otimes \id_{\bar{A}^n}) \Phi_{A^n \bar{A}^n} (K_i^{\dagger} \otimes \id_{\bar{A}^n}) \\
&= \sum_i (\id_C \otimes \top(K_i)) \Phi_{C \bar{C}} (\id_C \otimes \top(K_i)^{\dagger}) \\
&= (\id_C \otimes {\bar{\mathcal{M}}^{\dagger}})(\Phi_{C\bar{C}}).
\end{align*} 
Thus, we obtain 
using the definition of $\rhat_{A^n \bar{A}^n}$ and the condition on $\mathcal{M}$
\begin{align}
	2^{-\htwo(C|E)_{\cM(\rho)|\rho}} &= \tr[(\tilde{\rho}_{A^nE} \otimes \top(\tilde{\rho}_{\bar{A}^n\bar{E}})) ( ({\mathcal{M}^{\dagger}} \circ {\mathcal{M}}) \otimes \id_{\bar{A}^n})(\Phi_{A^n\bar{A^n}}) \otimes \Phi_{E\bar{E}}] \notag \\
	&= \tr[\rhat_{A^n \bar{A}^n} ( ({\mathcal{M}^{\dagger}} \circ {\mathcal{M}}) \otimes \id_{\bar{A}^n})(\Phi_{A^n\bar{A^n}})] \notag \\
	&= \sum_{s \in [d^2]^n} \lambda_s \tr[\rhat_{A^n \bar{A}^n} \Phi_s]. \label{eqn:h2-phis}
\end{align}
We now prove the two key constraints on the terms $\tr[\rhat_{A^n \bar{A}^n} \Phi_s]$ we will be using. First, we have a global constraint. Note that the set of vectors $\{\frac{1}{\sqrt{d^n}} \ket{\Phi_s}\}_{s \in [d^2]^n}$ forms an \emph{orthonormal} basis and thus $\id_{A^n \bar{A}^n} = \frac{1}{d^n}\sum_{s \in [d^2]^n} \Phi_s$. This yields
\begin{align}
	\sum_{s \in [d^2]^n} \tr[\rhat_{A^n \bar{A}^n} \Phi_s] &= d^n\tr[\rhat_{A^n\bar{A}^n}] = d^n.
\label{eqn:total-constraint}
\end{align}

The second observation concerns the individual terms $\tr[\rhat_{A^n\bar{A}^n} \Phi_s]$. For any $s$,
\begin{align*}
	\tr[\rhat_{A^n\bar{A}^n} \Phi_s] &= \tr[\rhat_{A^n\bar{A}^n} (W_s \otimes \id_{\bar{A}^n}) \Phi_{A^n\bar{A}^n} (W_s^{\dagger} \otimes \id_{\bar{A}^n})] \\
	&= \tr[ \left(W_s^{\dagger} \tilde{\rho}_{A^nE} W_s \otimes \top(\tilde{\rho}_{\bar{A}^n\bar{E}}) \right) \Phi_{A^n \bar{A}^n} \otimes \Phi_{E\bar{E}}] \\
&= \tr[W_s^{\dagger} \tilde{\rho}_{A^nE} W_s \tilde{\rho}_{A^nE})] \\
&\leqslant \tr[\tilde{\rho}^2_{A^nE}] = 2^{-\entHtwo(A^n|E)_{\rho}},
\end{align*}
using the Cauchy-Schwarz inequality in the form $\tr[XY] \leq \sqrt{\tr[X^2] \tr[Y^2]}$ with $X = W_s^{\dagger} \tilde{\rho}_{A^nE} W_s$ and $Y=\tilde{\rho}_{A^nE}$. Also, observe that the positivity of $\rhat_{A^n \bar{A}^n}$ and $\Phi_s$ implies that $\tr[\rhat_{A^n \bar{A}^n} \Phi_s] \geqslant 0$. Thus, we have
\begin{equation}
\label{eqn:individual-constraint}
0 \leqslant \tr[\rhat_{A^n\bar{A}^n} \Phi_s] \leqslant 2^{-\entHtwo(A^n|E)_{\rho}}.
\end{equation}
Applying inequalities \eqref{eqn:total-constraint} and \eqref{eqn:individual-constraint} to \eqref{eqn:h2-phis}, we obtain the desired result. 
\end{proof}

We remark that equations \eqref{eqn:total-constraint} and \eqref{eqn:individual-constraint} are the only properties of the operator $\rho_{A^nE}$ that we are using. This means that the result would also hold for possible operators $\rho_{A^nE}$ that do not correspond to states but still satisfy these conditions. 

\section{Applications}

We now derive several interesting consequences of Theorem~\ref{thm:general-h2}. All of these follow by making an appropriate choice for the map $\mathcal{M}$.

\subsection{Quantum-quantum min-entropy sampling}
\label{sec:ent-sampling}

\subsubsection{Statement}
We now state our results on entanglement sampling. 
The theorem below deals with the following scenario: we have $n$ qudits and we choose a subset of them of size $k$ uniformly at random. We have a lower bound on the collision entropy of the whole state conditioned on some quantum side-information $E$; the theorem then gives a lower bound on the conditional collision entropy of the sample. 
The rate function obtained is plotted in Figure \ref{fig:sampling-plots} together with an upper bound on the optimal rate function given by a particular example presented in Theorem \ref{thm:upper-bound-rate-qq}. The same figure also shows plots of classical-quantum sampling results that are discussed in Section \ref{sec:classical-quantum}.


\begin{thm}
\label{thm:h2-sampling}
	Let $\rho_{A^nE} \in \mathcal{S}(A^n E)$ and $1 \leqslant k\leqslant n$, let $d = |A|$ be the dimension of a single system, and 
	let $h_2 := \frac{\htwo(A^n|E)_{\rho}}{n}$. Then, we have for $n > d^2$
	\begin{equation}
	\label{eqn:h2-sampling}
	 2^{-\htwo(A_S|ES)_{\rho}} = \mbE_{S \subseteq [n], |S|=k} 2^{-\htwo(A_S|E)_{\rho}} \leqslant 2^{-kR_d(h_2) + \log (n^2+1)}, 
	 \end{equation}
	where $R_d(\cdot)$ is the rate function defined as 
	$R_d(x) := -\log(d - df_d^{-1}(x)),$
	and 
	$f_d(x) := h(x) + {x\log(d^2-1)} - \log d.$
	
	In terms of smooth min-entropy, we have for any $\e \in (0,1]$
	\begin{equation}
	\label{eqn:hmin-sampling}
	\entHmin^{\e}(A_S|ES)_{\rho} \geqslant k R_d(h_{\min}) - \log(n^2+1) - \log\frac{2}{\e^2},
	\end{equation}
	where $h_{\min} := \frac{\hmin(A^n|E)_{\rho}}{n}$.
\end{thm}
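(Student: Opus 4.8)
The plan is to apply Theorem~\ref{thm:general-h2} to the completely positive map $\cM$ that implements the random sampling of a subset of size $k$. Concretely, I would take $\cM_{A^n \to A_S S}$ to be the map that picks $S \subseteq [n]$ with $|S| = k$ uniformly at random, traces out $A_{\bar{S}}$, and records $S$ in a classical register; equivalently $\cM(X) = \frac{1}{\binom{n}{k}}\sum_{|S|=k} \tr_{\bar{S}}(X) \otimes \proj{S}$. Then $\cM^\dagger \circ \cM$ acts as averaging over $S$ of the ``trace out $\bar S$ and tensor back a maximally mixed state on $\bar S$'' channel, up to the $\frac{1}{\binom{n}{k}}$ factors. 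The key computation is to diagonalize $((\cM^\dagger\circ\cM)\otimes \id_{\bar A^n})(\Phi_{A^n\bar A^n})$ in the Bell basis $\{\Phi_s\}$: since tracing out $A_i$ and replacing with $\id/d$ is exactly the projection of the $i$-th Bell register onto $\Phi_{s_i=0}$ scaled by $1/d$, one finds that $\Phi_s$ is an eigenvector with eigenvalue $\lambda_s$ depending only on $|s| = |\supp(s)|$: roughly $\lambda_s = \frac{1}{\binom{n}{k}} \cdot \frac{1}{d^{k-|s\cap S|}}$ averaged appropriately, which works out to $\lambda_s = d^{-k}\cdot \Pr_{|S|=k}[\supp(s)\subseteq S] \cdot (\text{combinatorial factor})$ — in any case $\lambda_s$ is a decreasing function of $|s|$ and vanishes once $|s| > k$ is incompatible, and $\binom{n}{k}\lambda_s$ is bounded.

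Next I would choose the partition: put $\mfS_+ = \{s : |s| \leq w\}$ and $\mfS_- = \{s : |s| > w\}$ for a threshold $w$ to be optimized. For the $\mfS_+$ term I need to bound $\sum_{s : |s| \leq w} \lambda_s$; grouping by weight $j$ and type, this is a sum over $j \leq w$ of $\binom{n}{j}(d^2-1)^j$ times the common value of $\lambda_s$ on weight-$j$ strings. The dominant term is $j = w$, and the count $\binom{n}{w}(d^2-1)^w$ combined with the $\lambda$ value produces precisely the expression $d^k \cdot 2^{k(R_d^{-1}\text{-type quantity})}$; this is where the function $f_d(x) = h(x) + x\log(d^2-1) - \log d$ and its inverse enter, via $\binom{n}{w}(d^2-1)^w \approx 2^{n h(w/n) + w\log(d^2-1)}$ and then passing to the per-symbol rate over the $k$ sampled positions. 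For the $\mfS_-$ term I need $(\max_{|s|>w}\lambda_s) d^n$; since $\lambda_s$ is decreasing in $|s|$, $\max_{|s|>w}\lambda_s = \lambda$-value at weight $w+1$, and I would check that with the right $w$ this term is subdominant. Setting the two contributions roughly equal and solving gives the threshold $w$ such that $w/n \approx f_d^{-1}(h_2)$, yielding the claimed bound $2^{-\htwo(A_S|ES)_\rho} \leq 2^{-kR_d(h_2) + \log(n^2+1)}$, with the polynomial prefactor $n^2+1$ absorbing the $\poly(n)$ losses from counting the number of weights and types ($\leq n+1$ weights, and a crude bound on the per-weight multiplicities).

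For the smooth min-entropy statement \eqref{eqn:hmin-sampling}, I would use the already-stated reduction in the remark following Theorem~\ref{thm:general-h2}: since the sampling map is trace preserving ($\cM^\dagger(\id) = \id$, so $\mu = 1$), we have $2^{-\htwo(A_S|ES)_{\cM(\rho)|\rho}} = 2^{-\htwo(A_S|ES)_{\cM(\rho)}}$, and then $2^{-\hmin^\e(A_S|ES)} \leq \frac{2}{\e^2} 2^{-\htwo(A_S|ES)}$ via Lemma~\ref{lem:smoothhmin-h2}, combined with $\htwo(A^n|E)_\rho \geq \hmin(A^n|E)_\rho$ (Lemma~\ref{lem:hmin-h2-fullyquantum} / \ref{lem:hmin-h2}) to replace $h_2$ by $h_{\min}$ inside $R_d$ (using monotonicity of $R_d$). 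Taking $-\log$ of both sides then gives \eqref{eqn:hmin-sampling} with the extra $-\log\frac{2}{\e^2}$.

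The main obstacle I expect is the eigenvalue computation and the ensuing asymptotic bookkeeping: verifying that $\cM^\dagger\circ\cM$ applied to $\Phi_{A^n\bar A^n}$ really is diagonal in $\{\Phi_s\}$ with $\lambda_s$ depending only on $|s|$ (this uses that tracing out a qudit and reinserting $\id/d$ acts diagonally on the Bell basis of that factor, projecting onto $s_i = 0$), and then choosing the weight threshold $w$ and controlling all the binomial/type-counting factors tightly enough that the slack is only $\log(n^2+1)$ rather than a worse polynomial. Identifying $R_d$ as the Legendre-type transform of $f_d$ and checking $f_d$ is invertible on the relevant range is the conceptual heart; the rest is careful but routine estimation.
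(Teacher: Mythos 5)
Your overall strategy is the paper's: apply Theorem~\ref{thm:general-h2} to the sampling map $\cM(X)=\frac{1}{\binom{n}{k}}\sum_{|S|=k}\tr_{S^c}[X]\otimes\proj{S}$, expand $((\cM^{\dagger}\circ\cM)\otimes\id_{\bar{A}^n})(\Phi_{A^n\bar{A}^n})$ in the basis $\{\Phi_s\}$, split by weight with a threshold determined by $f_d^{-1}(h_2)$, and pass to \eqref{eqn:hmin-sampling} via $\htwo\geqslant\hmin$ and Lemma~\ref{lem:smoothhmin-h2}. However, your central computation is stated backwards, and this is not cosmetic. Tracing out a qudit and reinserting $\id/d$ does \emph{not} project its Bell register onto $s_i=0$; it produces the flat mixture $\frac{1}{d}\id_{A_i\bar{A}_i}=\frac{1}{d^2}\sum_{s_i}\Phi_{s_i}$ over \emph{all} Bell states, while it is the \emph{kept} positions $i\in S$ that retain only the $s_i=0$ component. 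Consequently $\Phi_s$ appears exactly when $\supp(s)\subseteq S^c$, and the correct eigenvalues are $\lambda_s=\binom{n-|s|}{k}\big/\bigl(d^{n-k}\binom{n}{k}^2\bigr)$, proportional to the probability that the random sample \emph{avoids} $\supp(s)$; they vanish for $|s|>n-k$, not $|s|>k$. Your reading ($\supp(s)\subseteq S$, eigenvalues tied to $\Pr[\supp(s)\subseteq S]$, of order $(k/n)^{|s|}$) would derail the rest of the argument: the theorem's rate comes precisely from the factor $\binom{n-\ell_0-1}{k}/\binom{n}{k}\leqslant\bigl(\frac{n-\ell_0-1}{n}\bigr)^k$, i.e.\ the probability that the sample misses a set of $\ell_0+1$ ``corrupted'' positions, which together with the $d^k$ from $(\max_{\mfS_-}\lambda_s)d^n$ and the choice $\ell_0/n\approx f_d^{-1}(h_2)$ gives exactly $2^{-kR_d(h_2)}$ (this is also what the near-matching example of Theorem~\ref{thm:upper-bound-rate-qq} exploits). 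With the reversed support condition no term of this form arises, so the claimed $R_d$ cannot be recovered by ``careful but routine estimation''.

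Two further points of bookkeeping. First, the $\mfS_-$ term is not subdominant: with the paper's choice of $\ell_0$ (largest value with $f_d(\ell_0/n)\leqslant h_2$), \emph{both} terms are of order $\bigl(\frac{n-\ell_0-1}{n}\bigr)^k d^k$, and the role of Lemmas~\ref{lem:binomial-sum} and~\ref{lem:sum-binomial} is to show the $\mfS_+$ term exceeds that common factor by at most $n^2$, which is where $\log(n^2+1)$ comes from (one must also handle the regime where $\frac{1}{d^2}$ dominates $\frac{n-\ell_0-1}{n}$, using $R_d(h_2)\leqslant\log d$). Second, the left-hand side of Theorem~\ref{thm:general-h2} for this map is not directly $2^{-\htwo(A_S|ES)_{\rho}}$: one has $2^{-\htwo(A^kS|E)_{\cM(\rho)|\rho}}=\frac{1}{\binom{n}{k}}2^{-\htwo(A_S|ES)_{\rho}}$ by the chain rule for the classical register $S$ (Lemma~\ref{lem:condition-cl}); this stray $\binom{n}{k}$ cancels one of the two $\binom{n}{k}$'s in $\lambda_s$, and your shortcut ``$\mu=1$ hence the conditioned and unconditioned $\htwo$ coincide'' does not by itself account for conditioning on $S$.
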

See Figure \ref{fig:sampling-plots} for a plot of $R_2(h_2)$. Note that $f_d$ is an increasing function on $[0, \frac{d^2-1}{d^2}]$ with $f_d(0) = -\log d$ and $f_d\left(\frac{d^2-1}{d^2}\right) = \log d$. We can thus define its inverse function $f_d^{-1} : [-\log d, \log d] \to [0, \frac{d^2-1}{d^2}]$. 

\begin{figure}[h!]
	\centering
	\begin{tikzpicture}[scale=4]
		\draw[very thin,color=gray,step=0.25] (-1,-1) grid (1,1);
		\draw[->] (-1,0) -- (1,0) node[right] {$h_2$};
		\foreach \x in {-1, -0.75, -0.5, -0.25, 0, 0.25, 0.5, 0.75, 1}
			\node[below,font=\footnotesize] at (\x cm,0) {\x};
		\draw[->] (0,-1) -- (0,1) node[above] {Rate function};
		\foreach \y in {-1, -0.75, -0.5, -0.25, 0, 0.25, 0.5, 0.75, 1}
			\node[left,font=\footnotesize] at (0,\y cm) {\y};
		\draw[blue,thick] plot[domain=0.0001:0.75,variable=\t] ({-\t*log2(\t)-(1-\t)*log2(1-\t)+\t*log2(3)-1},{-log2(2-2*\t)});   
		\draw[red,densely dashed,thick] plot[smooth,domain=0.0001:0.5, variable=\t] ({-\t*log2(\t) - (1-\t)*log2(1-\t)}, {-log2(1-\t)});     
		\draw[violet,dashdotted,thick] plot[smooth,domain=0.00001:0.11002, variable=\t] ({-2*\t*log2(\t) - 2*(1-\t)*log2(1-\t)},{\t/6});    
		\draw[brown,loosely dashed,thick] plot[domain=0.0001:3/8,variable=\t] ({-\t*log2(\t)-(1-\t)*log2(1-\t)+\t*log2(3)-1},{-log2(2-4*\t)});   
		\draw[teal,densely dotted,thick] plot[smooth,domain=0.0001:0.25, variable=\t] ({-\t*log2(\t) - (1-\t)*log2(1-\t)}, {-log2(1-2*\t)});     
		\draw[olive,loosely dashdotted,thick] plot[domain=0:1] (\x, \x - 0.3);    
	\end{tikzpicture}
	\caption{Plot of our quantum-quantum rate function $R_2(h_2)$ from Theorem \ref{thm:h2-sampling} (\protect\tikz[baseline=-0.5ex]{\protect\draw[blue] (0,0) -- (0.5,0);}), our classical-quantum rate function $C_2(h_2)$ from Theorem \ref{thm:cq-h2-sampling} (\protect\tikz[baseline=-0.5ex]{\protect\draw[red,densely dashed,thick] (0,0) -- (0.5,0);}), Wullschleger's min-entropy sampling result \cite[Corollary 1]{Wul10} (\protect\tikz[baseline=-0.5ex]{\protect\draw[violet,dashdotted,thick] (0,0) -- (0.5,0);}), Vadhan's purely classical min-entropy sampling results~\cite[Lemma 6.2]{Vad03} (\protect\tikz[baseline=-0.5ex]{\protect\draw[olive,loosely dashdotted,thick] (0,0) -- (0.5,0);}), and the classical and quantum upper bounds we get from a state that is uniform on strings of a fixed type analyzed in Theorems \ref{thm:upper-bound-rate-qq} and \ref{thm:upper-bound-rate-cq} (\protect\tikz[baseline=-0.5ex]{\protect\draw[teal,densely dotted,thick] (0,0) -- (0.5,0);}, \protect\tikz[baseline=-0.5ex]{\protect\draw[brown,loosely dashed,thick] (0,0) -- (0.5,0);}).
	As Vadhan's result requires a choice of parameters we chose $\tau = 0.1$, which yields a lower bound on the \emph{smooth} min-entropy, with smoothing parameter of the order of $10^{-6}$ for a block size of $n=10000$.
	}
	\label{fig:sampling-plots}
\end{figure}

\begin{proof}
We start by observing that \eqref{eqn:h2-sampling} directly implies \eqref{eqn:hmin-sampling}. This follows from the fact that $h_2 \geqslant h_{\min}$ (Lemma \ref{lem:hmin-h2-fullyquantum}) and Lemma \ref{lem:smoothhmin-h2}. 

We now prove \eqref{eqn:h2-sampling} by applying Theorem~\ref{thm:general-h2} for an appropriately chosen
map $\mathcal{M}$.

Define $\cM_{A^n \to A^kS} (X) = \frac{1}{\binom{n}{k}} \sum_{S \subseteq [n], |S| = k} \tr_{S^c} [X] \otimes \proj{S}$, for $X \in \cL(A^n)$, where the second register contains a classical description of the set $S$, and $S^c$ denotes the complement of $S$ in $[n]$. Our first task is to relate this map to the task of sampling entanglement. We have
\begin{align*}
2^{-\entHtwo(A^kS|E)_{\cM(\rho)|\rho}} &= \tr\left[\left(\rho^{-1/4}_E (\cM \otimes \id)(\rho_{A^nE}) \rho_E^{-1/4} \right)^2\right] \\
&= \tr\left[\left(\rho^{-1/4}_E \left(\frac{1}{\binom{n}{k}} \sum_{|S|=k} \rho_{A_SE} \otimes \proj{S} \right) \rho_E^{-1/4} \right)^2\right] \\
&= 2^{-\entHtwo(A_SS|E)_{\rho}} \\
&= \frac{1}{\binom{n}{k}} \mbE_{S \subseteq [n], |S|=k} \tr\left[\left(\rho^{-1/4}_E \rho_{A_SE} \rho_E^{-1/4} \right)^2\right] = \frac{1}{\binom{n}{k}} 2^{-\entHtwo(A_S|ES)_{\rho}},
\end{align*}
where for the last equality we used the expression for the entropy conditioned on the classical system $S$ (Lemma \ref{lem:condition-cl}). This last equality can be seen as a chain rule $\entHtwo(AY|E) = \entHtwo(A|EY) - \log |Y|$ for states of the form $\rho_{AEY} = \frac{1}{|Y|}\sum_{y} \rho_{AE}(y) \otimes \proj{y}$ with $\tr_A[\rho_{AE}(y)] = \tr_A[\rho_{AE}]$ for all $y$. 
Observe also that we could have chosen the map $\cM_{A^n \to A^kS} (X) = \frac{1}{\sqrt{\binom{n}{k}}} \sum_{S \subseteq [n], |S| = k} \tr_{S^c} [X] \otimes \proj{S}$, which is not trace preserving, to get more directly $2^{-\entHtwo(A^kS|E)_{\cM(\rho)|\rho}} = 2^{-\entHtwo(A_S|ES)_{\rho}}$ and the result would be exactly the same. We preferred to use the physical (i.e., normalized) sampling map to make the steps of the proof more transparent.


Our second task is to show that our choice of map satisfies the conditions of Theorem~\ref{thm:general-h2}.
We have 
\begin{align*}
	( (\cM^{\dagger} \circ \cM) \otimes \id_{\bar{A}^n}) (\Phi_{A^n \bar{A}^n}) &= \cM^{\dagger}\left( \frac{1}{\binom{n}{k}} \sum_{|S|=k} \proj{S} \otimes \Phi_{A_S\bar{A}_S} \otimes \id_{\bar{A}_{S^c}} \right) \\
&= \frac{1}{\binom{n}{k}^2} \sum_{|S|=k} \Phi_{A_S \bar{A}_S} \otimes \id_{A_{S^c}\bar{A}_{S^c}}.
\end{align*}
We now write this operator in terms of $\{\Phi_s\}_{s \in [d^2]^n}$. Recall that $\{\frac{1}{\sqrt{d^n}} \ket{\Phi_s}\}_s$ forms an orthonormal basis and thus $\id_{A^n \bar{A}^n} = \frac{1}{d^n}\sum_{s \in [d^2]^n} \Phi_s$:
\begin{align*}
	( (\cM^{\dagger} \circ \cM) \otimes \id_{\bar{A}^n}) (\Phi_{A^n \bar{A}^n}) &= \frac{1}{d^{n-k} \binom{n}{k}^2} \sum_{|S| = k} \sum_{s : \supp(s) \subseteq S^c} \Phi_s \\
&= \frac{1}{d^{n-k}\binom{n}{k}^2} \sum_{s : |s| \leqslant n-k} \binom{n-|s|}{k} \Phi_s. 
\end{align*}
As a result, the coefficients $\lambda_s$ from Theorem \ref{thm:general-h2} are $\lambda_s = \frac{\binom{n-|s|}{k}}{d^{n-k}\binom{n}{k}^2}$. Observe that $\lambda_s$ only depends on $|s|$ and is a decreasing function of $|s|$. In order to apply Theorem \ref{thm:general-h2}, it is natural to choose the partition $\mfS_+ \cup \mfS_-$ of the form $\mfS_+ = \{ s \in [d^2]^n : |s| \leqslant \ell_0\}$ and $\mfS_- = \{s \in [d^2]^n : |s| > \ell_0\}$ for a value of $\ell_0 \in \{0, \dots, n\}$ to be chosen as a function of $h_2$.

Writing equation \eqref{eqn:general-h2} in our case we obtain,
\begin{align}
2^{-\htwo(A_S|ES)_{\rho}} &\leqslant \sum_{\ell=0}^{\ell_0} \frac{\binom{n-\ell}{k}}{d^{n-k}\binom{n}{k}} \binom{n}{\ell}(d^2-1)^{\ell} 2^{-h_2 n} +  \frac{\binom{n-\ell_0-1}{k}}{\binom{n}{k}} d^k \notag \\
&= \frac{2^{-h_2 n}}{d^{n-k}}\sum_{\ell=0}^{\ell_0} \binom{n-k}{\ell} (d^2-1)^{\ell} +  \frac{\binom{n-\ell_0-1}{k}}{\binom{n}{k}} d^k. \label{eqn:sum-split}
\end{align}
Now all that remains is to optimize over $\ell_0$ and to find a simple expression for this quantity. Before choosing $\ell_0$, we simplify the expression above. For the second term, we bound
\[
\frac{\binom{n-\ell_0-1}{k}}{\binom{n}{k}} d^k \leqslant \left(\frac{n-\ell_0-1}{n}\right)^k d^k.
\]
To obtain a simple bound on the first term, we use the following lemma, which is proven in the appendix.
\begin{lem}[label=lem:binomial-sum, restate=RestateBinomialSum]
For any $\ell_0 \in \{0, \dots, n\}$ such that $\ell_0 \leqslant \frac{d^2-1}{d^2} n$ where $d^2 < n$, we have
\[
\sum_{\ell=0}^{\ell_0} {n-k \choose \ell} (d^2-1)^{\ell}  \leqslant n^2 {\binom{n}{\ell_0} (d^2-1)^{\ell_0}} \mathrm{max} \left(\frac{n-\ell_0-1}{n}, \frac{1}{d^2} \right)^k.
\]
\end{lem}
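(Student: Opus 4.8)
The plan is to reduce everything to two elementary facts and then a careful comparison of binomial coefficients, with the factor $n^{2}$ absorbing the loss. Write $q=d^{2}-1$ and, for $0\le\ell\le n-k$, put $a_\ell=\binom{n-k}{\ell}q^{\ell}$ and $b_\ell=\binom{n}{\ell}q^{\ell}$, so the quantity to bound is $\sum_{\ell=0}^{\ell_0}a_\ell$. First I would record the identity $\binom{n-k}{m}=\binom{n}{m}\prod_{i=0}^{k-1}\frac{n-m-i}{n-i}$, which gives $a_m=b_m\cdot\binom{n-m}{k}/\binom{n}{k}$; in particular $a_m\le b_m$ and $a_m/b_m\le\bigl(\tfrac{n-m}{n}\bigr)^{k}$. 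Second, I would observe that, because $\ell_0\le\frac{d^{2}-1}{d^{2}}n=\frac{q}{q+1}n$ and $n>d^{2}$, one has $b_\ell/b_{\ell-1}=(n-\ell+1)q/\ell\ge1$ for all $1\le\ell\le\ell_0$; hence $(b_\ell)_{\ell=0}^{\ell_0}$ is nondecreasing and $b_\ell\le b_{\ell_0}$ on that range. This monotonicity is exactly where the hypothesis $\ell_0\le\frac{d^{2}-1}{d^{2}}n$ enters, and it also shows that whenever $\ell_0$ is within $O(1)$ of the peak of $(b_\ell)$ (which sits at $\approx\frac{q(n+1)}{q+1}$) one has $b_{\ell_0}\ge\tfrac14\max_\ell b_\ell\ge\tfrac14\,(1+q)^{n}/(n+1)$, since near that peak consecutive ratios are $\le1+\tfrac{q+1}{n}<2$.

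Next I would bound $\sum_{\ell=0}^{\ell_0}a_\ell$ in two ways and take whichever is stronger. The sequence $(a_\ell)$ is unimodal (the ratio $a_{\ell+1}/a_\ell=(n-k-\ell)q/(\ell+1)$ decreases in $\ell$), with peak $\ell^{*}$ near $\frac{q}{q+1}(n-k)$, so $\sum_{\ell=0}^{\ell_0}a_\ell\le(\ell_0+1)\,a_{\ell_{\max}}\le n\,a_{\ell_{\max}}$ where $\ell_{\max}=\min(\ell_0,\ell^{*})$. In parallel keep the crude reserve bound $\sum_{\ell=0}^{\ell_0}a_\ell\le\sum_{\ell=0}^{n-k}a_\ell=(1+q)^{n-k}$. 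It then remains to show that the right-hand side of the lemma dominates both, for the appropriate choice of term in the $\max$. Writing $n\,a_{\ell_{\max}}=n\,b_{\ell_{\max}}\binom{n-\ell_{\max}}{k}/\binom{n}{k}\le n\,b_{\ell_0}\bigl(\tfrac{n-\ell_{\max}}{n}\bigr)^{k}$, one splits by which argument of the $\max$ is larger: if $\frac{n-\ell_0-1}{n}\ge\frac{1}{d^{2}}$ (i.e.\ $\ell_0$ lies well below the peak of $(b_\ell)$) one wants $\bigl(\tfrac{n-\ell_{\max}}{n}\bigr)^{k}\le n\bigl(\tfrac{n-\ell_0-1}{n}\bigr)^{k}$, which is immediate when $\ell_{\max}=\ell_0$ (then $n-\ell_0-1\ge k-1$, so $\bigl(1+\tfrac1{n-\ell_0-1}\bigr)^{k}\le e^{2}$); if instead $\frac{1}{d^{2}}\ge\frac{n-\ell_0-1}{n}$, then $\ell_0$ is within $O(1)$ of the peak of $(b_\ell)$, so $b_{\ell_0}\ge\tfrac14(1+q)^{n}/(n+1)$ and $n^{2}b_{\ell_0}(\tfrac1{d^{2}})^{k}\ge\tfrac{n^{2}}{4(n+1)}(1+q)^{n-k}\ge(1+q)^{n-k}\ge\sum_{\ell=0}^{\ell_0}a_\ell$.

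I expect the main obstacle to be making the first comparison airtight in the regime of large $k$, where $\ell_{\max}=\ell^{*}$ can lie as much as $\approx k$ below $\ell_0$: then the bound obtained from the crude estimate $\binom{n-\ell_{\max}}{k}/\binom{n}{k}\le\bigl(\tfrac{n-\ell_{\max}}{n}\bigr)^{k}$ is too weak (it would force $k=O(\sqrt{n\log n})$). Here one must instead feed the reserve bound $(1+q)^{n-k}$ through the first term of the $\max$, using the \emph{product} structure of $\binom{n-\ell_{\max}}{k}/\binom{n}{k}$ rather than the single power: bound $\binom{n-\ell_0-1}{k}$ below in terms of $\binom{n}{k}$, use the step-one identity to write $\binom{n}{\ell_0}q^{\ell_0}\binom{n-\ell_0-1}{k}=\frac{\ell_0+1}{(n-\ell_0)q}\binom{n-k}{\ell_0+1}q^{\ell_0+1}=\frac{\ell_0+1}{(n-\ell_0)q}\,a_{\ell_0+1}$, and then check that for $\ell_0$ in this window $a_{\ell_0+1}$ together with the prefactor recovers a polynomial fraction of $(1+q)^{n-k}$. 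Balancing these estimates is exactly what the factor $n^{2}$ (rather than $n$) in the statement buys; apart from this point the argument is a routine, if somewhat lengthy, case analysis, which can be streamlined by noting $\sum_{\ell=0}^{\ell_0}a_\ell=(1+q)^{n-k}\Pr[\mathrm{Bin}(n-k,\tfrac{q}{q+1})\le\ell_0]$ so that the two regimes correspond to $\ell_0$ below vs.\ near the mean $\ell^{*}$ of this binomial.
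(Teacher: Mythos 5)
Your framework (the identity $a_m=b_m\binom{n-m}{k}/\binom{n}{k}$ with $a_\ell=\binom{n-k}{\ell}q^{\ell}$, $b_\ell=\binom{n}{\ell}q^{\ell}$, $q=d^2-1$, monotonicity of $b_\ell$ up to $\ell_0$, and splitting according to which term of the max is active) is sound, and two of your regimes do go through: when $\frac1{d^2}\geqslant\frac{n-\ell_0-1}{n}$ your peak argument works ($\ell_0$ is within two steps of the maximum of $b_\ell$, each step ratio being at most $1+\frac{d^2}{n}<2$, so $b_{\ell_0}\geqslant\frac14 (d^2)^n/(n+1)$ and the crude bound $(d^2)^{n-k}$ suffices), and when $\ell_0\leqslant\ell^{*}$ your claim $n-\ell_0-1\geqslant k-1$ is justified simply because $\ell^{*}\leqslant n-k$, whence $\bigl(1+\tfrac1{n-\ell_0-1}\bigr)^{k}\leqslant\bigl(1+\tfrac1{k-1}\bigr)^{k}\leqslant 4\leqslant n$. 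The genuine gap is exactly the regime you flag and do not close: $\frac{n-\ell_0-1}{n}\geqslant\frac1{d^2}$ but $\ell^{*}<\ell_0$, i.e.\ $k$ so large that the peak of $a_\ell$ lies below $\ell_0$. Your proposed patch fails there. Lower-bounding $\bigl(\frac{n-\ell_0-1}{n}\bigr)^{k}$ by $\binom{n-\ell_0-1}{k}/\binom{n}{k}$ and using the step-one identity (which, as written, is missing a factor $\binom{n}{k}$: correctly, $\binom{n}{\ell_0}q^{\ell_0}\binom{n-\ell_0-1}{k}=\frac{\ell_0+1}{(n-\ell_0)q}\binom{n}{k}\,a_{\ell_0+1}$) reduces the task to showing $\sum_{\ell\leqslant\ell_0}a_\ell\leqslant\mathrm{poly}(n)\,a_{\ell_0+1}$, i.e.\ that $a_{\ell_0+1}$ is a polynomial fraction of $(d^2)^{n-k}$. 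That is false in this window: the left side contains the peak term $a_{\ell^{*}}\geqslant(d^2)^{n-k}/(n-k+1)$, while $a_{\ell_0+1}$ sits $\ell_0-\ell^{*}$ steps beyond the peak, a gap that can be of order $k$, making $a_{\ell_0+1}$ exponentially smaller; indeed $a_{\ell_0+1}=0$ outright whenever $\ell_0\geqslant n-k$, which the hypotheses allow (e.g.\ $d=2$, $k=n/2$, $\ell_0$ close to $\frac34 n$). The alternative shortcut of proving the $d^{-2k}$ form and then enlarging to the first term of the max also fails here, since $b_{\ell_0}$ can be exponentially smaller than $(d^2)^n$ in this window.

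What the paper does at this point is the one idea your plan lacks. It introduces $k_0$, the largest integer with $\ell_0\leqslant\frac{d^2-1}{d^2}(n-k_0+1)$, i.e.\ the largest $k$ for which $\binom{n-k}{\ell}(d^2-1)^{\ell}$ is still nondecreasing up to $\ell=\ell_0$; at $k=k_0$ the full sum obeys $(d^2)^{n-k_0}\leqslant n\binom{n-k_0}{\ell_0}(d^2-1)^{\ell_0}$. For $k>k_0$ it bounds your sum by $(d^2)^{n-k}=(d^2)^{n-k_0}(d^2)^{-(k-k_0)}$ and then trades each leftover factor $d^{-2}$ for a ratio, via $\frac1{d^2}\leqslant\frac{n-\ell_0-k_0+1}{n-k_0+1}$ (a direct consequence of the definition of $k_0$); combined with $\binom{n-k_0}{\ell_0}=\binom{n}{\ell_0}\prod_{i=0}^{k_0-1}\frac{n-\ell_0-i}{n-i}$, the resulting product of $k$ ratios is at most $(n-\ell_0)\bigl(\frac{n-\ell_0-1}{n}\bigr)^{k-1}$, which gives the claimed $\bigl(\frac{n-\ell_0-1}{n}\bigr)^{k}$ form (and the $d^{-2k}$ form when $k_0=1$, which is why the max appears in the statement). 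To complete your proof you would need to import this trading step, or an equivalent device, for the window $\ell^{*}<\ell_0\leqslant\frac{d^2-1}{d^2}n$; the routes you sketch cannot reach the required bound there.
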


It then follows from equation \eqref{eqn:sum-split} that
\begin{align*}
2^{-\entHtwo(A_S|ES)_{\rho}}
&\leqslant \max\left(\frac{n-\ell_0-1}{n}, \frac{1}{d^2}\right)^k d^k \left( \frac{2^{-h_2n}}{d^n} n^2 \binom{n}{\ell_0} (d^2-1)^{\ell_0} + 1\right).
\end{align*}

%
%
%
%

We now determine the value of $\ell_0$ as a function of $h_2$.
Observe that using Lemma \ref{lem:sum-binomial}, we have $\binom{n}{\ell} (d^2-1)^{\ell} \leqslant 2^{nh(\ell_0/n)}(d^2-1)^{\ell_0} = 2^{n f_d(\ell_0/n)} d^n$ provided $\ell_0 \leqslant \frac{d^2-1}{d^2}n$. 
We define $\ell_0$ to be the largest integer that is at most $\frac{d^2-1}{d^2}n$ such that $f_d(\ell_0/n) \leqslant h_2$. As a result, we have
\begin{align}
\label{eqn:bound-in-terms-l0}
2^{-\entHtwo(A_S|ES)_{\rho}}
&\leqslant \max\left(\frac{n-\ell_0-1}{n} , \frac{1}{d^2} \right)^k d^k \left(n^2 + 1\right).
\end{align}
Observe also that in the case where the maximum is $1/d^2$, the result follows directly as $R_d(h_2) \leq \log d$. In the case where $(n-\ell_0-1)/n > 1/d^2$, we observe that $(\ell_0+1)/n > f_d^{-1}(h_2)$ by our choice of $\ell_0$. Note that if $\ell_0+1 \leqslant (d^2-1)/d^2 \cdot n$, this follows from the fact that $f_d$ is nondecreasing, and otherwise it follows from the fact that by definition $f_d^{-1}$ is always upper bounded by $(d^2-1)/d^2$. 
We now write $\left(\frac{n-\ell_0-1}{n}\right)^k$ in terms of the entropy rate $h_2$:
\begin{align*}
	k \log \left( \frac{n-\ell_0-1}{n} \right) &= k\log\left( 1- \frac{\ell_0+1}{n} \right)\\
&\leqslant k\log(1 - f_d^{-1}(h_2))\\
&= k \log(d - df_d^{-1}(h_2)) - k\log d\\
&= -k R_d(h_2) - k\log d.
\end{align*}
By plugging these inequalities into \eqref{eqn:bound-in-terms-l0}, we obtain the desired result. 
\end{proof}

\subsubsection{An upper bound on the rate function}\label{sec:qq-upper-bound}
Note that the rate function obtained in Theorem \ref{thm:h2-sampling} is independent of the state $\rho_{AE}$ and of the size of the sample $k$. The objective of this section is to show that with such a requirement, the rate function $R_d$ cannot be improved too much especially when $h_2$ is close to the minimal value of $-\log d$.

\begin{defin}
	We define the optimal rate function $R_d^{\opt}$ as
	\[ R_d^{\opt}(h_2) :=  \liminf_{n \geqslant 1} \left( \min_{k \in [n], \rho_{A^nE} \text{ such that } \frac{1}{n} \htwo(A^n|E) \geqslant h_2} \frac{1}{k} \htwo(A_S|ES)_{\rho} \right)\ , \]
	where $A^n=A_1,\ldots,A_n$ is comprised of $n$ qudits of dimension $d$.
\end{defin} 
We now derive an upper bound on the rate function that will show that our result is fairly close to optimal for small $h_2$ and small $k$. The idea is to choose a state that consists of $n$ EPR pairs that have been corrupted by a fixed-weight generalized Pauli error. In this case, if this weight is small enough, the sample will avoid all the errors with relatively large probability and the collision entropy of the sampling is going to be much smaller than $kh_2$.

\begin{thm}
\label{thm:upper-bound-rate-qq}
	It holds that $R_d^{\opt}(h_2) \leqslant -\log(d - 2df_d^{-1}(h_2))$.
\end{thm}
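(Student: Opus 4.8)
The plan is to exhibit, for each $n$, an explicit state $\rho_{A^nE}$ (with $E$ held by the adversary) that satisfies the constraint $\tfrac1n\htwo(A^n|E)_\rho\geqslant h_2$ and for which the collision entropy of a single sampled qudit is small. Following the sketch in the text, take $E=\bar A^n$, fix a Pauli error weight $t\in\{0,\dots,n\}$, and set
\[
\rho^{(t)}_{A^n\bar A^n} \;=\; \frac{1}{N_t}\sum_{s\in[d^2]^n,\ |s|=t}\proj{\Phi^N_s}, \qquad N_t:=\binom{n}{t}(d^2-1)^t,
\]
where $\ket{\Phi^N_s}=\tfrac{1}{\sqrt{d^n}}\ket{\Phi_s}$. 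Since $\{\ket{\Phi^N_s}\}$ is an orthonormal basis, $\rho^{(t)}$ is a uniform mixture of $N_t$ orthogonal maximally entangled vectors; in particular $\rho^{(t)}_{\bar A^n}=\id_{\bar A^n}/d^n$ and $\tr[(\rho^{(t)}_{A^n\bar A^n})^2]=1/N_t$. We may assume $f_d^{-1}(h_2)<1/2$, since otherwise $d-2df_d^{-1}(h_2)\leqslant 0$ and the claimed bound is vacuous.

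First I check the constraint. Because $\rho^{(t)}_{\bar A^n}$ is maximally mixed, $\htwo(A^n|\bar A^n)_{\rho^{(t)}}=-\log\!\big(d^n\tr[(\rho^{(t)}_{A^n\bar A^n})^2]\big)=\log\binom{n}{t}+t\log(d^2-1)-n\log d$. Using $\tfrac1n\log\binom{n}{t}=h(t/n)+O(\tfrac{\log n}{n})$ (uniformly for $t/n$ in a compact subinterval of $(0,1)$) and the fact that $f_d$ is continuous and strictly increasing, let $t_n$ be the least integer with $\tfrac1n\htwo(A^n|\bar A^n)_{\rho^{(t_n)}}\geqslant h_2$; then $\rho^{(t_n)}$ meets the constraint in the definition of $R_d^{\opt}$ and $t_n/n\to w:=f_d^{-1}(h_2)$ as $n\to\infty$.

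Next I take the sample size $k=1$. The state $\rho^{(t)}$ is invariant under permuting $A_1,\dots,A_n$, so by the rule for entropy conditioned on a classical register (Lemma~\ref{lem:condition-cl}) we have $\htwo(A_S|ES)_{\rho^{(t)}}=\htwo(A_1|\bar A^n)_{\rho^{(t)}}$. Tracing out $A_2,\dots,A_n$, using $\tr_{A_i}\proj{\Phi^N_a}=\id_{\bar A_i}/d$ for every $a\in[d^2]$ together with $\binom{n-1}{t}/\binom{n}{t}=1-t/n$ and $\binom{n-1}{t-1}/\binom{n}{t}=t/n$, yields
\[
\rho^{(t)}_{A_1\bar A^n}=\Big(\big(1-\tfrac{t}{n}\big)\proj{\Phi^N_0}+\tfrac{t/n}{d^2-1}\sum_{a\neq 0}\proj{\Phi^N_a}\Big)_{A_1\bar A_1}\otimes\frac{\id_{\bar A_2\cdots\bar A_n}}{d^{n-1}}.
\]
Since $\rho^{(t)}_{\bar A^n}=\id/d^n$ commutes with everything and the $\proj{\Phi^N_a}$ are mutually orthogonal rank-one projectors, expanding the square gives
\[
2^{-\htwo(A_1|\bar A^n)_{\rho^{(t)}}}=d^n\tr[(\rho^{(t)}_{A_1\bar A^n})^2]=d\Big(\big(1-\tfrac{t}{n}\big)^2+\tfrac{1}{d^2-1}\big(\tfrac{t}{n}\big)^2\Big).
\]

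Finally, taking $t=t_n$ and letting $n\to\infty$ so that $t_n/n\to w=f_d^{-1}(h_2)$, we conclude
\[
R_d^{\opt}(h_2)\;\leqslant\;\liminf_{n}\htwo(A_1|\bar A^n)_{\rho^{(t_n)}}\;=\;-\log\!\Big(d\big((1-w)^2+\tfrac{w^2}{d^2-1}\big)\Big)\;\leqslant\;-\log\big(d(1-2w)\big),
\]
the last step using $(1-w)^2+\tfrac{w^2}{d^2-1}\geqslant(1-w)^2\geqslant 1-2w$, which is exactly $-\log(d-2df_d^{-1}(h_2))$. The only genuinely delicate point is the passage from $\log\binom{n}{t}$ to $nh(t/n)$ and the integrality of $t$ in the second paragraph: one has to verify that the resulting $O(\log n)$ slack does not obstruct $t_n/n\to f_d^{-1}(h_2)$, which is precisely the reason $R_d^{\opt}$ is defined with a $\liminf$. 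Everything else is routine bookkeeping with the orthonormal Bell-type basis $\{\Phi^N_s\}$ and the elementary binomial identities above (and one checks that running the same computation with any fixed $k$ replaces $d((1-w)^2+\tfrac{w^2}{d^2-1})$ by its $k$-th power, so the rate is unchanged).
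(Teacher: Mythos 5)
Your proof is correct and is essentially the paper's own argument: the same family of states (a uniform mixture of maximally entangled states corrupted by fixed-weight generalized Pauli errors), with the sampled collision entropy reduced to the collision probability of the error pattern on the sample. The only difference is that you specialize to $k=1$ and compute the reduced state exactly (legitimate, since $R_d^{\opt}$ minimizes over $k$), which even yields the marginally sharper bound $-\log\bigl(d\bigl[(1-w)^2+\tfrac{w^2}{d^2-1}\bigr]\bigr)$ with $w=f_d^{-1}(h_2)$, whereas the paper keeps general $k$ with $k/n\to 0$ and lower-bounds the collision probability by the event that the sample avoids all error positions.
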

\begin{proof}
	Let $E = B^{n} \cong A^n$, and consider the state \[ \rho_{A^nB^n} = \left( {n \choose w} (d^2-1)^w \right)^{-1} \sum_{s, |s|=w} \frac{\Phi_s}{d^n} \]
	for some particular $w$. This is a maximally entangled state between $A^n$ and $B^n$ that has been corrupted by a random error of weight exactly $w$. 
	We can compute its collision entropy
	\begin{align*}
		2^{-\htwo(A^n|B^n)_{\rho}} = \tr[\rho_{B^n}^{-1/2}\rho_{A^nB^n} \rho_{B^n}^{-1/2} \rho_{A^nB^n}] &= \left( {n \choose w} (d^2-1)^w \right)^{-2} \sum_{s, |s|=w} \tr[\ident_{A^n}]\\
		&= \left( {n \choose w} (d^2-1)^w \right)^{-1} d^n\\
		&\leqslant 2^{-nh(w/n) - w\log(d^2-1) + n\log d + \log n}.
	\end{align*}
	Hence, $h_2 \geqslant f_d(w/n) - \frac{1}{n}\log n$.

	Now, let us compute the collision entropy for a random subsystem of size $k$ with $\frac{k}{n} \rightarrow 0$. Note that we have 
	\begin{align*}
		\rho_{A_SB^n} &= \sum_{s \in [d^2]^{S}}  \pr{\sigma_S = s} \frac{\Phi_s \otimes \ident_{B_{S^c}}}{d^n},
	\end{align*}
	where $\sigma \in [d^2]^n$ is a random string of weight exactly $w$, and $\sigma_S$ is the substring index by elements of $S$. Then we have 
	\begin{align*}
	\tr\left[\left(\rho_{B^n}^{-1/4} \rho_{A_SB^n} \rho_{B^n}^{-1/4}\right)^2\right] &= \sum_{s,s' \in [d^2]^{S}}  \pr{\sigma_S = s} \pr{\sigma'_S = s'} \frac{\tr[\Phi_s \Phi_{s'}]}{d^{k}} \\
			&= d^k \pr{\sigma_S = \sigma'_S}.
	\end{align*}
	Thus, we want to evaluate the average over $S$ of this probability. For any fixed $\sigma$ and $\sigma'$ of weight $w$ and choosing a random subset of size $k$, the corresponding substrings will be the same if they avoid all positions where $\sigma$ or $\sigma'$ are non-zero. As a result the collision probability for the sample is at least
	\begin{align*}
		\frac{n-2w}{n} \cdots \frac{n-2w-k+1}{n-k+1}
		&\geqslant \left( \frac{n-k-2w}{n-k} \right)^k \\
		&= \left( 1 - 2\left( \frac{w}{n} \right) \left( \frac{n}{n-k} \right) \right)^k\\
		&\geqslant \left( 1 - 2f_d^{-1}\left(h_2 + \frac{1}{n}\log n\right) \left( \frac{n}{n-k} \right) \right)^k.
	\end{align*}
	Taking the limit over $n \rightarrow \infty$ and $\frac{k}{n} \rightarrow 0$, we get that
	\[ 2^{-\htwo(A_S|E)} \geqslant (d(1 - 2f_{d}^{-1}(h_2)))^k. \]	
	This directly yields the theorem.
\end{proof}

\subsubsection{Applications of entanglement sampling}\label{sec:decouple}

An immediate consequence of our result on entanglement sampling concerns the existence of decouplers (QQ-extractors) using only very few qubits. A decoupling operation
is some process $\mathcal{K}_{A \rightarrow B}$ that applied to the $A$ system transforms $\rho_{AE}$ to a state that is close to $\tau_B \otimes \rho_E$, where
$\tau_{B}$ is a state that depends only on the map $\mathcal{K}$ but not on the initial state $\rho_{AE}$. In quantum information theory, such processes typically consist of applying a random unitary $U$ to $A$, followed by a map $\mathcal{T}_{A \rightarrow B}$ such as the partial trace operation. That is, the map $\mathcal{K}$ is of the form
$\mathcal{K}(\rho_A) = \int d\mu(U) \mathcal{T}_{A \rightarrow B}(U \rho_A U^\dagger) \otimes \proj{U}$, where $\ket{U}$ is a classical register containing the choice of unitary.

Decoupling theorems in quantum information theory have their origin in quantum channel coding~\cite{HOW06,ADHW09,HHYW08} where $\mathcal{T}$ is usually the partial trace, and $\tau_B = \id/|B|$. 
In this context, the size of the system $|B|$ that one can decouple from $E$, can be related to the number of qubits that one can pass through a quantum channel whose environment 
is $E$ with vanishing error. In this context, the choice of unitary $U$ yields an encoding scheme (see~\cite{Dup09} for details). 
More recently, the decoupling theorem has been generalized to a wide variety of maps $\mathcal{T}$~\cite{Dup09,DBWR10}.

Decoupling results are known when the unitaries are chosen from the Haar measure~\cite{ADHW09,HHYW08,Dup09,DBWR10}, from a $2$-design, from an approximate $2$-design~\cite{sdtr11}, or from even more efficient sets of unitaries~\cite{BF13}. In contrast, when $A$ is classical, many decoupling operations are known in the form of randomness extractors discussed in the introduction (see~\cite{vadhan:survey} for a survey). Of particular interest in both computational~\cite{HHYW08} and physical applications~\cite{HP07,lidia:inprep,renato:workGain, adrian:thesis} are unitaries which are efficient. In a computational setting, this generally refers to unitaries that can be implemented using low-depth quantum circuits, whereas in physical scenarios it is usually of interest that they arise from Hamiltonians involving only nearest neighbour interactions over a short period of time. 

As an example of the physical relevance of decoupling theorems, let us consider the case where $A$ is comprised of a system $A_{\rm sys}$ and a bath $A_{\rm bath}$, 
and $\mathcal{T} = \tr_{A_{\rm bath}}$ is the operation that traces out the bath. A decoupling theorem for certain classes of unitaries then says that for very many unitaries
in that set, the resulting state of the system $\tau_B$ is independent of its initial state, one of the steps considered in the process of thermalization~\cite{linden:thermalize}.
That is, it tells us that certain evolutions of the system and the bath, namely those corresponding to such unitaries, can lead to thermalization. This holds even in the stronger sense
of relative thermalization where one requires that the state of the system becomes independent of an observer holding $E$~\cite{lidia:inprep}.
In fact, the decoupling theorem~\cite{DBWR10} for Haar measure random unitaries can even be used~\cite{adrian:thesis} to recover the results of~\cite{short:pick} stating that for most initial states of $A$, or equivalently most unitary evolutions on $A$, the resulting state is close to the canonical state. 

As such, it is an interesting question to determine which sets of unitaries lead to a decoupling theorem. Here, our goal is to show that if $A^n = A_1,\ldots,A_n$ consists of $n$ qudits, then there exist decoupling operations involving only a (small) subset of such qudits. As outlined in the introduction, one generic way to accomplish this task is to show that the fully quantum min-entropy can be sampled. Decoupling operations involving only few qudits can then be obtained in a ''sample-then-decouple'' fashion similar to the classical ''sample-then-extract'' approach of~\cite{Vad03}. That is, one first samples a set of qubits, and then applies an arbitrary decoupling operation on the resulting sample.

Our result extends to any of the more modern decoupling theorems involving entropy measures~\cite{Dup09,DBWR10}.~\footnote{In contrast to statements involving 
only the dimensions of systems as in e.g.,~\cite{ADHW09}.} To illustrate this idea, let us consider the example of $A^n = A_1,\ldots,A_n$ consisting of $n$ qubits, unitaries chosen from the Haar measure, and
$\mathcal{T}$ being the partial trace operation $\tr_{n-r}(\rho_A)$ tracing out all but $r$ of the $n$ qubits. In terms of the $\htwo$ entropy is was shown~\cite{Dup09,DBWR10}
that
\begin{align}
	\int d(U) \left\|\tr_{n-r} \otimes \id_E(\rho_{AE}) - \frac{\id}{2^r} \otimes \rho_E \right\|_1 \leqslant 2^{-\demi (\htwo(A|E) + n - 2r)}\ ,
\end{align}
where $\|\rho - \sigma\|_1$ is the trace distance of $\rho$ and $\sigma$. 
If we now first sample a subset of size $k$ of the qubits, then our sampling result states that for unitaries chosen according to the Haar measure of qubits
\begin{align}
	\int d(U) \left\|\tr_{|S|-r} \otimes \id_E(\rho_{AES}) - \frac{\id}{2^r} \otimes \rho_{ES} \right\|_1 \leqslant 2^{-\frac{1}{2} 
	\left[|S|\left(R_2\left(\frac{\htwo(A|E)}{n}\right) - 1\right) - 2r - \log (n^2+1)\right]}\ ,
\end{align}
for the rate function given in Theorem~\ref{thm:h2-sampling}. Similarly, our sampling result can be applied to the special kinds of decoupling maps known as quantum-to-classical randomness extractors~\cite{BFW12}. In this context, sampling allows the generation of classical randomness from a quantum system~\footnote{Of which we only have a guarantee about the entropy.} 
by applying measurements to only a few of the qubits of $A$.


\subsection{Classical-quantum min-entropy sampling}\label{sec:classical-quantum}

\subsubsection{Statement}
Observe that in the case where the system $A^n$ is classical, i.e., $\rho_{A^n E} = \sum_{x^n \in [d]^n} p(x^n) \proj{x^n} \otimes \rho_E(x^n)$ for some distribution $p$ and states $\rho_E(x^n)$, Theorem \ref{thm:h2-sampling} can still be applied but in many cases it give trivial bounds. In fact, when $A^n$ is classical, we have $\htwo(A^n|E) \geqslant 0$ as well as $\htwo(A_S|ES) \geqslant 0$. In order to improve on the lower bound of Theorem \ref{thm:h2-sampling} in the case of a classical system, we can apply Theorem \ref{thm:general-h2} to a more specific map $\cM$ that \emph{measures} the systems $A_S$ that are sampled. This allows us to obtain a lower bound on the collision entropy $\htwo(A_S|ES)$ that is nontrivial for the entire range $\htwo(A^n|E) \in [0, n \log d]$.

Unlike the fully quantum case about which not much was known, the classical-quantum min-entropy sampling has been previously studied in particular in \cite{KR07,Wul10,BARdW08}. We briefly highlight the similarities and differences with our results in Theorem \ref{thm:cq-h2-sampling}. The work of \cite{BARdW08} is restricted to the case where $A^n$ is uniformly distributed and obtains a lower bound on the non-smoothed min-entropy~\footnote{The fact that the min-entropy is non-smoothed is important for the application to random access codes.} of the sample as a function of the dimension of the system $E$ rather than the conditional entropy. This special case is particularly interesting in the context of random access codes. The parameters they obtain are better when the dimension of $E$ is small, i.e., $h_2$ is large. However, their techniques fail to give a nontrivial bound when $h_2$ is small. See Section \ref{sec:qrac} for more details. The sampling theorem of \cite{Wul10} works for general classical-quantum states and gives a lower bound on the non-smoothed min-entropy of the sample. The parameters are illustrated in Figure \ref{fig:sampling-plots}. The work of \cite{KR07} considers the general classical-quantum case and focuses on the smoothed min-entropy. This result extends Vadhan's classical min-entropy sampling~\cite{Vad03} result to the case of quantum side information. Hiding technicalities (like the fact one should sample blocks rather than bits) and neglecting terms that depend on the smoothing parameters, the rate function they obtain is basically optimal $R(\alpha) = \alpha$, \footnote{To obtain such a result, smoothing is in fact necessary as shown by the example of Theorem \ref{thm:upper-bound-rate-cq}.} as the plot of Vadhan's result in Figure \ref{fig:sampling-plots}.

Our sampling result has an application to randomness extraction, in that it yields a general way to construct locally computable extractors even with respect to quantum side information $E$. This is analogous to the application of entanglement sampling to decoupling discussed above.

\begin{thm}
\label{thm:cq-h2-sampling}
	Let $\rho_{A^nE}$ be a classical-quantum state, and $1 \leqslant k\leqslant n$, let $d=|A|$, and let $h_2 := \frac{\htwo(A^n|E)_{\rho}}{n}$. Then, for any $n > d$,
	\[ 2^{-\htwo(A_S|ES)_{\rho}} = \mbE_{S \subseteq [n], |S|=k} 2^{-\htwo(A_S|E)_{\rho}} \leqslant 2^{-kC_d(h_2) + \log (n^2+1)}, \]
	where $C_d(\cdot)$ is the rate function defined as 
	$C_d(\alpha) := -\log(1 - c_d^{-1}(\alpha)), $
	and
	$c_d(\alpha) := h(\alpha) + {\alpha \log(d-1)}.$
		In terms of smooth min-entropy, we have for any $\e \in (0,1]$
	\begin{equation}
	\label{eqn:hmin-sampling-cq}
	\entHmin^{\e}(A_S|ES)_{\rho} \geqslant k C_d(h_{\min}) - \log(n^2+1) - \log\frac{2}{\e^2},
	\end{equation}
	where $h_{\min} := \frac{\hmin(A^n|E)_{\rho}}{n}$.
\end{thm}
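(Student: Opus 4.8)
The plan is to follow the proof of Theorem~\ref{thm:h2-sampling} almost verbatim, changing only the map $\cM$ fed into Theorem~\ref{thm:general-h2}: instead of merely tracing out the unsampled qudits we also \emph{measure} the sampled ones in the computational basis. Concretely, let $\cP(Y) = \sum_a \proj{a} Y \proj{a}$ be the completely dephasing (pinching) channel and set
\[
\cM_{A^n \to A^k S}(X) = \frac{1}{\binom{n}{k}} \sum_{S \subseteq [n],\, |S| = k} \cP_{A_S}\!\bigl(\tr_{S^c}[X]\bigr) \otimes \proj{S}.
\]
Since $\rho_{A^nE}$ is classical on $A^n$, we have $\cP_{A_S}(\rho_{A_SE}) = \rho_{A_SE}$, so $\cM(\rho_{A^nE}) = \frac{1}{\binom{n}{k}}\sum_{|S|=k}\rho_{A_SE}\otimes\proj{S}$ exactly as in Theorem~\ref{thm:h2-sampling}; the same manipulation using Lemma~\ref{lem:condition-cl} then gives $2^{-\htwo(A^kS|E)_{\cM(\rho)|\rho}} = \frac{1}{\binom{n}{k}}\,2^{-\htwo(A_S|ES)_\rho}$.

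The one genuinely new ingredient is the Choi operator of $\cM$. Because $\cP$ is self-adjoint and idempotent, $\cP^\dagger\circ\cP = \cP$, and a short computation shows $(\cP\otimes\id)(\Phi_{A\bar A}) = \sum_a \proj{a}_A\otimes\proj{a}_{\bar A}$, which is $\frac{1}{d}$ times the sum of the $d$ Bell projectors $\Phi_s$ whose Weyl operator $W_s$ is diagonal in the computational basis (for $d=2$ this is $\tfrac12(\Phi_0+\Phi_3)$). Writing $|s|_X$ for the number of coordinates $i$ with $W_{s_i}$ \emph{not} diagonal --- the ``bit-flip weight'' of $s$ --- one obtains
\[
\bigl((\cM^\dagger\circ\cM)_{A^n}\otimes\id_{\bar A^n}\bigr)(\Phi_{A^n\bar A^n}) = \frac{1}{\binom{n}{k}^2 d^n}\sum_{s\in[d^2]^n}\binom{n-|s|_X}{k}\,\Phi_s,
\]
so that the coefficients of Theorem~\ref{thm:general-h2} are $\lambda_s = \binom{n-|s|_X}{k}/\bigl(\binom{n}{k}^2 d^n\bigr)$, depending on $s$ only through $|s|_X$ and decreasing in it. Inserting the measurement is exactly what achieves this: diagonal (phase) errors on a classical register are invisible to a computational-basis measurement, so the effective error weight governing the sample is $|s|_X$ rather than the full Pauli weight $|s|$, which is what replaces the $d^2-1$ of Theorem~\ref{thm:h2-sampling} by $d-1$.

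From here the argument is the combinatorial optimisation in the proof of Theorem~\ref{thm:h2-sampling}, transcribed with $d^2-1 \to d-1$ and with the powers of $d$ that previously accounted for $\htwo(A^n|E)$ ranging down to $-n\log d$ now absent (since $0\le\htwo(A_S|ES)\le k\log d$ for a classical register). Partitioning $[d^2]^n = \mfS_+\cup\mfS_-$ by $|s|_X\le\ell_0$ versus $|s|_X>\ell_0$, using $\#\{s:|s|_X=\ell\} = \binom{n}{\ell}(d(d-1))^\ell d^{n-\ell}$ together with the identity $\binom{n-\ell}{k}\binom{n}{\ell} = \binom{n}{k}\binom{n-k}{\ell}$, inequality~\eqref{eqn:general-h2} yields
\[
2^{-\htwo(A_S|ES)_\rho} \le 2^{-h_2 n}\sum_{\ell=0}^{\ell_0}\binom{n-k}{\ell}(d-1)^\ell + \Bigl(\frac{n-\ell_0-1}{n}\Bigr)^k.
\]
Applying Lemma~\ref{lem:binomial-sum} with $d^2$ replaced by $d$ (legitimate because $n>d$, which is where the hypothesis enters), bounding $\binom{n}{\ell_0}(d-1)^{\ell_0}\le 2^{n c_d(\ell_0/n)}$ via Lemma~\ref{lem:sum-binomial}, and choosing $\ell_0$ to be the largest integer at most $\frac{d-1}{d}n$ with $c_d(\ell_0/n)\le h_2$, the right-hand side collapses to $\max\bigl(\frac{n-\ell_0-1}{n},\frac{1}{d}\bigr)^k(n^2+1)$; translating $\log\frac{n-\ell_0-1}{n}\le\log(1-c_d^{-1}(h_2)) = -C_d(h_2)$ (and using $C_d(h_2)\le\log d$ for the other branch of the maximum) gives $2^{-\htwo(A_S|ES)_\rho}\le 2^{-kC_d(h_2)+\log(n^2+1)}$. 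Finally \eqref{eqn:hmin-sampling-cq} follows from $h_2\ge h_{\min}$ (Lemma~\ref{lem:hmin-h2-fullyquantum}), monotonicity of $C_d$, and Lemma~\ref{lem:smoothhmin-h2}, exactly as in Theorem~\ref{thm:h2-sampling}. The main obstacle is thus purely the Choi-operator identification in the second paragraph; once $\lambda_s$ is seen to depend only on $|s|_X$, the rest is bookkeeping parallel to the quantum-quantum case, the only points needing care being the $\binom{n}{k}$ normalisation factors and the exact range of $\ell_0$ on which $c_d$ is increasing.
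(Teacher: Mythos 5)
Your proposal is correct and follows essentially the same route as the paper: the pinching map $\cM$ you use is exactly the paper's measure-and-record map for Theorem~\ref{thm:cq-h2-sampling}, your Choi decomposition $\lambda_s \propto \binom{n-|s|_X}{k}$ (with $|s|_X$ the non-diagonal Pauli weight) matches the paper's coefficients, and the subsequent partition by $|s|_X \leqslant \ell_0$, the use of Lemmas~\ref{lem:binomial-sum} and~\ref{lem:sum-binomial} with $d^2$ replaced by $d$ (which is where $n>d$ enters), and the passage to smooth min-entropy via Lemmas~\ref{lem:hmin-h2-fullyquantum} and~\ref{lem:smoothhmin-h2} are all as in the paper's proof. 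The only differences are presentational (dephasing channel instead of an explicit outcome register, and making explicit the identity $\binom{n-\ell}{k}\binom{n}{\ell}=\binom{n}{k}\binom{n-k}{\ell}$ and the fact that pinching acts trivially on the CQ marginals).
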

See Figure \ref{fig:sampling-plots} for a plot of $C_2(h_2)$. Note that $c_d$ is an increasing function on $[0, \frac{d-1}{d}]$ with $c_d(0) = 0$ and $c_d\left(\frac{d-1}{d}\right) = \log d$. The inverse function $c_d^{-1} : [0, \log d] \to [0, \frac{d-1}{d}]$ is therefore well-defined. 
\begin{proof}
The proof is very similar to that of Theorem \ref{thm:h2-sampling}: one uses Theorem \ref{thm:general-h2} with $\cM_{A^n \to A^kS} (X) = \frac{1}{\binom{n}{k}} \sum_{S \subseteq [n], |S| = k} \sum_{x^k \in [d]^S} \bra{x^k}\tr_{S^c} [X] \ket{x^k} \otimes \proj{x^k} \otimes \proj{S}$, where $\{\ket{x^k}\}_{x^k \in [d]^S}$ is the standard basis of $A_S$. We then also have in this case $2^{-\entHtwo(A^k S | E)_{\cM(\rho)|\rho}} = \frac{1}{\binom{n}{k}} 2^{-\entHtwo(A_S|ES)_{\rho}}$.
Moreover, to apply the theorem, we compute
\[
((\cM^{\dagger} \circ \cM) \otimes \id_{\bar{A}^n} )(\Phi_{A^n \bar{A}^n}) = \frac{1}{\binom{n}{k}^2} \sum_{|S|=k} \left( \sum_{x^k} \proj{x^k}_{A_S} \otimes \proj{x^k}_{\bar{A}_S} \right) \otimes \id_{A_{S^c}\bar{A}_{S^c}}.
\]
Recall that we want to write this operator in terms of $\Phi_s = (W_s \otimes \id) \Phi_{A^n \bar{A}^n} (W_s^{\dagger} \otimes \id)$, where $W_s = W_{s_1} \otimes \cdots \otimes W_{s^n}$ is a product of generalized Pauli operators. Let us now assume that the numbering of the Pauli operators is such that $W_0, \dots, W_{d-1}$ are defined as $W_{y} \ket{x} = e^{2\pi i x y / d} \ket{x}$ for all $x,y \in [d]$. It then follows that
\begin{align*}
\frac{1}{d} \sum_{y \in [d]} \Phi_y &= \frac{1}{d} \sum_{y \in [d]} \sum_{x,x' \in [d]} e^{2\pi i (x-x') y / d} \ket{x} \bra{x'} \otimes \ket{x} \bra{x'} \\
&= \sum_{x} \proj{x}_{A} \otimes \proj{x}_{\bar{A}}.
\end{align*}
As a result, we can write
\begin{align*}
((\cM^{\dagger} \circ \cM) \otimes \id_{\bar{A}^n} )(\Phi_{A^n \bar{A}^n}) &= \frac{1}{\binom{n}{k}^2 d^n} \sum_{|S|=k} \sum_{\substack{s \in [d^2]^n \\  s_i \in [d], i \in S }} \Phi_s \\
&= \frac{1}{\binom{n}{k}^2 d^n} \sum_{s : |s|_{< d} \leqslant n-k} \binom{n-|s|_{<d}}{k} \Phi_s\ ,
\end{align*}
where $|s|_{<d} = |\{i \in [n] : s_i \in [d]\}|$.
As a result, the coefficients $\lambda_s$ from Theorem \ref{thm:general-h2} are $\lambda_s = \frac{\binom{n-|s|_{<d}}{k}}{d^{n}\binom{n}{k}^2}$, which only depends on $|s|_{<d}$ and is a decreasing function of $|s|_{<d}$. As before, it is natural to choose the partition $\mfS_+ \cup \mfS_-$ from Theorem \ref{thm:general-h2} of the form $\mfS_+ = \{ s \in [d^2]^n : |s|_{<d} \leqslant \ell_0\}$ and $\mfS_- = \{s \in [d^2]^n : |s|_{<d} > \ell_0\}$ for a value of $\ell_0 \in \{0, \dots, n\}$ to be chosen as a function of $h_2$. Applying Theorem \ref{thm:general-h2}, we have
\begin{align}
2^{-\htwo(A_S|ES)_{\rho}} &\leqslant \sum_{\ell=0}^{\ell_0} \frac{\binom{n-\ell}{k}}{d^{n}\binom{n}{k}} \binom{n}{\ell}(d^2-d)^{\ell} d^{n-\ell} 2^{-h_2 n} +  \frac{\binom{n-\ell_0-1}{k}}{\binom{n}{k}} \notag \\
&\leqslant 2^{-h_2 n}\sum_{\ell=0}^{\ell_0} \binom{n-k}{\ell} (d-1)^{\ell} +  \left(\frac{n-\ell_0 - 1}{n}\right)^k. \label{eqn:sum-split-cq}
\end{align}
To obtain a simple bound on the first term, we use the same Lemma \ref{lem:binomial-sum} as in the proof of Theorem~\ref{thm:h2-sampling} replacing $d^2$ by $d$. Equation \eqref{eqn:sum-split-cq} then becomes
\begin{align*}
2^{-\entHtwo(A_S|ES)_{\rho}}
&\leqslant \max\left(\frac{n-\ell_0-1}{n}, \frac{1}{d}\right)^k \left( 2^{-h_2n} n^2 \binom{n}{\ell_0} (d-1)^{\ell_0} + 1\right).
\end{align*}

%
%
%
%

We now determine the value of $\ell_0$ as a function of $h_2$.
Observe that using Lemma \ref{lem:sum-binomial}, we have $\binom{n}{\ell_0} (d^2-1)^{\ell_0} \leqslant 2^{nh(\ell_0/n)}(d-1)^{\ell_0} = 2^{n c_d(\ell_0/n)}$ provided $\ell_0 \leqslant \frac{d-1}{d}n$. 
We define $\ell_0$ to be the largest integer that is at most $\frac{d-1}{d}n$ such that $c_d(\ell_0/n) \leqslant h_2$. As a result, we have
\begin{align}
\label{eqn:bound-in-terms-l0-cq}
2^{-\entHtwo(A_S|ES)_{\rho}}
&\leqslant \max\left(\frac{n-\ell_0-1}{n}, \frac{1}{d} \right)^k \left(n^2 + 1\right).
\end{align}
If the maximum is $1/d$, then we directly get the desired result.
Now we use the maximality of $\ell_0$ to say that $(\ell_0+1)/n > c_d^{-1}(h_2)$. 
Finally,
\begin{align*}
	k \log \left( \frac{n-\ell_0-1}{n} \right) &= k\log\left( 1- \frac{\ell_0+1}{n} \right)\\
&\leqslant k\log(1 - c_d^{-1}(h_2))\\
&= -k C_d(h_2).
\end{align*}
By plugging these inequalities into \eqref{eqn:bound-in-terms-l0-cq}, we obtain the desired result. 
\end{proof}

\subsubsection{An upper bound on the classical rate function}\label{sec:cq-upper-bound}
Like in the quantum case, one can find an upper bound for the rate function. Here, our upper bound will even hold for non-conditional entropy (i.e., when $E$ is trivial).
\begin{defin}
	We define the optimal classical rate function $C_d^{\opt}$ as
	\[ C_d^{\opt}(h_2) :=  \liminf_{n \geqslant 1} \left( \min_{k \in [n], \rho_{X^nE} \text{ such that } \frac{1}{n} \htwo(X^n|E) \geqslant h_2} \frac{1}{k} \htwo(X_S|ES)_{\rho} \right), \]
	where $X^n=X_1,\ldots,X_n$ is comprised of $n$ dits of dimension $d$.
\end{defin}
We will now derive an upper bound on the rate function that will show that our result is fairly close to optimal for small $h_2$ and small $k$. We will derive our upper bound by considering the uniform distribution over strings of fixed Hamming weight. As in the fully quantum case, it will turn out that distributions of small Hamming weight still have a relatively high $h_2$ compared to the probability of getting a 0 in the sample, and this yields an average entropy for the sample that is much lower than $kh_2$.

\begin{thm}
\label{thm:upper-bound-rate-cq}
	It holds that $C_d^{\opt}(h_2) \leqslant -\log(1 - 2c_d^{-1}(h_2))$.
\end{thm}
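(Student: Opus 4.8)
The plan is to mirror the proof of Theorem~\ref{thm:upper-bound-rate-qq}, replacing the maximally entangled state corrupted by a fixed-weight Pauli error with its classical analogue: the uniform distribution over strings of a fixed Hamming weight. Concretely, I would take $E$ trivial and let $X^n$ be uniformly distributed over the set of strings in $[d]^n$ having exactly $w$ nonzero coordinates, so that $p(x^n) = \left(\binom{n}{w}(d-1)^w\right)^{-1}$ on its support. The two quantities to estimate are $\htwo(X^n)$ and the average collision entropy $\htwo(X_S|S)$ of a random size-$k$ sample; everything then follows by optimizing $w$ and taking $n \to \infty$ with $k/n \to 0$.

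For the entropy of the full string, flatness over $N := \binom{n}{w}(d-1)^w$ outcomes gives $\htwo(X^n) = \log N = \log\binom{n}{w} + w\log(d-1)$, which by the elementary estimate $\frac{2^{nh(w/n)}}{n+1} \leqslant \binom{n}{w} \leqslant 2^{nh(w/n)}$ lies within $\log(n+1)$ of $n\,c_d(w/n)$. Hence choosing $w$ to be the least integer with $c_d(w/n) \geqslant h_2 + \frac{\log(n+1)}{n}$ guarantees $\htwo(X^n) \geqslant h_2\,n$, and by continuity and monotonicity of $c_d^{-1}$ one gets $w/n \to c_d^{-1}(h_2)$ as $n \to \infty$ (the degenerate range $c_d^{-1}(h_2) \geqslant 1/2$, where the right-hand side of the theorem is not even defined, is discarded).

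For the sample, I would invoke the conditioning identity $2^{-\htwo(X_S|S)_{\rho}} = \mbE_{S \subseteq [n], |S|=k}\, 2^{-\htwo(X_S)_{\rho}}$ (Lemma~\ref{lem:condition-cl}, also used in the proof of Theorem~\ref{thm:h2-sampling}), and observe that $2^{-\htwo(X_S)_{\rho}} = \sum_{x^k} p_S(x^k)^2 = \pr{X_S = X'_S}$ for two independent copies $X, X'$ of $X^n$. For fixed realizations $x, x'$, the substrings agree on $S$ whenever $S$ avoids the at most $2w$ coordinates in $\supp(x) \cup \supp(x')$, so averaging over $S$ yields the lower bound $\frac{\binom{n-2w}{k}}{\binom{n}{k}} \geqslant \left(\frac{n-2w-k}{n-k}\right)^k = \left(1 - \frac{2w}{n-k}\right)^k$. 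Therefore $\frac{1}{k}\htwo(X_S|S)_{\rho} \leqslant -\log\!\left(1 - \frac{2w}{n-k}\right)$, and letting $n \to \infty$ along a sequence with $k = k(n) \to \infty$ but $k/n \to 0$ (e.g.\ $k = \lfloor \sqrt{n} \rfloor$), the right-hand side tends to $-\log\!\left(1 - 2c_d^{-1}(h_2)\right)$. Substituting into the definition of $C_d^{\opt}$ gives the claimed bound.

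There is no real conceptual obstacle here: the argument is essentially a line-by-line transcription of the fully quantum case. The only genuine difference is the disappearance of the factor $d^k$ that appeared in Theorem~\ref{thm:upper-bound-rate-qq} — it originated from $\tr[\Phi_s^2]$, which has no classical counterpart — and this is exactly why the classical bound has no additive $\log d$ term. The remaining care is purely bookkeeping: making sure the $O(\log n)$ corrections to $\htwo(X^n)$ are absorbed into the rate in the limit, and handling the joint limit $k \to \infty$, $k/n \to 0$ cleanly inside the $\liminf$ defining $C_d^{\opt}$.
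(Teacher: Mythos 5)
Your proposal is correct and is essentially the paper's own argument: the same example (uniform distribution over strings of fixed Hamming weight $w$ with trivial $E$), the same collision-probability lower bound via the sample avoiding the supports of two independent copies, and the same limit $n\to\infty$, $k/n\to 0$. The extra bookkeeping you add (explicitly choosing $w$ from $h_2$ and taking $k=\lfloor\sqrt{n}\rfloor$) only makes the paper's slightly terser limiting step explicit.
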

\begin{proof}
	Let $E$ be trivial, and consider the state \[ \rho_{X^n} = |\{s, |s|=w \}|^{-1} \sum_{s, |s|=w} \ket{s}\bra{s} \]
	for some particular $w$. We can compute its collision entropy:
	\begin{align*}
		\tr[\rho_{X^n}^2] &= |\{s, |s|=w\}|^{-1}\\
		&= {n \choose w}^{-1} (d-1)^{-w}\\
		&\leqslant 2^{-nh(w/n) - w\log(d-1) + \log n}.
	\end{align*}
	Hence, $h_2 \geqslant c_d(w/n) - \frac{1}{n}\log n$.

	Now, let us compute the collision entropy of the sample when $\frac{k}{n} \rightarrow 0$. Fix a pair of strings $s$ and $s'$ with weight $w$. Choosing a random subset of size $k$, the corresponding substrings will be the same if they avoid all positions where $s$ or $s'$ are non-zero. As a result the collision probability for the sampled substring is at least
	\begin{align*}
		\frac{n-2w}{n} \cdots \frac{n-2w-k+1}{n-k+1}
		&\geqslant \left( \frac{n-k-2w}{n-k} \right)^k \\
		&= \left( 1 - 2\left( \frac{w}{n} \right) \left( \frac{n}{n-k} \right) \right)^k\\
		&\geqslant \left( 1 - 2c_d^{-1}\left(h_2 + \frac{1}{n}\log n\right) \left( \frac{n}{n-k} \right) \right)^k.
	\end{align*}
	Taking the limit over $n \rightarrow \infty$ and $\frac{k}{n} \rightarrow 0$, we get that
	\[ 2^{-\htwo(A_S|E)} \geqslant (1 - 2c_{d}^{-1}(h_2))^k. \]	
	This directly yields the theorem.
	
	\comment{
	\textbf{Remark in trying to obtain a better bound.} Note that in expectation the number of positions where both $s$ and $s'$ are non-zero is $w^2/n$. So for example assuming that the expectation and the median of this distribution coincide, we have that with probability at least $1/2$ on the choice of $s$ and $s'$, the number of positions where one of the strings is non-zero is $2w - w^2/n$. Then the bound we obtain is $C_d^{\opt}(h_2) \leqslant -\log(1 - 2c_d^{-1}(h_2) + \left(c_d^{-1}(h_2) \right)^2)$.
	}
\end{proof}

\subsection{Dimension bounds for random access codes}
\label{sec:qrac}
One application of our sampling results is to bound the dimension of quantum random access codes.
To translate a result about min-entropy sampling into a result concerning random access codes, one simply considers the system $E$ to be composed of $m$ bits or qubits and 
then considers the special case of a uniform distribution on $X_1 \dots X_n$. That is, the state $\rho_{X^nE}$ is of the form
\begin{align}
	\rho_{X^nE} = \frac{1}{2^n} \sum_{x^n \in \01^n} \proj{x^n} \otimes \rho_E^{x^n}\ .
\end{align}
The quantity of interest when studying a random access encoding of a classical string $X^n$ is the minimal dimension of $E$ needed to recover any subset of size $k$ of the bits
with some desired probability $p$. Recall the operational interpretation of the conditional min-entropy $\hmin(X_S|ES)$ as the best probability for guessing the bitstring $X_S$ when having access to the system $E$. Thus, a lower bound on the min-entropy $\hmin(X_S|ES)$ directly gives an upper bound on the probability of successfully recovering a randomly chosen system $S$ of size $k$. The latter is exactly the success probability of $k$-out-of-$n$ random access code as defined in \cite{BARdW08}.

More precisely, using Lemma \ref{lem:hmin-h2}, Theorem \ref{thm:cq-h2-sampling} directly leads to a lower bound on the success probability $p$ of $k$-out-of-$n$ using $m$ qubits $p \leqslant \sqrt{2^{-kC_d(1-m/n) + \log (n^2+1)}}$. Compared to \cite{BARdW08}, this bound is better when $m$ is close to $n$. Specifically, when $m/n > 0.721$, the technique of \cite{BARdW08} does not give any probability bound. On the other hand, when $m/n$ becomes smaller, their probability bound becomes smaller. $k$-out-of-$n$ random access codes have also been studied in \cite{Wul10} and nontrivial upper bounds on the success probabilities are obtained for all values of $m$. The exponent of the success probability is illustrated in Figure \ref{fig:sampling-plots} (note that the plot for $C_d$ should be divided by two to interpret it as a guessing probability).

One could similarly define fully quantum random access codes. In this setting, we want to store $n$ qudits $A_1,\ldots,A_n$ of information into $m$ qudits so that a subset of $k$ qudits chosen at random can be recovered. Given $n$ and $m$, one can define the maximum average fidelity $F_{n,m,k}$ with which $k$ qudits can be recovered. As before, our goal will be to bound the dimension necessary to achieve a desired fidelity, or equivalently, establish an upper bound on the achievable fidelity as a function of the dimension.

\begin{thm}
Let $n > d^2$. For any $m \leqslant n$ and $1 \leqslant k \leqslant n$
\[
F^2_{n,m,k} \leqslant 2^{-\demi k \left( R_d\left(-\frac{m}{n} \log d \right) + \log d \right) + \demi \log (n^2+1)  } \\
\]
\end{thm}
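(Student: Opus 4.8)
The plan is to reduce the statement about fully quantum random access codes to the quantum-quantum sampling bound of Theorem~\ref{thm:h2-sampling}, via the operational interpretation of the collision entropy. First I would set up the random access code scenario in the standard way: an encoding is a channel $\mathcal{E}_{A^n \to E}$ with $|E| = d^m$, and a decoder for a subset $S$ is a channel $\mathcal{D}^S_{ES \to A_S}$, where the average fidelity $F_{n,m,k}$ is the maximum over $\mathcal{E}$ and $\{\mathcal{D}^S\}$ of $\mbE_{|S|=k} F(\Phi^N_{A_S \bar{A}_S}, \ldots)^2$ when the encoder acts on half of maximally entangled pairs $\Phi_{A^n \bar A^n}$. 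The key point is that applying the encoder to half of $n$ maximally entangled states produces a state $\rho_{A^n E}$ with $\htwo(A^n|E)_\rho \geq \hmin(A^n|E)_\rho \geq -m\log d$, simply because $|E| = d^m$ forces $\hmin(A^n|E) \geq -\log|E| = -m \log d$ (this is the standard dimension bound on the min-entropy, which follows from $\rho_{A^nE} \leq \id_{A^n} \otimes \rho_E \cdot \|\rho_E^{-1}\|^{-1}$-type reasoning, or directly from the operational form~\eqref{eq:minDef}).

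Next I would invoke the operational interpretation of $\htwo$ from~\eqref{eq:H2again}, or rather the corresponding statement for the sampled system: for any fixed $S$, the best fidelity of recovering $A_S$ from $ES$ using \emph{any} recovery map is at most $\sqrt{2^{-\htwo(A_S|ES)_\rho}/|A_S|}$ up to the usual slack — more precisely, by~\eqref{eqn:def-hmin} and the operational form~\eqref{eq:minDef}, the optimal recovery fidelity squared is $2^{-\hmin(A_S|ES)}/d^k$, and $\hmin \geq$ is controlled by $\htwo$ via Lemma~\ref{lem:hmin-h2} (giving $\hmin(A_S|ES) \geq \htwo(A_S|ES) - $ a small term, or one can work directly). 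Averaging the squared fidelity over $S$ and using concavity/linearity, $F^2_{n,m,k} \leq \mbE_S \left[ \tfrac{1}{d^k} 2^{-\hmin(A_S|ES)_\rho} \right] \lessapprox \tfrac{1}{d^k}\mbE_S 2^{-\htwo(A_S|ES)_\rho}$. Now apply Theorem~\ref{thm:h2-sampling} (its inequality~\eqref{eqn:h2-sampling}) with $h_2 = \htwo(A^n|E)_\rho / n \geq -\tfrac{m}{n}\log d$; since the right-hand side $2^{-kR_d(h_2) + \log(n^2+1)}$ is decreasing in $h_2$ (because $R_d$ is nondecreasing), we may substitute the lower bound $-\tfrac{m}{n}\log d$ for $h_2$ to get $\mbE_S 2^{-\htwo(A_S|ES)_\rho} \leq 2^{-k R_d(-\frac{m}{n}\log d) + \log(n^2+1)}$. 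Dividing by $d^k$ turns $-kR_d(\cdot)$ into $-k(R_d(\cdot) + \log d)$, and taking square roots (the fidelity bound involves $F^2$, which is already what the displayed inequality controls — so actually I'd be careful: the $\tfrac12$ exponents in the statement come from relating $F^2$ to $2^{-\htwo}$ through the square-root in~\eqref{eq:H2again}), yielding exactly $F^2_{n,m,k} \leq 2^{-\frac12 k(R_d(-\frac{m}{n}\log d)+\log d) + \frac12 \log(n^2+1)}$.

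The main obstacle I expect is bookkeeping the precise relationship between the \emph{average} recovery fidelity $F_{n,m,k}$ and the collision/min-entropy of the sampled subsystem, and making sure the quantifier order is right: the random access code first fixes an encoder, then we average over $S$ with the best decoder for each $S$, whereas the sampling theorem gives a bound on $\mbE_S 2^{-\htwo(A_S|ES)}$ for the fixed state $\rho_{A^nE}$ produced by that encoder. Since the optimal $S$-dependent recovery fidelity is a \emph{function} of $\htwo(A_S|ES)_\rho$ (via Lemma~\ref{lem:hmin-h2} and~\eqref{eq:minDef}) and the relevant function $t \mapsto 2^{-t}$ used in averaging is exactly the one appearing in the sampling bound, Jensen is not even needed — the averages line up directly. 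The only genuine care needed is whether to route through $\hmin$ (losing a constant/$\log$ factor via Lemma~\ref{lem:hmin-h2}) or to phrase the operational interpretation of recovery fidelity directly in terms of $\htwo$ using the pretty-good recovery map~\eqref{eq:H2again} as an explicit (suboptimal) decoder bound; the latter is cleaner and avoids extra slack, so I would use~\eqref{eq:H2again} to upper bound the best fidelity for each $S$ and then average.
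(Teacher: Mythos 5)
Your overall reduction is the same as the paper's: bound $\htwo(A^n|E)\geqslant -m\log d$ from the dimension of the storage, identify the optimal recovery fidelity for a fixed $S$ with $2^{-\hmin(A_S|ES)}/d^k$ via \eqref{eq:minDef}, and feed the sampling bound of Theorem~\ref{thm:h2-sampling} with $h_2\geqslant -\frac{m}{n}\log d$, using that $R_d$ is nondecreasing. The gap is in the step that converts the bound on $\htwo(A_S|ES)$ into an upper bound on the fidelity, and it occurs in two places. First, \eqref{eq:H2again} says that the \emph{pretty good} recovery map achieves fidelity squared $2^{-\htwo}/|A|$; since that map is a particular (suboptimal) decoder, this is a \emph{lower} bound on the best achievable recovery fidelity, so it cannot by itself ``upper bound the best fidelity for each $S$'' as you propose in your last paragraph. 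What you actually need is the Barnum--Knill closeness-to-optimality statement $F_{\mathrm{opt}}^2\leqslant F^{\mathrm{pg}}$, which is exactly Lemma~\ref{lem:hmin-h2-fullyquantum} (not Lemma~\ref{lem:hmin-h2}, which only applies when $A_S$ is classical): $\htwo(A_S|ES)\leqslant 2\hmin(A_S|ES)+k\log d$, i.e.\ $2^{-\hmin(A_S|ES)}\leqslant 2^{-\frac{1}{2}(\htwo(A_S|ES)-k\log d)}$. Second, your intermediate claims ``$\hmin(A_S|ES)\geqslant \htwo(A_S|ES)$ minus a small term'' and $\mbE_S\,2^{-\hmin(A_S|ES)}\lessapprox \mbE_S\,2^{-\htwo(A_S|ES)}$ go in the wrong direction (one always has $\hmin\leqslant\htwo$, hence $2^{-\hmin}\geqslant 2^{-\htwo}$); if they were true they would yield the theorem \emph{without} the factors $\frac{1}{2}$ in the exponent, which is stronger than what this method can give and inconsistent with the final formula you state.

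Once the correct relation is in place, the $\frac{1}{2}$'s appear for a concrete reason and one more small step is needed, contrary to your remark that ``Jensen is not even needed'': $\mbE_S\,2^{-\hmin(A_S|ES)}\leqslant \mbE_S\sqrt{d^k\,2^{-\htwo(A_S|ES)}}\leqslant \sqrt{d^k\,\mbE_S\,2^{-\htwo(A_S|ES)}}$ by concavity of the square root, and only then does Theorem~\ref{thm:h2-sampling} apply to $\mbE_S\,2^{-\htwo(A_S|ES)}$. Dividing by $d^k$ and substituting $h_2\geqslant-\frac{m}{n}\log d$ gives exactly the exponent $-\frac{1}{2}k\bigl(R_d(-\frac{m}{n}\log d)+\log d\bigr)+\frac{1}{2}\log(n^2+1)$. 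This is precisely the paper's proof, so the architecture of your proposal is right; the flaw is the justification of the central inequality — attributing the square root to \eqref{eq:H2again} and treating the pretty-good-map route as a slack-free alternative, when in fact the factor $\frac{1}{2}$ comes from Lemma~\ref{lem:hmin-h2-fullyquantum} (equivalently Barnum--Knill) plus Jensen over $S$ and is unavoidable by this method.
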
 
\begin{proof}
	Let $A^n$ be the system containing the $n$ qudits to be stored and $E$ be the $m$ qudits of storage. Then, for any $\rho_{A^nE}$, we have $\entHtwo(A^n|E) \geqslant -m\log d$. Using Theorem \ref{thm:h2-sampling} and Lemma \ref{lem:hmin-h2-fullyquantum}, we have
\begin{align*}
	2^{-\entHmin(A_S|ES)_{\sigma}} &= \mbE_{S \subseteq [n], |S|=k} 2^{-\hmin(A_S|E)_{\rho}} \\
&\leqslant \mbE_{S \subseteq [n], |S|=k} 2^{-\frac{1}{2}(\htwo(A_S|E)_{\rho} - k \log d)} \\
&\leqslant 2^{-\demi \left( kR_d\left(-\frac{m}{n} \log d \right) - \log (n^2+1) -k \log d \right)} \\
&\leqslant 2^{-\demi k \left( R_d\left(-\frac{m}{n} \log d \right) - \log d \right) + \demi \log (n^2+1),  } \\
\end{align*}
where $\sigma_{A^nES} = \rho_{A^nE} \otimes \frac{\ident_S}{\binom{n}{k}}$, with $S$ representing the choice of subset of $k$ qudits we want to recover. Now observe that $2^{-\entHmin(A_S|ES)} = 2^{k\log d} \max_{\cE_{ES} \to A'_S} F(\Phi_{A_SA'_S}^N, \id_{A_S} \otimes \cE(\rho_{ES}))^2$. The fidelity term is exactly the maximum fidelity with which the state on $A_S$ can be recovered from the system $E$.
\end{proof}

\subsection{High-order uncertainty relations against quantum side-information}

Uncertainty relations play a fundamental role in quantum information and in particular in quantum cryptography. Many of the modern security proofs for quantum key distribution are based on an uncertainty relation \cite{BCCRR10,TR11,TLGR12}. They are also at the heart of security proofs in the bounded quantum storage model \cite{DFSS05, DFRSS07, BFW12}. An uncertainty relation is a statement about a guaranteed uncertainty in the outcome of a measurement in a randomly chosen basis. We refer the reader to \cite{WW10} for a survey on uncertainty relations.

\subsubsection{Uncertainty relation for BB84 measurements}
\label{sec:ur-bb84}
Specifically, here we consider a system $A^n$ of $n$ qubits. Then we measure each one of these qubits in either the standard basis (labeled $0$ with vector $\ket{0}, \ket{1}$) or the Hadamard basis (labeled $1$ with vectors $\ket{+} = {(\ket{0}+\ket{1})}/\sqrt{2}, \ket{-} = {(\ket{0}-\ket{1})}/\sqrt{2}$). More precisely, choose a random vector $\Theta^n \in \{0, 1\}^n$ and measure qubit $i$ in the basis specified by the $i$-th component of $\Theta^n = \Theta_1,\ldots,\Theta_n$. Call the outcome $X_i$. An uncertainty relation is a statement about the amount of uncertainty in the random variable $X^n = X_1,\ldots,X_n$ given the knowledge of the basis choice $\Theta^n$. The uncertainty is often measured in terms of the Shannon entropy. However, for the applications we consider here, the measure of uncertainty needs to be stronger, i.e., we should use a higher order entropy like $\hmin$ or $\htwo$. Such an uncertainty relation has been established in \cite{DFRSS07}:
\begin{equation}
\label{eq:dfrss07}
\entHmin^{\e}(X^n|\Theta^n) \gtrapprox n/2.
\end{equation}
The way this uncertainty relation was used in the context of the bounded storage model was to apply a chain rule to \eqref{eq:dfrss07} to obtain $\entHmin^{\e}(X^n|E\Theta^n) \gtrapprox n/2 - \log |E|$. There are two reasons for this inequality to be unsatisfactory: it depends on the dimension of $E$ rather than on the correlations between $A^n$ and $E$, and it becomes trivial when $\htwo(A^n|E) < -n/2$ as this implies $\log |E| > n/2$. 

It is simple to see that if the system $A^n$ is maximally entangled with some system $E$, then the outcome $X^n$ of this measurement can be perfectly predicted by having access to $E$. In other words, if the conditional entropy $\htwo(A^n|E) = -n$, then $X^n$ can be correctly guessed with probability $1$. The following theorem provides a converse: if $\htwo(A^n|E) \geqslant -(1-\e) n$ for $\e > 0$, then $X^n$ cannot be guessed with probability better than $2^{-n \delta(\e)}$ with $\delta(\e) > 0$ whenever $\e > 0$. 
\begin{figure}
	\centering
	\begin{tikzpicture}[scale=4]
		\draw[very thin,color=gray,step=0.25] (-1,0) grid (1,1);
		\draw[->] (-1,0) -- (1,0) node[right] {$h_2$};
		\foreach \x in {-1, -0.75, -0.5, -0.25, 0, 0.25, 0.5, 0.75, 1}
			\node[below,font=\footnotesize] at (\x cm,0) {\x};
		\draw[->] (0,0) -- (0,1) node[above] {Uncertainty rate};
		\foreach \y in {0, 0.25, 0.5, 0.75, 1}
			\node[left,font=\footnotesize] at (0,\y cm) {\y};
		\draw[blue,thick] plot[smooth, domain=0.0001:0.5, variable=\t] ({-\t*log2(\t) - (1-\t)*log2(1-\t) + \t - 1}, {\t});    
		\draw[blue,thick] plot[smooth, domain=0.5:1, variable=\t] (\t, \t);    
		
		\draw[brown,densely dashed,thick] plot[smooth, domain=0.0001:0.5, variable=\t] ({-\t*log2(\t) - (1-\t)*log2(1-\t) + \t*log2(3) - 1}, {\t*log2(3)});
		\draw[brown,densely dashed,thick] plot[smooth, domain=0.793:1, variable=\t] (\t, \t);
		
		\draw[red,dashdotted,thick] plot[smooth, domain=0.0001:0.5850, variable=\t] ({-\t}, {log2(3)-1-\t});
		\draw[red,dashdotted,thick] plot[smooth, domain=0.0001:1, variable=\t] (\t, 0.5850);		
	\end{tikzpicture}
	\caption{Plot of the function $\gamma(h_2)$ (\protect\tikz[baseline=-0.5ex]{\protect\draw[blue,thick] (0,0) -- (0.5,0);}) from Theorem \ref{thm:ur-h2-bb84} giving a lower bound on the uncertainty of the outcome of BB84 measurement as a function of the entropy rate $h_2$ of the state being measured. For measurements in the six-state bases, the uncertainty rate function we obtain in Theorem \ref{thm:ur-h2-mub} is $\gamma_2(h_2)$ (\protect\tikz[baseline=-0.5ex]{\protect\draw[brown,densely dashed,thick] (0,0) -- (0.5,0);}). For comparison, we also plot the uncertainty rate function proved in~\cite{BFW12} (\protect\tikz[baseline=-0.5ex]{\protect\draw[red,dashdotted,thick] (0,0) -- (0.5,0);}).}
	\label{fig:uncertainty-plots}
\end{figure}

\begin{thm}
\label{thm:ur-h2-bb84}
Let $\rho_{A^nE} \in \cS(A^nE)$ where $A^n$ is an $n$-qubit space and define $h_2 = \frac{\entHtwo(A^n|E)_{\rho}}{n}$. 
Then we have
\begin{align*}
\entHtwo(X^n|E\Theta^n)_{\rho} &\geqslant n \gamma(h_2) - 1 
\end{align*}
where $\rho_{X^nE\Theta^n} = \frac{1}{2^n} \sum_{x^n \in \{0,1\}^n, \theta^n \in \{0, 1\}^n} \proj{x^n} \bra{x^n} H^{\theta^n} \rho_{A^nE} H^{\theta^n} \ket{x^n} \otimes \proj{\theta^n}$ is the state obtained when system $A^n$ is measured in the basis defined in the register $\Theta^n$ and the function $\gamma$ is defined by 
\[
\gamma(h_2) = \left\{
\begin{array}{ll}
h_2 & \text{if } h_2 \geqslant 1/2 \\
g^{-1}(h_2) & \text{if } h_2 < 1/2.
\end{array} \right.
\]
with $g(\alpha) = \binent(\alpha) + \alpha - 1$.
\end{thm}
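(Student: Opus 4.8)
The plan is to apply Theorem~\ref{thm:general-h2} to the channel that performs the random BB84 measurement and records both outcome and basis,
\[
\cM_{A^n\to X^n\Theta^n}(X)=\frac{1}{2^n}\sum_{\theta^n\in\01^n}\Big(\sum_{x^n\in\01^n}\bra{x^n}H^{\theta^n}XH^{\theta^n}\ket{x^n}\,\proj{x^n}\Big)\otimes\proj{\theta^n},
\]
which is trace preserving and sends $\rho_{A^nE}$ to the state $\rho_{X^nE\Theta^n}$ of the statement. First I would record the reduction: since $\Theta^n$ is a uniform classical register and the measurement is trace preserving, the same chain-rule argument as in the proof of Theorem~\ref{thm:h2-sampling} (via Lemma~\ref{lem:condition-cl}) gives $2^{-\htwo(X^n\Theta^n|E)_{\cM(\rho)|\rho}}=2^{-n}\,2^{-\htwo(X^n|E\Theta^n)_\rho}$, so it suffices to upper bound the left-hand side of~\eqref{eqn:general-h2}.

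The heart of the argument is checking the diagonalization hypothesis of Theorem~\ref{thm:general-h2} and reading off the $\lambda_s$. A short Kraus computation shows that $\cM^\dagger\circ\cM=\frac{1}{2^{2n}}\sum_{\theta^n}\bigotimes_{i=1}^n\cD_{\theta_i}$, where $\cD_0$ (resp.\ $\cD_1$) completely dephases a qubit in the computational (resp.\ Hadamard) basis. Since $\cD_0(\cdot)=\frac12(\cdot+Z\cdot Z)$ and $\cD_1(\cdot)=\frac12(\cdot+X\cdot X)$, one has, in the Bell basis of Section~\ref{sec:prelim}, $(\cD_0\otimes\id)(\Phi_{A\bar A})=\frac12(\Phi_0+\Phi_3)$ and $(\cD_1\otimes\id)(\Phi_{A\bar A})=\frac12(\Phi_0+\Phi_1)$. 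Tensoring over the $n$ qubits, expanding the product, and summing over $\theta^n$ (for a fixed target string $s$ the basis choice is forced on $\supp(s)$ and free elsewhere, contributing $2^{n-|s|}$ terms) yields $\big((\cM^\dagger\circ\cM)\otimes\id_{\bar A^n}\big)(\Phi_{A^n\bar A^n})=\frac{1}{2^{2n}}\sum_{s\in\{0,1,3\}^n}2^{-|s|}\,\Phi_s$. Hence $\lambda_s=2^{-|s|}/2^{2n}$ when $s\in\{0,1,3\}^n$ and $\lambda_s=0$ otherwise; in particular $\lambda_s$ depends only on $|s|$ and is decreasing in it (a consistency check is $\sum_s\lambda_s\,2^n=1$, matching $\cM^\dagger\circ\cM(\id)=2^{-n}\id$).

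Next I would follow the template of Theorem~\ref{thm:h2-sampling}: take $\mfS_+=\{s:|s|\le\ell_0\}$ and $\mfS_-=\{s:|s|>\ell_0\}$. Then $\sum_{s\in\mfS_+}\lambda_s=\frac{1}{2^{2n}}\sum_{\ell=0}^{\ell_0}\binom n\ell$ (the $2^\ell$ weight-$\ell$ strings in $\{0,1,3\}^n$ each contribute $2^{-\ell}/2^{2n}$) and $\max_{s\in\mfS_-}\lambda_s=2^{-(\ell_0+1)}/2^{2n}$. Substituting into~\eqref{eqn:general-h2}, using $\htwo(A^n|E)_\rho=nh_2$, and multiplying through by $2^n$ gives
\[
2^{-\htwo(X^n|E\Theta^n)_\rho}\ \le\ \frac{1}{2^n}\Big(\sum_{\ell=0}^{\ell_0}\binom n\ell\Big)2^{-h_2 n}\ +\ 2^{-(\ell_0+1)}.
\]
It then remains to choose $\ell_0=\ell_0(h_2)$ so that both terms are at most $2^{-n\gamma(h_2)}$. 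For $h_2\ge 1/2$ (where $\gamma(h_2)=h_2$) take $\ell_0=\lceil nh_2\rceil-1$: the first term is $\le 2^{-h_2 n}$ (bounding the binomial sum crudely by $2^n$) and the second is $2^{-\lceil nh_2\rceil}\le 2^{-h_2 n}$. For $h_2<1/2$ (where $\gamma(h_2)=g^{-1}(h_2)$, $g$ being increasing on $[0,1/2]$) take $\ell_0=\lfloor n\,g^{-1}(h_2)\rfloor\le n/2$: by Lemma~\ref{lem:sum-binomial} the first term is $\le 2^{n(h(\ell_0/n)-1-h_2)}\le 2^{-n g^{-1}(h_2)}$, since $h(\ell_0/n)\le h(g^{-1}(h_2))=1+h_2-g^{-1}(h_2)$, and the second term is $2^{-(\ell_0+1)}\le 2^{-n g^{-1}(h_2)}$. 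In both regimes $2^{-\htwo(X^n|E\Theta^n)_\rho}\le 2\cdot 2^{-n\gamma(h_2)}$, i.e. $\htwo(X^n|E\Theta^n)_\rho\ge n\gamma(h_2)-1$ as claimed; Corollary~\ref{cor:ur-hmin-bb84} then follows by combining $h_2\ge h_{\min}$, monotonicity of $\gamma$, and Lemma~\ref{lem:hmin-h2}.

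I expect the main obstacle to be the middle step: recognizing that the random-BB84-measurement map fits the diagonalization hypothesis of Theorem~\ref{thm:general-h2} and correctly evaluating $\lambda_s$ — in particular that only the $Z$- and $X$-type Bell components (indices $1,3$, never the singlet index $2$) appear, each with weight $2^{-|s|}/2^{2n}$. Once the $\lambda_s$ are in hand the remainder is bookkeeping analogous to the sampling proof; the only extra subtlety is the two-regime choice of the threshold $\ell_0$ needed to recover the exact rate function $\gamma$, which is forced because the estimate $\sum_{\ell\le\ell_0}\binom n\ell\le 2^{nh(\ell_0/n)}$ is only valid for $\ell_0\le n/2$.
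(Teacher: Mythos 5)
Your proposal is correct and follows essentially the same route as the paper: apply Theorem~\ref{thm:general-h2} to the qubit-wise random BB84 measurement map, use Lemma~\ref{lem:condition-cl} to trade $\htwo(X^n\Theta^n|E)_{\cM(\rho)|\rho}$ for $\htwo(X^n|E\Theta^n)_\rho$, identify $\lambda_s=2^{-|s|}/4^n$ on $\{0,1,3\}^n$ (the paper gets the same per-qubit decomposition $\frac14(\Phi_0+\frac12\Phi_1+\frac12\Phi_3)$ by writing out $\proj{00}+\proj{11}+\proj{++}+\proj{--}$ rather than via dephasing channels), and then split by weight with the same two-regime choice of $\ell_0$ and Lemma~\ref{lem:sum-binomial}. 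The only differences are cosmetic (e.g., the paper takes $\ell_0=n$ in the $h_2\geqslant 1/2$ regime, and your count of weight-$\ell$ strings should read $\binom{n}{\ell}2^\ell$, though your resulting formula is the correct one).
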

\begin{proof}
	We apply Theorem \ref{thm:general-h2} with $\cM_{A^n\to X^n \Theta^n} = \cN^{\otimes n}$ where $\cN(\rho) = \frac{1}{2} \sum_{x \in \{0,1\}, \theta \in \{0,1\}} \proj{\theta} \otimes \proj{x} \bra{x} H^{\theta} \rho H^{\theta}\ket{x}$. 
We have
\begin{align*}
2^{-\entHtwo(X^n\Theta^n|E)_{\cM(\rho)|\rho}} &= \tr\left[ \left(\rho_E^{-1/4} (\cN^{\otimes n} \otimes \id)(\rho_{A^nE}) \rho_E^{-1/4} \right)^2 \right] \\
&= \frac{1}{4^n} \sum_{\theta^n \in \{0,1\}^n} \tr\left[ \left(\rho_E^{-1/4} \sum_{x^n \in \{0,1\}^n} \proj{\theta^n} \otimes \proj{x^n} \bra{x^n} H^{\theta^n} \rho_{A^nE} H^{\theta^n} \ket{x^n} \rho_E^{-1/4} \right)^2\right] \\
&= \frac{1}{2^n} 2^{-\entHtwo(X^n|E\Theta^n)_{\rho}},
\end{align*}
where in the last line we used the expression for the entropy conditioned on a classical system (Lemma \ref{lem:condition-cl}).

We then evaluate the state
\begin{align*}
(\cN^{\dagger} \circ \cN \otimes \id)(\Phi) 
&= \frac{1}{4}\left(\proj{00} + \proj{11} + \proj{++} + \proj{--} \right) \\
&= \frac{1}{4} \left(\Phi_0 + \frac{1}{2}\Phi_1 + \frac{1}{2}\Phi_3\right)\ ,
\end{align*}
%
%
%
where $\Phi_i$ are defined in Equations \eqref{eqn:bellbasis01} and \eqref{eqn:bellbasis23}.~\footnote{Note that $\Phi_2$ is the projector on the anti-symmetric subspace and hence cannot appear in this decomposition.} 
In the notation of Theorem \ref{thm:general-h2}, we have for the map $\cM$ and for $s\in \{0,1,3\}^n$, $\lambda_s = \frac{1}{4^n} \cdot \frac{1}{2^{|s|}}$. For $s \notin \{0,1,3\}^n$, $\lambda_s = 0$. As a result, when applying Theorem \ref{thm:general-h2}, it is natural to choose the partition $\mfS_+ \cup \mfS_-$ of the form $\mfS_+ = \{ s \in [d^2]^n : |s| \leqslant \ell_0\}$ and $\mfS_- = \{s \in [d^2]^n : |s| > \ell_0\}$ for a value of $\ell_0 \in \{0, \dots, n\}$ to be chosen as a function of $h_2$. We obtain for any $\ell_0$
\begin{align}
2^{-\entHtwo(X^n|E\Theta^n)_{\rho}}
&\leqslant \sum_{\ell=0}^{\ell_0} \binom{n}{\ell} 2^{-h_2 n - n} + 2^{-\ell_0 -1} \delta_{\ell_0 \leq n-1}\ , \label{eqn:ell0}
\end{align}
where $\delta_{\ell_0 \leq n-1} = 1$ if $\ell_0 \leq n-1$ and $0$ if $\ell_0 = n$. If $h_2 \geqslant 1/2$, let $\ell_0 = n$, in which case we obtain a bound of 
\begin{align*}
2^{-\entHtwo(X^n|E\Theta^n)_{\rho}}
&\leqslant 2^{-h_2 n}.
\end{align*}
If $h_2 < 1/2$, then we are going to choose $\ell_0 \leqslant n/2$. Define the function $g(\alpha) = h(\alpha) + \alpha - 1$ and let $\alpha_0 \leqslant 1/2$ be such that $g(\alpha_0) = h_2$. We then choose $\ell_0 = \floor{\alpha_0 n}$. As a result,
\begin{align*}
\sum_{\ell=0}^{\ell_0} \binom{n}{\ell} 2^{-h_2 n - n} &\leqslant 2^{n (h(\ell_0/n) - h_2 - 1)} \\
&\leqslant 2^{n (h(\alpha_0) - h_2 - 1)} = 2^{n (-\alpha_0 + 1 + h_2  - h_2 - 1)}
= 2^{-\alpha_0 n}, 
\end{align*}
where the first inequality is due to Lemma \ref{lem:sum-binomial}. In addition, we have $2^{-\ell_0-1} \leqslant 2^{-\alpha_0 n}$. Using these bounds in \eqref{eqn:ell0}, we obtain in this case
\begin{align*}
2^{-\entHtwo(X^n|E\Theta^n)_{\rho}}
&\leqslant 2^{-\alpha_0n + 1}.
\end{align*}
Taking the logarithm leads to the desired result.
\end{proof}

The following corollary expresses the uncertainty relation described in Theorem \ref{thm:ur-h2-bb84} in terms of min-entropies, which will be more convenient for the cryptographic applications.
\begin{cor}
\label{cor:ur-hmin-bb84}
Using the same notation as in Theorem \ref{thm:ur-h2-bb84}, we have
\begin{align}
\label{eqn:ur-hmin-h2-bb84}
\entHmin(X^n|E\Theta^n)_{\rho} &\geqslant \frac{1}{2} (n \gamma(h_2) - 1) \\
&\geqslant \frac{1}{2} (n \gamma(h_{\min}) - 1).
\label{eqn:ur-hmin-hmin-bb84}
\end{align}
where $h_{\min} = \frac{\entHmin(A^n|E)_{\rho}}{n}$. Moreover, for any $\e \in (0,1]$, we have
\begin{equation}
\label{eqn:ur-smoothhmin-bb84}
\entHmin^{\e}(X^n|E\Theta^n)_{\rho} \geqslant n \gamma(h_2) - 1 - \log \frac{2}{\e^2}.
\end{equation}
\end{cor}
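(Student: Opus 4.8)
The plan is to read off all three inequalities directly from the $\htwo$-bound of Theorem~\ref{thm:ur-h2-bb84}, using only the generic comparisons between $\hmin$, $\htwo$ and $\hmin^\e$ recorded in Lemmas~\ref{lem:hmin-h2-fullyquantum}, \ref{lem:hmin-h2} and~\ref{lem:smoothhmin-h2}; there is no new structural ingredient, so the work is essentially a matter of checking that the hypotheses of those lemmas are met.

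For \eqref{eqn:ur-hmin-h2-bb84} I would observe that in the state $\rho_{X^nE\Theta^n}$ the register $X^n$ is classical (as is $\Theta^n$), so, taking $E\Theta^n$ as side information, the classical-quantum comparison of Lemma~\ref{lem:hmin-h2} applies and gives $\hmin(X^n|E\Theta^n)_\rho \geq \tfrac12\,\htwo(X^n|E\Theta^n)_\rho$. Substituting $\htwo(X^n|E\Theta^n)_\rho \geq n\gamma(h_2)-1$ from Theorem~\ref{thm:ur-h2-bb84} then yields the claim. One should only note that the quantity bounded in Theorem~\ref{thm:ur-h2-bb84} is exactly the collision entropy of that same $\rho_{X^nE\Theta^n}$ conditioned on its own $E\Theta^n$-marginal, so the two occurrences of $\htwo(X^n|E\Theta^n)_\rho$ agree.

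For the second inequality \eqref{eqn:ur-hmin-hmin-bb84} I would first record that $\gamma$ is non-decreasing: it is the identity on $[1/2,\infty)$, and on the lower range it is $g^{-1}$ with $g(\alpha)=\binent(\alpha)+\alpha-1$, which is strictly increasing on $[0,1/2]$ since $g'(\alpha)=\log\frac{1-\alpha}{\alpha}+1>0$ there, and the two branches match at $h_2=1/2$ because $g(1/2)=1/2$. Since $h_2=\htwo(A^n|E)_\rho/n \geq \hmin(A^n|E)_\rho/n = h_{\min}$ by the fully quantum comparison $\htwo\geq\hmin$ (Lemma~\ref{lem:hmin-h2-fullyquantum}), monotonicity of $\gamma$ gives $\gamma(h_2)\geq\gamma(h_{\min})$, and \eqref{eqn:ur-hmin-hmin-bb84} follows from \eqref{eqn:ur-hmin-h2-bb84}.

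For the smooth statement \eqref{eqn:ur-smoothhmin-bb84} I would note that the BB84 measure-and-record channel $\cN^{\otimes n}$ appearing in Theorem~\ref{thm:ur-h2-bb84} is trace preserving, so $\rho_{X^nE\Theta^n}$ is a genuine normalized state and Lemma~\ref{lem:smoothhmin-h2} applies directly (with conditioning operator $\rho_{E\Theta^n}$), giving $\hmin^\e(X^n|E\Theta^n)_\rho \geq \htwo(X^n|E\Theta^n)_\rho - \log\frac{2}{\e^2}$ for every $\e\in(0,1]$; plugging in Theorem~\ref{thm:ur-h2-bb84} once more gives the result. The only points requiring any care --- and hence the closest thing to an obstacle --- are the bookkeeping remark in the second paragraph and the one-line monotonicity check for $\gamma$; everything else is an immediate substitution.
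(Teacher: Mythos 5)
Your proposal is correct and follows essentially the same route as the paper: Lemma~\ref{lem:hmin-h2} applied to the classical register $X^n$ for \eqref{eqn:ur-hmin-h2-bb84}, Lemma~\ref{lem:hmin-h2-fullyquantum} (i.e.\ $h_2\geqslant h_{\min}$) for \eqref{eqn:ur-hmin-hmin-bb84}, and Lemma~\ref{lem:smoothhmin-h2} with $\sigma=\rho_{E\Theta^n}$ for \eqref{eqn:ur-smoothhmin-bb84}. Your explicit check that $\gamma$ is nondecreasing (with the branches matching at $h_2=1/2$ since $g(1/2)=1/2$) is a detail the paper leaves implicit, but it is the same argument.
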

\begin{proof}
To obtain \eqref{eqn:ur-hmin-h2-bb84}, observe that
$
\entHmin(X^n|E\Theta^n)_{\rho} 
\geqslant \frac{1}{2}\entHtwo(X^n|E\Theta^n)_{\rho},
$
using Lemma \ref{lem:hmin-h2}. To replace $h_2$ by $h_{\min}$, we use the corresponding lower bound in Lemma \ref{lem:hmin-h2-fullyquantum}. To obtain \eqref{eqn:ur-smoothhmin-bb84}, we use Lemma \ref{lem:smoothhmin-h2}.
\end{proof}

\subsubsection{Uncertainty relation for measurements in MUBs}
Consider a system $A^n$ of $n$ qudits and consider a full set of $d+1$ mutually unbiased bases (MUBs) in dimension $d$. A set of bases are said to be mutually unbiased if for any pair of vectors $\ket{v}, \ket{w}$ in two different bases, we have $|\braket{v}{w}| = d^{-1/2}$. We then measure each one of these qudits in a randomly chosen basis from this set. More precisely, choose a random vector $\Theta^n \in [d+1]^n$ and measure qudit $i$ in the basis specified by the $i$-th component of $\Theta^n$. Let $U_{\theta^n}$ be the unitary that transforms the basis $\theta^n$ into the standard basis. We prove an uncertainty relation for these measurements in the presence of quantum side information. Previously, uncertainty relations for these measurements taking into account possible quantum side information were proven in \cite{BFW12}. The main improvement here is that the uncertainty lower bound is nontrivial for any $h_2 > -\log d$. Specifically, for entropy rates $h_2 < -(\log(d+1) - 1)$, this theorem provides the first nontrivial uncertainty rates for single-qudit measurements in MUBs. However, when $h_2$ is close to $0$, the bound of \cite{BFW12} is better than the one provided here. See Figure \ref{fig:uncertainty-plots} for a comparison.

\begin{thm}
\label{thm:ur-h2-mub}
Let $\rho_{A^nE} \in \cS(A^nE)$ where $A^n$ is an $n$-qudit space and define $h_2 = \frac{\htwo(A^n|E)_{\rho}}{n}$. 
Then we have
\[
\entHtwo(X^n|E\Theta^n)_{\rho} \geqslant n \gamma_d(h_2) - 1, 
\]
where $\rho_{X^nE\Theta^n} = \frac{1}{(d+1)^n}\sum_{x \in [d]^n, \Theta^n \in [d+1]^n} \proj{x} \bra{x} U_{\theta^n} \rho_{A^nE} U^{\dagger}_{\theta^n} \ket{x} \otimes \proj{\theta^n}$ is the state obtained when system $A^n$ is measured in the basis defined in the register $\Theta^n$ and 
\[
\gamma_d(h_2) = \left\{
\begin{array}{ll}
h_2 & \text{if } h_2 \geqslant \frac{d-1}{d} \log(d+1) \\
f_d^{-1}(h_2) \log(d+1) & \text{if } h_2 < \frac{d-1}{d} \log(d+1)
\end{array}
\right.
\]
with $f_d(\alpha) = \binent(\alpha) + \alpha \log(d^2-1) - \log d$ defined as in Theorem \ref{thm:h2-sampling}.
\end{thm}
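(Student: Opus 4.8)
The plan is to apply Theorem~\ref{thm:general-h2} to the ``measure-and-record'' channel $\cM_{A^n\to X^n\Theta^n} = \cN^{\otimes n}$, where, writing $\ket{u^\theta_x} := U_\theta^{\dagger}\ket{x}$ for the $x$-th vector of the $\theta$-th MUB,
\[
\cN(\sigma) = \frac{1}{d+1}\sum_{\theta\in[d+1]}\sum_{x\in[d]}\langle u^\theta_x|\sigma|u^\theta_x\rangle\,\proj{x}_{X}\otimes\proj{\theta}_{\Theta}
\]
measures a single qudit in a uniformly random MUB and stores both the outcome and the basis label; this is precisely the channel producing the state $\rho_{X^nE\Theta^n}$ of the theorem. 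Exactly as in the proof of Theorem~\ref{thm:ur-h2-bb84}, the fact that $\Theta^n$ is a classical register of dimension $(d+1)^n$ gives, via Lemma~\ref{lem:condition-cl}, the identity $2^{-\htwo(X^n\Theta^n|E)_{\cM(\rho)|\rho}} = (d+1)^{-n}\,2^{-\htwo(X^n|E\Theta^n)_\rho}$, so it suffices to bound the left-hand side through~\eqref{eqn:general-h2}.

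The next step is to compute $((\cN^{\dagger}\circ\cN)\otimes\id_{\bar A})(\Phi_{A\bar A})$. Let $\cD_\theta(\sigma) = \sum_{x}\langle u^\theta_x|\sigma|u^\theta_x\rangle\proj{u^\theta_x}$ be the pinching onto the $\theta$-th MUB; a short calculation gives $\cN^{\dagger}\circ\cN = (d+1)^{-2}\sum_{\theta\in[d+1]}\cD_\theta$, and the defining property of a complete set of $d+1$ MUBs (equivalently, that it forms a complex projective $2$-design) yields $\sum_{\theta\in[d+1]}\cD_\theta(\sigma) = \sigma + \tr(\sigma)\,\id_d$. Applying this with $\id_{\bar A}$ attached to $\Phi_{A\bar A}$ and using $(\mathcal{E}_A\otimes\id_{\bar A})(\Phi_{A\bar A}) = \id_{A\bar A}$ for the trace-replacement map $\mathcal{E}(Y) = \tr(Y)\,\id_A$, one obtains
\[
((\cN^{\dagger}\circ\cN)\otimes\id)(\Phi_{A\bar A}) = \frac{1}{(d+1)^2}\bigl(\Phi_{A\bar A} + \id_{A\bar A}\bigr) = \frac{1}{d(d+1)}\Phi_0 + \frac{1}{d(d+1)^2}\sum_{s\neq 0}\Phi_s,
\]
using $\Phi_{A\bar A} = \Phi_0$ and $\id_{A\bar A} = \tfrac1d\sum_{s\in[d^2]}\Phi_s$. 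Tensorising over the $n$ qudits, the coefficients of Theorem~\ref{thm:general-h2} are $\lambda_s = (d(d+1))^{-n}(d+1)^{-|s|}$, which depend only on $|s|$ and are decreasing in $|s|$.

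I would then choose the partition $\mfS_+ = \{s : |s|\leqslant\ell_0\}$, $\mfS_- = \{s : |s| > \ell_0\}$ and substitute into~\eqref{eqn:general-h2}, using $\sum_{\ell=0}^{\ell_0}\binom{n}{\ell}(d^2-1)^{\ell}(d+1)^{-\ell} = \sum_{\ell=0}^{\ell_0}\binom{n}{\ell}(d-1)^{\ell}$. After multiplying through by $(d+1)^n$ this becomes
\[
2^{-\htwo(X^n|E\Theta^n)_\rho} \leqslant \frac{2^{-h_2 n}}{d^n}\sum_{\ell=0}^{\ell_0}\binom{n}{\ell}(d-1)^{\ell} + (d+1)^{-(\ell_0+1)}.
\]
For $h_2\geqslant\frac{d-1}{d}\log(d+1)$ take $\ell_0 = n$: the first term collapses to $2^{-h_2 n}$ and the second is at most $2^{-h_2 n}$ (since $h_2\leqslant\log d<\log(d+1)$), giving $\htwo(X^n|E\Theta^n)\geqslant n h_2 - 1$. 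For $h_2 < \frac{d-1}{d}\log(d+1)$, bound $\sum_{\ell=0}^{\ell_0}\binom{n}{\ell}(d-1)^{\ell}\leqslant 2^{n c_d(\ell_0/n)}$ by Lemma~\ref{lem:sum-binomial} (valid for $\ell_0/n\leqslant\frac{d-1}{d}$, with $c_d(\alpha) = \binent(\alpha)+\alpha\log(d-1)$ as in Theorem~\ref{thm:cq-h2-sampling}); since $f_d\bigl(\tfrac{d-1}{d}\bigr) = \tfrac{d-1}{d}\log(d+1)$ and $f_d$ is increasing, the value $\alpha_0 := f_d^{-1}(h_2)$ satisfies $\alpha_0 < \tfrac{d-1}{d}$, so one may set $\ell_0 = \lfloor\alpha_0 n\rfloor$. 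Substituting $f_d(\alpha_0) = h_2$ together with $f_d(\alpha) = c_d(\alpha) + \alpha\log(d+1) - \log d$ equalizes the two terms up to a factor $2$; taking $-\log$ and accounting for the $(d+1)^{-n}$ factor removed earlier yields $\htwo(X^n|E\Theta^n)_\rho \geqslant n\gamma_d(h_2) - 1$.

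The main obstacle is the evaluation of $((\cN^{\dagger}\circ\cN)\otimes\id)(\Phi_{A\bar A})$ — in particular establishing the identity $\sum_{\theta\in[d+1]}\cD_\theta(\sigma) = \sigma + \tr(\sigma)\id_d$, which encodes the $2$-design structure of a complete MUB set, and being careful with the complex conjugation implicit in passing from $\sum_{\theta,x}\proj{u^\theta_x}_{A}\otimes\overline{\proj{u^\theta_x}}_{\bar A}$ to $\id_{A\bar A} + \Phi_{A\bar A}$. Everything after that is bookkeeping strictly parallel to the proofs of Theorems~\ref{thm:ur-h2-bb84} and~\ref{thm:h2-sampling}.
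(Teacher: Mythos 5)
Your proposal is correct and follows essentially the same route as the paper's proof: the same measurement channel $\cN^{\otimes n}$, the same reduction via Lemma~\ref{lem:condition-cl}, the same eigenvalue computation $\lambda_s = (d(d+1))^{-n}(d+1)^{-|s|}$ (your pinching identity $\sum_{\theta}\cD_\theta(\sigma)=\sigma+\tr(\sigma)\id$ is just the swap-operator form of the $2$-design property used in the paper), and the same partition, case split, and bookkeeping. The only cosmetic difference is that for $\ell_0=n$ you retain a harmless extra term $(d+1)^{-(n+1)}$ instead of noting that $\mfS_-$ is empty, which costs only the factor of $2$ already absorbed in the stated $-1$.
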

\begin{proof}
	We apply Theorem \ref{thm:general-h2} with $\cM_{A^n \to X^n\Theta^n} = \cN^{\otimes n}$ where $\cN(\rho) = \frac{1}{d+1} \sum_{x \in [d], \theta \in [d+1]} \proj{\theta} \otimes \proj{x} \bra{x} U_{\theta} \rho U^{\dagger}_{\theta} \ket{x}$. Analogous to the proof of Theorem~\ref{thm:ur-h2-bb84}, it is 
	simple to see that $2^{-\entHtwo(X^n\Theta^n|E)_{\cM(\rho)|\rho}} = \frac{1}{(d+1)^n}2^{-\entHtwo(X^n|E\Theta^n)_{\rho}}$. We have in this case
\begin{align*}
((\cN^{\dagger} \circ \cN) \otimes \id)(\Phi) 
&= \frac{1}{(d+1)^2} \sum_{\theta \in [d+1], x \in [d], i,j \in [d]} \bra{x} U_{\theta} \ket{i} \bra{j} U^{\dagger}_{\theta} \ket{x} \: U_{\theta}^{\dagger} \proj{x} U_{\theta} \otimes \ket{i} \bra{j} \\
&= \frac{1}{(d+1)^2} \sum_{\theta \in [d+1], x \in [d]}  U_{\theta}^{\dagger} \proj{x} U_{\theta} \otimes \sum_{i,j \in [d]} \bra{x} U_{\theta} \ket{i} \bra{j} U^{\dagger}_{\theta} \ket{x} \, \ket{i} \bra{j} \\
&= \frac{1}{(d+1)^2} \sum_{\theta \in [d+1], x \in [d]}  U_{\theta}^{\dagger} \proj{x} U_{\theta} \otimes \top(U_{\theta}^{\dagger} \proj{x} U_{\theta}),
\end{align*}
where $\Phi$ is the unnormalized maximally entangled state across two qudits, and $\top$ denotes the transpose with respect to the standard basis.
Now we use the fact that the states $\{U_{\theta} \ket{x} \}_{\theta \in [d+1], x \in [d]}$ form a state two-design \cite{KR05}: 
\[
\sum_{\theta \in [d+1], x \in [d]}  U_{\theta}^{\dagger} \proj{x} U_{\theta} \otimes U_{\theta}^{\dagger} \proj{x} U_{\theta} = \id_{A\bar{A}} + F_{A\bar{A}},
\]
where $F_{A\bar{A}}$ denotes the swap operator $F = \sum_{x,x' \in [d]} \ket{x}\bra{x'} \otimes \ket{x'} \bra{x}$. As $(\ident \otimes \top)(F) = \Phi$, we have 
\[
(\cN^{\dagger} \circ \cN \otimes \ident)(\Phi) = \frac{1}{d+1} \cdot \frac{\id + \Phi}{d+1} = \frac{1}{d+1} \left( \frac{\Phi_0}{d+1} + \frac{\sum_{s \in [d^2]} \Phi_s}{d(d+1)} \right) = \frac{1}{d(d+1)}\left(\Phi_0 + \frac{\sum_{s \neq 0} \Phi_s}{d+1} \right).
\]
This means that for the $n$-fold tensor product $\cM = \cN^{\otimes n}$, we have using the notation of Theorem \ref{thm:general-h2} that $\lambda_s = \frac{1}{(d+1)^n d^n} \frac{1}{(d+1)^{|s|}}$ for all $s \in [d^2]^n$. As a result, when applying Theorem \ref{thm:general-h2}, it is natural to choose the partition $\mfS_+ \cup \mfS_-$ of the form $\mfS_+ = \{ s \in [d^2]^n : |s| \leqslant \ell_0\}$ and $\mfS_- = \{s \in [d^2]^n : |s| > \ell_0\}$ for a value of $\ell_0 \in \{0, \dots, n\}$ to be chosen as a function of $h_2$. We obtain for any $\ell_0$,
\begin{align}
2^{-\entHtwo(X^n|E\Theta^n)_{\rho}}
&\leqslant \sum_{\ell=0}^{\ell_0} \binom{n}{\ell} (d^2-1)^{\ell} 2^{-h_2 n} (d+1)^{-\ell} d^{-n} + (d+1)^{-\ell_0 -1} \delta_{\ell_0 \leqslant n-1} \notag \\
&= \sum_{\ell=0}^{\ell_0} \binom{n}{\ell} (d-1)^{\ell} 2^{-n h_2  - n \log d} + (d+1)^{-\ell_0 -1} \delta_{\ell_0 \leqslant n-1},\label{eqn:ell0-mub}
\end{align}
where $\delta_{\ell_0 \leqslant n-1} = 1$ if $\ell_0 \leqslant n-1$ and $0$ otherwise.
If $h_2 \geqslant \frac{d-1}{d} \log(d+1)$, let $\ell_0 = n$, in which case we obtain a bound of 
\begin{align*}
2^{-\entHtwo(X^n|E\Theta^n)_{\rho}}
&\leqslant \sum_{\ell=0}^{n} \binom{n}{\ell} (d-1)^{\ell} 2^{-n h_2  - n \log d} = 2^{-h_2 n}.
\end{align*}
If $h_2 < \frac{d-1}{d} \log(d+1)$, then we are going to choose $\ell_0 \leqslant \frac{d-1}{d}n$. Note that $f_d(\frac{d-1}{d}) = \frac{d-1}{d} \log(d+1)$. As $h_2 < \frac{d-1}{d} \log(d+1)$ and $f_d$ is nondecreasing on $[0, (d-1)/d]$, we can find $\alpha_0 \leqslant (d-1)/d$ be such that $f_d(\alpha_0) = h_2$. We then choose $\ell_0 = \floor{\alpha_0 n}$. As a result,
\begin{align*}
\sum_{\ell=0}^{\ell_0} \binom{n}{\ell} (d-1)^{\ell} 2^{-h_2 n - (\log d) n} &\leqslant 2^{n (h(\ell_0/n) + \ell_0/n \log (d-1)) - n (h_2 + \log d))} \\
&\leqslant 2^{n (h(\alpha_0) + \alpha_0 \log(d-1) - h_2 - \log d)} \\
&= 2^{n (-\alpha_0 \log(d+1) + \log d + h_2  - h_2 - \log d)}
= (d+1)^{-\alpha_0 n}, 
\end{align*}
where the first inequality is due to Lemma \ref{lem:sum-binomial}. In addition, we have $(d+1)^{-\ell_0-1} \leqslant (d+1)^{-\alpha_0 n}$. Using these bounds in \eqref{eqn:ell0}, we obtain in this case
\begin{align*}
2^{-\entHtwo(X^n|E\Theta^n)_{\rho}}
&\leqslant 2 (d+1)^{-\alpha_0n}.
\end{align*}
Taking the logarithm leads to the desired result.
\end{proof}

The following corollary expresses the uncertainty relation described in Theorem \ref{thm:ur-h2-mub} in terms of min-entropies. The proof is the same as Corollary \ref{cor:ur-hmin-bb84}. 
\begin{cor}
\label{cor:ur-hmin}
Using the same notation as in Theorem \ref{thm:ur-h2-mub}, we have
\begin{align}
\label{eqn:ur-hmin-h2}
\entHmin(X^n|E\Theta^n)_{\rho} &\geqslant \frac{1}{2} \left( n \gamma_d(h_2)-1 \right) \\
\label{eqn:ur-hmin-hmin}
&\geqslant \frac{1}{2}\left(n \gamma_d(h_{\min})-1\right).
\end{align}
where $h_{\min} = \frac{\entHmin(A|E)_{\rho}}{n}$. Moreover, for any $\e \in (0,1]$, we have
\begin{equation}
\label{eqn:ur-smoothhmin-mubs}
\entHmin^{\e}(X^n|E\Theta^n)_{\rho} \geqslant n \gamma_d(h_2) - 1 - \log \frac{2}{\e^2}.
\end{equation}
\end{cor}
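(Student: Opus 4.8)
\textbf{Proof proposal for Corollary \ref{cor:ur-hmin}.} The plan is to follow verbatim the argument used for Corollary \ref{cor:ur-hmin-bb84}, so the whole thing is a short chain of reductions feeding Theorem \ref{thm:ur-h2-mub} through the standard $\hmin$--$\htwo$ comparison lemmas. First I would establish \eqref{eqn:ur-hmin-h2}: since the register $X^n$ is the (classical) measurement outcome and $\Theta^n$ is classical, the state $\rho_{X^nE\Theta^n}$ is of the form to which Lemma \ref{lem:hmin-h2} applies, giving $\entHmin(X^n|E\Theta^n)_\rho \geqslant \tfrac12 \entHtwo(X^n|E\Theta^n)_\rho$; substituting the bound $\entHtwo(X^n|E\Theta^n)_\rho \geqslant n\gamma_d(h_2) - 1$ from Theorem \ref{thm:ur-h2-mub} yields the claim.

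Next, to pass from $h_2$ to $h_{\min}$ in \eqref{eqn:ur-hmin-hmin}, I would invoke Lemma \ref{lem:hmin-h2-fullyquantum}, which gives $\entHtwo(A^n|E)_\rho \geqslant \entHmin(A^n|E)_\rho$ and hence $h_2 \geqslant h_{\min}$. One then only needs that $\gamma_d$ is nondecreasing: on the upper branch it is the identity, and on the lower branch it equals $f_d^{-1}$, where $f_d(\alpha) = \binent(\alpha) + \alpha\log(d^2-1) - \log d$ is increasing on $[0,(d-1)/d]$ (as already used in Theorem \ref{thm:h2-sampling}), and the two branches match continuously at the threshold $\tfrac{d-1}{d}\log(d+1) = f_d(\tfrac{d-1}{d})$. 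Thus $\gamma_d(h_2) \geqslant \gamma_d(h_{\min})$ and \eqref{eqn:ur-hmin-hmin} follows from \eqref{eqn:ur-hmin-h2}.

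Finally, for the smooth version \eqref{eqn:ur-smoothhmin-mubs}, I would apply Lemma \ref{lem:smoothhmin-h2} (after noting $\rho_{X^nE\Theta^n}$ is normalized, as $\cN^{\otimes n}$ is trace preserving), which gives $\entHmin^\e(X^n|E\Theta^n)_\rho \geqslant \entHtwo(X^n|E\Theta^n)_\rho - \log\tfrac{2}{\e^2}$, and then insert Theorem \ref{thm:ur-h2-mub} once more. I do not anticipate any real obstacle here: the only points requiring the slightest care are checking that $\gamma_d$ is monotone (so that the replacement $h_2 \mapsto h_{\min}$ is valid) and verifying that the hypotheses of the three comparison lemmas (classicality of $X^n\Theta^n$, normalization of the measured state) are met, both of which are immediate from the construction of $\rho_{X^nE\Theta^n}$ in Theorem \ref{thm:ur-h2-mub}.
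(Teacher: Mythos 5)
Your proposal is correct and is exactly the paper's route: the paper's proof simply says "same as Corollary \ref{cor:ur-hmin-bb84}", i.e.\ Lemma \ref{lem:hmin-h2} for the factor $\tfrac12$ in \eqref{eqn:ur-hmin-h2}, Lemma \ref{lem:hmin-h2-fullyquantum} together with monotonicity of $\gamma_d$ for \eqref{eqn:ur-hmin-hmin}, and Lemma \ref{lem:smoothhmin-h2} for \eqref{eqn:ur-smoothhmin-mubs}. One small slip that does not affect correctness: the two branches of $\gamma_d$ do \emph{not} meet continuously at the threshold $h_2=\tfrac{d-1}{d}\log(d+1)$ (the left limit is $f_d^{-1}\bigl(\tfrac{d-1}{d}\log(d+1)\bigr)=\tfrac{d-1}{d}$, while the value on the upper branch is $\tfrac{d-1}{d}\log(d+1)>\tfrac{d-1}{d}$), but since the jump is upward, $\gamma_d$ is still nondecreasing and your replacement of $h_2$ by $h_{\min}$ remains valid.
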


\subsection{Security in the noisy-storage model}
\label{sec:nsm}
\subsubsection{General noisy storage model}

We now use our new uncertainty relations to prove that the primitive weak string erasure can be secure as soon as one of the parties has a memory that cannot reliably store $n$ qubits. 
In weak string erasure, the objective is to generate a string $X^n$ such that Alice holds $X^n$ and Bob holds a random subset $I \subseteq [n]$ and the bits $X_I$ of $X^n$ corresponding to the indices in $I$. Randomly chosen here means that each index $i \in [n]$ has probability $1/2$ of being in $I$. The security criterion is that at the end of the protocol, a cheating Bob should have a state satisfying $\entHmin(X^n|B) \geqslant \lambda n$ where $B$ represents Bob's system, and a cheating Alice should not learn anything about $I$. To summarize all relevant parameters, we speak of an $(n,\lambda)$-WSE scheme and refer to~\cite{KWW09} for a definition~\footnote{Note that the original definition includes a security error $\eps$, which in our case is $\eps = 0$.}. It is proved in~\cite{KWW09} that bit commitment can be implemented using weak string erasure and classical communication.


\textbf{Protocol.} The protocol we use here is the same as the one of \cite{KWW09}. Alice prepares a random string $X^n \in \{0,1\}^n$ and encodes each bit $X_i$ in either the standard basis $\Theta_i = 0$ or the Hadamard basis $\Theta_i=1$, each with probability $1/2$. Then Bob measures these qubits in randomly chosen bases $\Theta'_i$. After the waiting time, Alice reveals both $X^n$ and $\Theta^n$. The set $I$ is defined by $I = \{i : \Theta_i = \Theta'_i\}$. For a more detailed description of the protocol, we refer the reader to \cite{KWW09}.

To state the result, we first define the notion of \emph{channel fidelity} introduced by \cite{BKN98} which is perhaps the most widely used quantity to measure how good a channel is at sending quantum information. For a channel $\cN : \cS(Q) \to \cS(Q')$, the channel fidelity $F_c$ quantifies how well $\cN$ preserves entanglement with a reference:
\begin{equation}
\label{eq:channel-fidelity}
F_c(\cN) = F( \Phi^N_{Q'A}, \left[\cN \otimes \id_{A} \right] (\Phi^N_{QA}) ),
\end{equation}
where $\Phi^N_{QA}$ is a normalized maximally entangled state.
For example, one way of defining the (one-shot) quantum capacity with free classical forward communication of a channel $\cF_{B \rightarrow C}$ is by the maximum of $\log |Q|$ over all encodings $\cE : \cS(Q) \to \cS(B \otimes M)$ and decodings $\cD : \cS(C \otimes M) \to \cS(Q')$ such that $F_c(\cD \circ (\cF \otimes \clchannel_M) \circ \cE) \geqslant 1-\e$ for small enough $\e$. Here $\clchannel_M$ refers to a noiseless classical channel.

The following theorem states that as soon as the storage device of Bob cannot send quantum information with reliability better than $\eta$, then we can perform two-party computation securely provided $\eta \leqslant 2^{-\delta n}$ for any $\delta > 0$. One can even obtain security when $\eta \leqslant 2^{-c(\log^2n + \log n \log(1/\e))}$ for some large enough constant $c$. Previously, this was only known when $\eta < 2^{-(2-\log 3) n}$ \cite{BFW12}. 

\begin{thm}\label{thm:nsm}
	Let Bob's storage device be given by $\cF : \cS(\hin) \to \cS(B)$, and let $\eta \in (0,1)$. Assume that we have
	\begin{equation}
	\label{eqn:fid-condition}
	\max_{\cD,\cE} F_c(\cD \circ (\cF \otimes \clchannel_{\msg}) \circ \cE)^2 \leqslant \eta 
	\end{equation}
	where the maximum is over all quantum channels $\cE : \cS\left((\mbC^2)^{\otimes n}\right) \to \cS(\hin \otimes M)$ and $\cD:\cS(B \otimes M) \to \cS((\mbC^2)^{\otimes n})$.
	
	Then, the protocol described above implements a $(n, \lambda)$-WSE for 
	\[
	\lambda = \frac{1}{2} \left( \gamma\left(-1 + \log(1/\eta)/n\right) - \frac{1}{n} \right).
	\]
\end{thm}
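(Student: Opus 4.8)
The plan is to reduce the statement to the uncertainty relation of Corollary~\ref{cor:ur-hmin-bb84} combined with the operational characterisation~\eqref{eq:minDef} of the min-entropy. Correctness of the protocol and security against a dishonest Alice are inherited from~\cite{KWW09}: Bob sends no message and his per-qubit basis choices $\Theta'_i$ are uniform and never revealed, so from Alice's point of view the set $I=\{i:\Theta_i=\Theta'_i\}$ is a uniformly random subset no matter which states she sends. Thus the only thing to establish is the guarantee $\hmin(X^n|B)\geqslant\lambda n$ against a dishonest Bob, where $B$ denotes everything Bob holds at the end of the protocol: the output of his storage device, his free classical memory $M$, and the announced bases $\Theta^n$.

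I would first pass to the entanglement-based version of the protocol. Instead of sending $H^{\Theta^n}\ket{X^n}$, let Alice prepare $n$ maximally entangled pairs $\Phi^N_{A^n\tilde A^n}$, send $\tilde A^n$ to Bob, and only after the waiting time measure $A^n$ in the bases $\Theta^n$ to produce $X^n$. This measurement acts on a system disjoint from everything Bob touches before the announcement, so it does not change Bob's state, and it suffices to bound $\hmin(X^n|B\Theta^n)$ where $X^n$ arises from a BB84 measurement of $A^n$ in the state $\sigma_{A^nB}=(\id_{A^n}\otimes(\cF\otimes\clchannel_{\msg})\circ\cE)(\Phi^N_{A^n\tilde A^n})$. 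Here $\cE:\cS((\mbC^2)^{\otimes n})\to\cS(\hin\otimes M)$ is Bob's arbitrary encoding attack, performed before the waiting time and hence independent of $\Theta^n$, $M$ is his classical memory modelled by the noiseless classical channel $\clchannel_{\msg}$, and $\cF$ is the noisy storage channel.

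The heart of the argument is a lower bound on $\hmin(A^n|B)_\sigma$ from hypothesis~\eqref{eqn:fid-condition}. By~\eqref{eq:minDef}, $2^{-\hmin(A^n|B)_\sigma}=2^n\max_{\Lambda}F(\Phi^N_{A^n\bar A^n},\id_{A^n}\otimes\Lambda_{B\to\bar A^n}(\sigma_{A^nB}))^2$. For any recovery map $\Lambda$, the composition $\Lambda\circ(\cF\otimes\clchannel_{\msg})\circ\cE$ is a channel from $(\mbC^2)^{\otimes n}$ to $(\mbC^2)^{\otimes n}$, and comparing with the definition~\eqref{eq:channel-fidelity} of channel fidelity gives $F(\Phi^N_{A^n\bar A^n},\id_{A^n}\otimes\Lambda(\sigma_{A^nB}))^2=F_c(\Lambda\circ(\cF\otimes\clchannel_{\msg})\circ\cE)^2$. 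Maximising over $\Lambda$, now read as the decoder $\cD$, and invoking~\eqref{eqn:fid-condition} yields $\max_{\Lambda}F(\cdots)^2\leqslant\eta$, hence $\hmin(A^n|B)_\sigma\geqslant-n+\log(1/\eta)$, i.e.\ $h_{\min}:=\hmin(A^n|B)_\sigma/n\geqslant-1+\log(1/\eta)/n$. Finally I would apply Corollary~\ref{cor:ur-hmin-bb84}, which gives $\hmin(X^n|B\Theta^n)\geqslant\frac{1}{2}(n\gamma(h_{\min})-1)$, and use that $\gamma$ is nondecreasing — immediate from the definition, since $g(\alpha)=h(\alpha)+\alpha-1$ has derivative $\log\frac{1-\alpha}{\alpha}+1>0$ on $(0,1/2)$ — to substitute the bound on $h_{\min}$; this produces exactly $\hmin(X^n|B\Theta^n)\geqslant\frac{1}{2}(n\gamma(-1+\log(1/\eta)/n)-1)=\lambda n$.

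The main obstacle is conceptual rather than computational: one must arrange the reduction so that an alleged Bob strategy leaving nontrivial min-entropy on $A^n$ becomes literally a quantum code for the channel $\cF\otimes\clchannel_{\msg}$, a noisy quantum memory assisted by free classical side information, thereby contradicting the capacity assumption~\eqref{eqn:fid-condition}. In particular it matters that the classical memory appears as a \emph{noiseless classical} channel $\clchannel_{\msg}$ (Bob may not keep quantum information outside $\cF$), that the encoding $\cE$ is fixed before $\Theta^n$ is revealed so that the announcement enters only through the decoder and through the conditioning on $\Theta^n$ in the uncertainty relation, and that Corollary~\ref{cor:ur-hmin-bb84} is available in its unsmoothed $\htwo$-based form so that the final bound carries no error term. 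Once this identification is in place, the rest is a routine chaining of~\eqref{eq:minDef}, the hypothesis, Corollary~\ref{cor:ur-hmin-bb84}, and the monotonicity of $\gamma$.
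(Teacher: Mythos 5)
Your proposal is correct and follows essentially the same route as the paper's proof: purify the protocol to an EPR-based version, use the operational interpretation \eqref{eq:minDef} to turn the hypothesis \eqref{eqn:fid-condition} into the bound $\hmin(A^n|B)\geqslant -n+\log(1/\eta)$ (identifying the recovery map with a decoding attack, exactly as in \eqref{eq:chanFid}), and then apply Corollary \ref{cor:ur-hmin-bb84}. Your explicit check that $\gamma$ is nondecreasing is a small detail the paper leaves implicit, but otherwise the arguments coincide.
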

\begin{proof}
	The proof of correctness of the protocol, and security against dishonest Alice is identical to~\cite{KWW09} and does not lead to
	any error terms. 

For the security against dishonest Bob, it is convenient to imagine a purification of the protocol, in which Alice prepares $n$ EPR pairs $\Phi^N_{A^nQ}$, where she sends $Q$ to Bob
and later measures her $n$ qubits $A^n$ in randomly chosen BB84 bases.
Bob's general attack is illustrated in Figure \ref{fig:BobAttack}. We use the uncertainty relation in Equation \eqref{eqn:ur-hmin-hmin-bb84}, with $E=BM \Theta^n$ on ${\rho}_{A^nBM\Theta^n}$. In order to do that, we first derive a lower bound on $h_{\min} = \frac{\entHmin(A^n|BM\Theta^n)_{\rho}}{n}$. Note that because $\Theta^n$ is independent of $A^nBM$, we have $\entHmin(A^n|BM\Theta^n)_{\rho} = \entHmin(A^n|BM)_{\rho}$. We now use Condition \eqref{eqn:fid-condition} to obtain a lower bound on $\entHmin(A^n|BM)$.

\begin{figure}[h]
\begin{center}
		\includegraphics[scale=0.7]{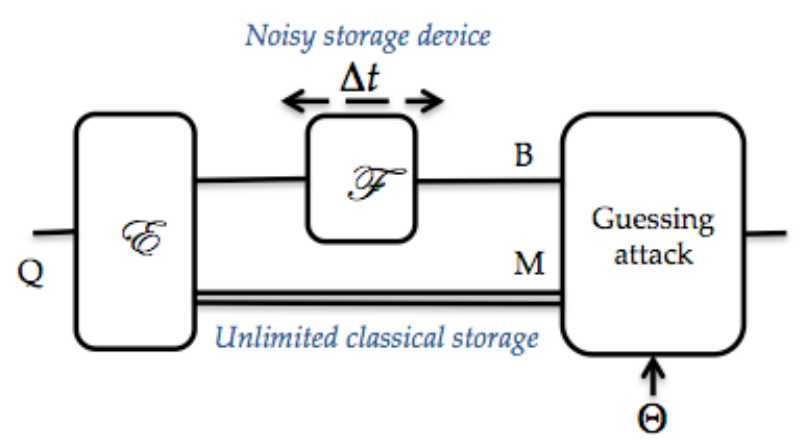}
		\caption{An attack of dishonest Bob is described by an encoding attack $\cE$ and a guessing attack because for classical $X^n$ the min-entropy $\hmin(X^n|BM\Theta^n)$ is directly related to the probability that Bob guesses $X^n$. The uncertainty relation of \eqref{eqn:ur-hmin-hmin} is going to allow us to relate this guessing probability to how well a decoding attack $\cD$ can preserve entanglement between Alice and Bob, where $\cD$ acts on $BM$.}
		\label{fig:BobAttack}
\end{center}
\end{figure}

In fact, we use an operational interpretation of the conditional min-entropy due to~\cite{KRS09}:
\begin{align}\label{eqn:hmin-oper}
	\entHmin(A^n|BM)_{{\rho}} = - \log |A^n| \max_{\Lambda_{BM\rightarrow \bar{A}^n}} F(\Phi^N_{A^n\bar{A}^n},\id_{A^n} \otimes \Lambda(\rho_{A^nBM}))^2\ ,
\end{align}
where $\Phi_{A^n\bar{A}^n}^N$ is the normalized maximally entangled state across $A^n\bar{A}^n$. 
That is, the min-entropy is directly related to the ``amount'' of entanglement between $A^n$ and $BM$. 
The map $\Lambda$ in~\eqref{eqn:hmin-oper} can be understood as a decoding attack $\cD$ aiming to restore entanglement with Alice.

Further, note that the expression in \eqref{eqn:hmin-oper} is the same as
\begin{align}\label{eq:chanFid}
	\max_{\cD,\cE} F\left(\Phi^N_{A^nB},\id_{A^n} \otimes \left[\cD \circ (\cF \otimes \clchannel_{\msg}) \circ \cE\right](\Phi^N_{A^nQ})\right) = 
	\max_{\cD,\cE} F_c(\cD \circ (\cF \otimes \clchannel_{\msg}) \circ \cE)\ .
\end{align}
	By the assumption on the storage device $\cF$, we obtain that for any encoding $\cE$ and decoding $\cD$ attack of Bob
	\begin{align*}
		\entHmin(A^n|BM)_{\rho}  &\geqslant - \log 2^n F_c(\cD \circ (\cF \otimes \clchannel_{\msg}) \circ \cE)^2\\
						&\geqslant - \left(n - \log(1/\eta) \right). 
	\end{align*}
	Then, using the uncertainty relation of \eqref{eqn:ur-hmin-hmin-bb84}, we obtain
	\begin{align*}
		\entHmin(X^n|BM\Theta^n)_{{\rho}} \geq
		\frac{1}{2} \left( n \gamma\left(-1 + \log(1/\eta)/n\right) -1 \right),
	\end{align*}
	which proves the desired result.
\end{proof}

\subsubsection{Special case: bounded storage model}
The next theorem simply states the result in the important special case of the bounded storage model.
\begin{thm}[WSE in the bounded storage model]
\label{thm:bqsm}
If Alice has $q$ qubits of quantum memory then the protocol described in the previous section implements $(n, \lambda)$-WSE with $\lambda = \frac{1}{2}\left(\gamma(-q/n)-\frac{1}{n}\right)$.
\end{thm}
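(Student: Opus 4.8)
The plan is to read Theorem~\ref{thm:bqsm} off Theorem~\ref{thm:nsm} by specializing the noisy storage device to the noiseless identity channel on $q$ qubits, i.e.\ by taking $\hin = (\mbC^2)^{\otimes q}$ and $\cF = \id_{\hin}$. With this choice, correctness of the protocol and security against a dishonest Alice carry over verbatim from Theorem~\ref{thm:nsm}, so the only thing I need to do is check that hypothesis~\eqref{eqn:fid-condition} holds with $\eta = 2^{-(n-q)}$. Once that is done, substituting $\log(1/\eta)/n = (n-q)/n = 1-q/n$ into the conclusion of Theorem~\ref{thm:nsm} gives $-1 + \log(1/\eta)/n = -q/n$, hence $\lambda = \tfrac12\bigl(\gamma(-q/n) - \tfrac1n\bigr)$, which is the asserted value.

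To verify~\eqref{eqn:fid-condition} I would fix arbitrary $\cE : \cS((\mbC^2)^{\otimes n}) \to \cS(\hin \otimes \msg)$ and $\cD : \cS(\hin \otimes \msg) \to \cS((\mbC^2)^{\otimes n})$, introduce an $n$-qubit register $Q$ maximally entangled with $A^n$ in the state $\Phi^N_{A^nQ}$, and let $\sigma_{A^n\hin\msg}$ be the state obtained by sending $Q$ through $(\cF\otimes\clchannel_{\msg})\circ\cE$; note that $\sigma$ is classical on $\msg$ because of $\clchannel_{\msg}$. Since $\cD$ (after the canonical identification $(\mbC^2)^{\otimes n}\cong\bar{A}^n$) is itself one admissible recovery map, the operational interpretation of the min-entropy~\eqref{eqn:hmin-oper} applied to $\sigma$ gives
\[
F_c\bigl(\cD\circ(\cF\otimes\clchannel_{\msg})\circ\cE\bigr)^2 \;\leqslant\; \max_{\Lambda_{\hin\msg\to\bar{A}^n}} F\bigl(\Phi^N_{A^n\bar{A}^n},\id_{A^n}\otimes\Lambda(\sigma)\bigr)^2 \;=\; 2^{-n}\,2^{-\entHmin(A^n|\hin\msg)_\sigma},
\]
so it is enough to show $\entHmin(A^n|\hin\msg)_\sigma \geqslant -q$.

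For that last bound I would write $\sigma_{A^n\hin\msg} = \sum_m p_m\,\sigma^m_{A^n\hin}\otimes\proj{m}$ and use that each $\sigma^m_{A^n\hin}$, being a state, has eigenvalues at most $1$, so $\sigma^m_{A^n\hin}\leqslant\id_{A^n\hin}$; this yields $\sigma_{A^n\hin\msg}\leqslant\id_{A^n\hin}\otimes\sigma_{\msg} = 2^q\,\id_{A^n}\otimes\bigl(\tfrac{\id_{\hin}}{2^q}\otimes\sigma_{\msg}\bigr)$, and since $\tfrac{\id_{\hin}}{2^q}\otimes\sigma_{\msg}$ is a normalized state on $\hin\msg$, definition~\eqref{eqn:def-hmin} gives $\entHmin(A^n|\hin\msg)_\sigma\geqslant -q$. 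Combining the two displays shows $F_c(\cD\circ(\cF\otimes\clchannel_{\msg})\circ\cE)^2\leqslant 2^{-(n-q)}$ for all $\cD,\cE$, so~\eqref{eqn:fid-condition} holds with $\eta = 2^{-(n-q)}$, completing the plan. The step requiring the most care is precisely this min-entropy bound: the classical-communication register $\msg$ may be arbitrarily large, and one has to exploit that it is classical — so that conditioning on it cannot lower the min-entropy past the worst-case value $-q$ contributed by the $q$-qubit quantum memory $\hin$ alone.
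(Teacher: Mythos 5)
Your proposal is correct, but it takes a different (and more modular) route than the paper. You treat bounded storage literally as the special case $\cF = \id$ on $q$ qubits of Theorem~\ref{thm:nsm} and verify its hypothesis~\eqref{eqn:fid-condition} with $\eta = 2^{-(n-q)}$; the key step, $\entHmin(A^n|\hin\msg)_\sigma \geqslant -q$, is proved soundly via the operator inequality $\sigma_{A^n\hin\msg} \leqslant 2^q\, \id_{A^n} \otimes \bigl(\tfrac{\id_{\hin}}{2^q}\otimes\sigma_{\msg}\bigr)$, which correctly exploits the classicality of $\msg$ (for a quantum $\msg$ this block-diagonal bound would fail), and then converted into a channel-fidelity bound through the operational interpretation~\eqref{eqn:hmin-oper}. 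The paper instead does not pass through the fidelity condition at all: it repeats the argument of Theorem~\ref{thm:nsm} but replaces the fidelity-based entropy bound by the direct collision-entropy bound $\entHtwo(A^n|BM)_{\rho} \geqslant -q$ from Lemma~\ref{lem:condition-cl}, and then invokes the $\htwo$-version~\eqref{eqn:ur-hmin-h2-bb84} of the uncertainty relation, whereas your route effectively ends up using the $\hmin$-version~\eqref{eqn:ur-hmin-hmin-bb84} inside Theorem~\ref{thm:nsm}. Since both entropy bounds equal $-q$ and $\gamma$ is evaluated at the same argument $-q/n$, the two routes give identical parameters; the paper's is shorter and in principle uses the tighter quantity $\htwo \geqslant \hmin$, while yours has the merit of using Theorem~\ref{thm:nsm} as a genuine black box at the cost of proving the (tight) fidelity bound $F_c^2 \leqslant 2^{-(n-q)}$ for a $q$-qubit identity channel assisted by classical communication. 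The only cosmetic caveat is that Theorem~\ref{thm:nsm} assumes $\eta \in (0,1)$, so your reduction formally requires $q < n$ --- which is exactly the regime where the statement is non-vacuous.
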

\begin{proof}
The proof is the same as Theorem \ref{thm:nsm}, but we can now directly obtain a lower bound on $\entHtwo(A^n|BM)_{\rho} \geqslant -q$ using Lemma \ref{lem:condition-cl}. By \eqref{eqn:ur-hmin-h2-bb84}, we have 
	\begin{align*}
		\entHmin(X^n|BM\Theta^n)_{{\rho}} \geq
		\frac{1}{2} (n \gamma(-q/n) - 1).
	\end{align*}
\end{proof}
Previously, in this case, security was only proven when $q < \frac{2n}{3}$ \cite{MW10} with a variant of this protocol that uses a six-state encoding. Using the estimate in Claim \ref{claim:estimate-gamma}, the previous theorem shows that $q < n - c\log^2 n$ for some large enough $c$ would be sufficient to perform WSE securely. Using the construction of \cite{KWW09}, this leads to a secure bit commitment provided $q < n - c \log^2 n - c \log n \log(1/\e)$ for some large enough constant $c$ and where $\e$ is the failure probability.

\section{Conclusion}
\label{sec:conclusion}
We have determined a bound on how the min-entropy changes when $A$ is transformed to $\mathcal{M}(A)$ for a certain general class of processes $\mathcal{M}$.  Our results on entanglement sampling, as well as uncertainty relations with respect to quantum side information then follow naturally for different choices of $\mathcal{M}$. Our results on entanglement sampling have in fact already found applications in the context of studying properties of random quantum circuits \cite{BF13}. 

One important aspect of our results compared to previous works on uncertainty relations and quantum random access codes is to give nontrivial bounds for all the range of possible min-entropy of the input. However, for some specific ranges of the input entropy, other techniques lead to better rates. It would be interesting to see if it is possible to combine our techniques with ideas from previous work such as \cite{BFW12} for uncertainty relations or \cite{BARdW08} for random access codes to obtain tight bounds.
It is likely that other interesting statements can be made using Theorem \ref{thm:general-h2} for different maps, and it is an interesting open question to extend our results to more general maps.

\acknowledgments
We thank Oleg Szehr, Marco Tomamichel and Thomas Vidick for useful discussions. We would also like to thank Serge Fehr, Maximilian Fillinger, Frédéric Grosshans and Christian Schaffner for pointing out some inaccuracies in a previous version and for helping us improve the presentation.
FD acknowledges support from the Danish National Research Foundation and The National Science Foundation of China (under the grant 61061130540) for the Sino-Danish Center for the Theory of Interactive Computation, within which part of this work was performed; and also from the CFEM research center (supported by the Danish Strategic Research Council) within which part of this work was performed. 
OF is supported by the European Research Council grant No. 258932.
SW is supported by the National Research Foundation and the Ministry of Education, Singapore. 
SW thanks ETH Z\"{u}rich for their hospitality.

\bibliography{ur}

\begin{thebibliography}{10}

\bibitem{ADHW09}
A.~Abeyesinghe, I.~Devetak, P.~Hayden, and A.~Winter.
\newblock The mother of all protocols: Restructuring quantum information's
  family tree.
\newblock {\em Proceedings of Royal Society A}, 465:2537, 2009.
\newblock \href{http://arxiv.org/abs/quant-ph/0606225}{arXiv:quant-ph/0606225}.

\bibitem{antv02}
A.~Ambainis, A.~Nayak, A.~Ta-Shma, and U.~Vazirani.
\newblock Dense quantum coding and quantum finite automata.
\newblock {\em J. ACM}, 49(4):496--511, 2002.
\newblock \href{http://arxiv.org/abs/quant-ph/9804043}{arXiv:quant-ph/9804043}.

\bibitem{BBRV02}
S.~Bandyopadhyay, P.~Boykin, V.~Roychowdhury, and F.~Vatan.
\newblock {A new proof for the existence of mutually unbiased bases}.
\newblock {\em Algorithmica}, 34(4):512--528, 2002.
\newblock \href{http://arxiv.org/abs/quant-ph/0103162}{arXiv:quant-ph/0103162}.

\bibitem{BarnKnil02}
H.~Barnum and E.~Knill.
\newblock Reversing quantum dynamics with near-optimal quantum and classical
  fidelity.
\newblock {\em J. Math. Phys.}, 43:2097, 2002.

\bibitem{BKN98}
H.~Barnum, E.~Knill, and M.~A. Nielsen.
\newblock On quantum fidelities and channel capacities.
\newblock {\em IEEE Trans. Inform. Theory}, 46:1317--1329, 2000.
\newblock \href{http://arxiv.org/abs/quant-ph/9809010}{arXiv:quant-ph/9809010}.

\bibitem{BARdW08}
A.~Ben-Aroya, O.~Regev, and R.~de~Wolf.
\newblock A hypercontractive inequality for matrix-valued functions with
  applications to quantum computing and {LDC}s.
\newblock In {\em Proc. IEEE FOCS}, 2008.
\newblock \href{http://arxiv.org/abs/0705.3806}{arXiv:0705.3806}.

\bibitem{BB84}
C.~H. Bennett and G.~Brassard.
\newblock {Quantum cryptography: Public key distribution and coin tossing}.
\newblock In {\em Proc. International Conference on Computers, Systems and
  Signal Processing}, 1984.

\bibitem{BBCW11}
M.~Berta, F.~Brandao, M.~Christandl, and S.~Wehner.
\newblock Entanglement cost of quantum channels.
\newblock \href{http://arxiv.org/abs/1108.5357}{arXiv:1108.5357}, 2011.

\bibitem{BCCRR10}
M.~Berta, M.~Christandl, R.~Colbeck, J.~M. Renes, and R.~Renner.
\newblock The uncertainty principle in the presence of quantum memory.
\newblock {\em Nat. Phys.}, 6:659, 2010.
\newblock \href{http://arxiv.org/abs/0909.0950}{arXiv:0909.0950}.

\bibitem{bcw13}
M.~Berta, P.~Coles, and S.~Wehner.
\newblock An equality between entanglement and uncertainty.
\newblock 2013.
\newblock \href{http://arxiv.org/1302.5902}{arXiv:1302.5902}.

\bibitem{BFW12}
M.~Berta, O.~Fawzi, and S.~Wehner.
\newblock Quantum to classical randomness extractors.
\newblock In {\em Proc. CRYPTO}, volume 7417 of {\em LNCS}, pages 776--793.
  Springer Verlag, 2012.
\newblock \href{http://arxiv.org/abs/1111.2026}{arXiv:1111.2026}.

\bibitem{BF13}
W.~Brown and O.~Fawzi.
\newblock Decoupling with small-depth random quantum circuits.
\newblock 2013.
\newblock in preparation.

\bibitem{BCHLW08}
H.~Buhrman, M.~Christandl, P.~Hayden, H.~K. Lo, and S.~Wehner.
\newblock {Possibility, impossibility, and cheat sensitivity of quantum-bit
  string commitment}.
\newblock {\em Phys. Rev. A}, 78(2):22316, 2008.
\newblock \href{http://arxiv.org/abs/quant-ph/0504078}{arXiv:quant-ph/0504078}.

\bibitem{bcs12}
H.~Buhrman, M.~Christandl, and C.~Schaffner.
\newblock Complete insecurity of quantum protocols for classical two-party
  computation.
\newblock {\em Phys. Rev. Lett.}, 109:160501, 2012.
\newblock \href{http://arxi.org/abs/1201.0849}{arXiv:1201.0849}.

\bibitem{cachin:bounded}
C.~Cachin and U.~M. Maurer.
\newblock Unconditional security against memory-bounded adversaries.
\newblock In {\em Proc. CRYPTO}, volume 1294 of {\em LNCS}, pages 292--306,
  1997.

\bibitem{lo:promise}
H.F. Chau and H-K. Lo.
\newblock Making an empty promise with a quantum computer.
\newblock {\em Fortschritte der Physik}, 46:507--520, 1998.
\newblock Republished in 'Quantum Computing, where do we want to go tomorrow?'
  edited by S. Braunstein,
  \href{http://arxiv.org/quant-ph/9709053}{arXiv:quant-ph/9709053}.

\bibitem{DFRSS07}
I.~Damg{\aa}rd, S.~Fehr, R.~Renner, L.~Salvail, and C.~Schaffner.
\newblock {A tight high-order entropic quantum uncertainty relation with
  applications}.
\newblock In {\em Proc. CRYPTO}, volume 4622 of {\em LNCS}, pages 360--378.
  2007.
\newblock \href{http://arxiv.org/abs/quant-ph/0612014}{arXiv:quant-ph/0612014}.

\bibitem{DFSS05}
I.~Damg{\aa}rd, S.~Fehr, L.~Salvail, and C.~Schaffner.
\newblock Cryptography in the bounded quantum-storage model.
\newblock In {\em Proc. IEEE FOCS}, pages 449--458, 2005.
\newblock \href{http://arxiv.org/abs/quant-ph/0508222}{arXiv:quant-ph/0508222}.

\bibitem{chris:id1}
I.~Damg{\aa}rd, S.~Fehr, L.~Salvail, and C.~Schaffner.
\newblock Secure identification and {QKD} in the bounded-quantum-storage model.
\newblock In {\em Proc. CRYPTO}, Springer Lecture Notes in Computer Science,
  pages 342--359, 2007.
\newblock \href{http://arxiv.org/abs/07082557}{arXiv:0708.2557}.

\bibitem{kretch:bc}
G.~D'Ariano, D.~Kretschmann, D.~Schlingemann, and R.F. Werner.
\newblock Quantum bit commitment revisited: the possible and the impossible.
\newblock {\em Phys. Rev. A}, 76:032328, 2007.
\newblock \href{http://arxiv.org/abs/quant-ph/0605224}{arXiv:quant-ph/0605224}.

\bibitem{renato:workGain}
L.~del Rio, J.~Aberg, R.~Renner, O.~Dahlsten, and V.~Vedral.
\newblock The thermodynamic meaning of negative entropy.
\newblock {\em Nature}, 474:61--64, 2011.

\bibitem{lidia:inprep}
L.~del Rio, A.~Hutter, R.~Renner, and S.~Wehner.
\newblock Relative thermalization.
\newblock In preparation, 2013.

\bibitem{Dup09}
F.~Dupuis.
\newblock {\em {The decoupling approach to quantum information theory}}.
\newblock PhD thesis, {Universit\'e de Montreal}, 2010.
\newblock \href{http://arxiv.org/1004.1641}{arXiv:1004.1641}.

\bibitem{DBWR10}
F.~Dupuis, M.~Berta, J.~Wullschleger, and R.~Renner.
\newblock One-shot decoupling.
\newblock 2010.
\newblock \href{http://arxiv.org/abs/1012.6044}{arXiv:1012.6044}.

\bibitem{maurer:imposs}
S.~Dziembowski and U.~Maurer.
\newblock On generating the initial key in the bounded-storage model.
\newblock In {\em Proc. EUROCRYPT}, volume 3027 of {\em LNCS}, pages 126--137,
  2004.

\bibitem{venkat-codingtheory}
V.~Guruswami.
\newblock
  \href{http://www.cs.cmu.edu/~venkatg/teaching/codingtheory/notes/notes2.pdf}{Introduction
  to Coding Theory, course notes}.
\newblock 2010.

\bibitem{Hausladen94}
P.~Hausladen and W.~Wootters.
\newblock A pretty good measurement for distinguishing quantum states.
\newblock {\em J. Mod. Optic.}, 41(12):2385--2390, 1994.

\bibitem{hayashi:book}
M.~Hayashi.
\newblock {\em Quantum information}.
\newblock Springer, 2006.

\bibitem{HHYW08}
P.~Hayden, M.~Horodecki, J.~Yard, and A.~Winter.
\newblock A decoupling approach to the quantum capacity.
\newblock {\em Open Systems and Information Dynamics}, 15:7--19, 2008.
\newblock \href{http://arxiv.org/abs/quant-ph/0702005}{arXiv:quant-ph/0702005}.

\bibitem{HP07}
P.~Hayden and J.~Preskill.
\newblock Black holes as mirrors: quantum information in random subsystems.
\newblock {\em J. High Energy Phys.}, page 120, 2007.
\newblock \href{http://arxiv.org/abs/0708.4025}{arXiv:0708.4025}.

\bibitem{HOW06}
M.~Horodecki, J.~Oppenheim, and A.~Winter.
\newblock Quantum state merging and negative information.
\newblock {\em Comm. Math. Phys.}, 269:107, 2006.
\newblock arXiv:quant-ph/0512247v1.

\bibitem{adrian:thesis}
A.~Hutter.
\newblock
  \href{http://www.quantumlah.org/media/thesis/NCQT\_AdrianHutter\_MSc2011.pdf}{Understanding
  Equipartition and Thermalization from Decoupling}, 2011.

\bibitem{ILL89}
R.~Impagliazzo, L.~Levin, and M.~Luby.
\newblock {Pseudo-random generation from one-way functions}.
\newblock In {\em Proc. ACM STOC}, pages 12--24. ACM, 1989.

\bibitem{KR05}
A.~Klappenecker and M.~Rotteler.
\newblock Mutually unbiased bases are complex projective 2-designs.
\newblock In {\em Proc. IEEE ISIT}, pages 1740--1744, 2005.
\newblock \href{http://arxiv.org/abs/quant-ph/0502031}{arXiv:quant-ph/0502031}.

\bibitem{KRS09}
R.~K\"onig, R.~Renner, and C.~Schaffner.
\newblock The operational meaning of min- and max-entropy.
\newblock {\em IEEE Trans. Inform. Theory}, 55:4674--4681, 2009.
\newblock \href{http://arxiv.org/abs/0807.1338}{arXiv:0807.1338}.

\bibitem{KWW09}
R.~K\"onig, S.~Wehner, and J.~Wullschleger.
\newblock Unconditional security from noisy quantum storage.
\newblock {\em IEEE Trans. Inform. Theory}, 58(3):1962 --1984, 2012.
\newblock \href{http://arxiv.org/abs/0906.1030}{arXiv:0906.1030}.

\bibitem{KR07}
R.~König and R.~Renner.
\newblock Sampling of min-entropy relative to quantum knowledge.
\newblock {\em IEEE Trans. Inform. Theory}, 57(7):4760 --4787, 2011.
\newblock \href{http://arxiv.org/abs/0712.4291}{arXiv:0712.4291}.

\bibitem{linden:thermalize}
N.~Linden, S.~Popescu, A.J. Short, and A.~Winter.
\newblock Quantum mechanical evolution towards thermal equilibrium.
\newblock {\em Phys. Rev. E.}, page 061103, 2009.

\bibitem{lo:insecurity}
H-K. Lo.
\newblock Insecurity of quantum secure computations.
\newblock {\em Phys. Rev. A}, 56:1154, 1997.

\bibitem{lo&chau:bitcom}
H-K. Lo and H.~F. Chau.
\newblock Is quantum bit commitment really possible?
\newblock {\em Phys. Rev. Lett.}, 78:3410, 1997.

\bibitem{MW10}
P.~Mandayam and S.~Wehner.
\newblock Achieving the physical limits of the bounded-storage model.
\newblock {\em Phys. Rev. A}, 83:022329, 2011.
\newblock \href{http://arxiv.org/abs/1009.1596}{arXiv:1009.1596}.

\bibitem{Maurer92b}
U.~Maurer.
\newblock Conditionally-perfect secrecy and a provably-secure randomized
  cipher.
\newblock {\em J. Cryptol.}, 5:53--66, 1992.

\bibitem{mayers:bitcom}
D.~Mayers.
\newblock Unconditionally secure quantum bit commitment is impossible.
\newblock {\em Phys. Rev. Lett.}, 78:3414--3417, 1997.

\bibitem{chris:id2}
C.~Gonzales-Guillen N.~Bouman, S.~Fehr and C.~Schaffner.
\newblock An all-but-one entropic uncertainty relations, and application to
  password-based identification.
\newblock 2011.
\newblock \href{http://arxiv.org/abs/1105.6212}{arXiv:1105.6212}.

\bibitem{Nayak99}
A.~Nayak.
\newblock Dense quantum coding and a lower bound for 1-way quantum automata.
\newblock In {\em Proc. ACM STOC}, pages 369--377, 1999.

\bibitem{noisy:experiment}
N.~Ng, S.~Joshi, C.~Chia, C.~Kurtsiefer, and S.~Wehner.
\newblock Experimental implementation of bit commitment in the noisy-storage
  model.
\newblock {\em Nat. Comm.}, 3:1326, 2012.

\bibitem{NZ96}
N.~Nisan and D.~Zuckerman.
\newblock Randomness is linear in space.
\newblock {\em J. Comput. Syst. Sci.}, 52(1):43 -- 52, 1996.

\bibitem{short:pick}
S.~Popescu, A.~J. Short, and A.~Winter.
\newblock Entanglement and the foundations of statistical mechanics.
\newblock {\em Nat. Phys.}, 2:754--758, 2006.

\bibitem{ur:experiment}
R.~Prevedel, D.~Hamel, R.~Colbeck, K.~Fisher, and K.~Resch.
\newblock Experimental investigation of the uncertainty principle in the
  presence of quantum memory and its application to witnessing entanglement.
\newblock {\em Nat. Phys.}, 7:757--761, 2011.

\bibitem{Ren05}
R.~Renner.
\newblock Security of quantum key distribution.
\newblock {\em Int. J. Quantum Inf.}, 6:1, 2008.
\newblock \href{http://arxiv.org/abs/quant-ph/0512258}{arXiv:quant-ph/0512258}.

\bibitem{STW08}
C.~Schaffner, B.~Terhal, and S.~Wehner.
\newblock Robust cryptography in the noisy-quantum-storage model.
\newblock {\em Quantum Inf. Comput.}, 9:11, 2008.
\newblock \href{http://arxiv.org/abs/0807.1333}{arXiv:0807.1333}.

\bibitem{sdtr11}
O.~Szehr, F.~Dupuis, M.~Tomamichel, and R.~Renner.
\newblock Decoupling with unitary almost two-designs.
\newblock 2011.
\newblock \href{http://arxiv.org/abs/1109.4348}{arXiv:1109.4348}.

\bibitem{temme:mixing}
K.~Temme and M.~Kastoryano.
\newblock Quantum logarithmic sobolev inequalities and rapid mixing.
\newblock 2012.
\newblock \href{http://arxiv.org/abs/1207.3261}{arXiv:1207.3261}.

\bibitem{Tom12}
M.~Tomamichel.
\newblock {\em A Framework for Non-Asymptotic Quantum Information Theory}.
\newblock PhD thesis, 2012.
\newblock \href{http://arxiv.org/abs/1203.2142}{arXiv:1203.2142}.

\bibitem{tcr09}
M.~Tomamichel, R.~Colbeck, and R.~Renner.
\newblock A fully quantum asymptotic equipartition property.
\newblock {\em IEEE Trans. Inform. Theory}, 55:5840--5847, 2009.
\newblock \href{http://arxiv.org/abs/0811.1221}{arXiv:0811.1221}.

\bibitem{TLGR12}
M.~Tomamichel, C.C.W. Lim, N.~Gisin, and R.~Renner.
\newblock Tight finite-key analysis for quantum cryptography.
\newblock {\em Nat. Comm.}, 3:634, 2012.

\bibitem{TR11}
M.~Tomamichel and R.~Renner.
\newblock Uncertainty relation for smooth entropies.
\newblock {\em Phys. Rev. Lett.}, 106(11):110506, 2011.
\newblock \href{http://arxiv.org/abs/1009.2015}{arXiv:1009.2015}.

\bibitem{vadhan:survey}
S.~Vadhan.
\newblock
  \href{http://people.seas.harvard.edu/~salil/pseudorandomness/}{Pseudorandomness}.

\bibitem{Vad03}
S.~Vadhan.
\newblock Constructing locally computable extractors and cryptosystems in the
  bounded-storage model.
\newblock {\em J. Cryptol.}, 17:43--77, 2004.

\bibitem{WST08}
S.~Wehner, C.~Schaffner, and B.~Terhal.
\newblock Cryptography from noisy storage.
\newblock {\em Phys. Rev. Lett.}, 100:220502, 2008.
\newblock \href{http://arxiv.org/abs/0711.2895}{arXiv:0711.2895}.

\bibitem{WW10}
S.~Wehner and A.~Winter.
\newblock {Entropic uncertainty relations---a survey}.
\newblock {\em New J. Phys.}, 12:025009, 2010.
\newblock \href{http://arxiv.org/abs/0907.3704}{arXiv:0907.3704}.

\bibitem{wolfnotes}
Michael Wolf.
\newblock {\em
  \href{http://www-m5.ma.tum.de/foswiki/pub/M5/Allgemeines/MichaelWolf/QChannelLecture.pdf}{Quantum
  channels and operations, guided tour}}.
\newblock 2012.

\bibitem{Wul10}
J.~Wullschleger.
\newblock Bitwise quantum min-entropy sampling and new lower bounds for random
  access codes.
\newblock 2010.
\newblock \href{http://arxiv.org/abs/1012.2291}{arXiv:1012.2291}.

\bibitem{yao}
A.~C.-C. Yao.
\newblock Security of quantum protocols against coherent measurements.
\newblock In {\em Proc. ACM STOC}, pages 67--75, 1995.

\end{thebibliography}

\appendix

\section{Technical Lemmas}

\RestateBinomialSum*
\begin{proof}
It is convenient to study separately the case where $\ell_0 \leqslant \frac{d^2-1}{d^2} (n-k)$ and the case where $\ell_0 > \frac{d^2-1}{d^2} (n-k)$. More precisely, the following claim introduces the value $k_0$ that separates these two cases.
\begin{claim}
\label{claim:k0}
There exists $k_0 \in \{1, \dots, n\}$ such that $\ell_0 \leqslant \frac{d^2-1}{d^2} (n-k_0+1)$ such that 
\begin{enumerate}
\item for $k \leqslant k_0$, $\sum_{\ell \leqslant \ell_0} {n-k \choose \ell} (d^2-1)^{\ell} \leqslant n \cdot {n-k \choose \ell_0} (d^2-1)^{\ell_0}$,
\item and $\sum_{\ell \leqslant n-k_0} {n-k_0 \choose \ell} (d^2-1)^{\ell} = (d^2)^{n-k_0} \leqslant n \cdot {n-k_0 \choose \ell_0} (d^2-1)^{\ell_0}$.
\end{enumerate}
\end{claim}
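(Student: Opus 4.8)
The plan is to choose $k_0$ so that the weight $\ell_0$ lands exactly at the mode of the weighted binomial profile $\ell \mapsto \binom{n-k_0}{\ell}(d^2-1)^\ell$; then item~1 is the statement ``a partial sum with nondecreasing terms is at most (number of terms)$\,\times\,$(last term)'', and item~2 is ``a full sum is at most (number of terms)$\,\times\,$(maximal term)''. Write $q := d^2-1$, and for fixed $m\geqslant 0$ set $t_\ell := \binom{m}{\ell}q^\ell$.

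The first step is the elementary monotonicity fact: the ratio $t_\ell/t_{\ell-1} = q(m-\ell+1)/\ell$ is strictly decreasing in $\ell$ and satisfies $t_\ell/t_{\ell-1}\geqslant 1$ precisely when $\ell\leqslant \frac{q(m+1)}{q+1} = \frac{d^2-1}{d^2}(m+1)$. Consequently, if $\ell_0\leqslant \frac{d^2-1}{d^2}(m+1)$ then $t_0\leqslant t_1\leqslant\cdots\leqslant t_{\ell_0}$, hence $\sum_{\ell=0}^{\ell_0} t_\ell\leqslant (\ell_0+1)\,t_{\ell_0}\leqslant n\,t_{\ell_0}$, where the last step uses $\ell_0\leqslant \frac{d^2-1}{d^2}n < n$ so that $\ell_0+1\leqslant n$.

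Now define $k_0 := \max\{\,k\in\{1,\dots,n\} : \ell_0\leqslant \tfrac{d^2-1}{d^2}(n-k+1)\,\}$. This set is nonempty because $k=1$ lies in it by the hypothesis $\ell_0\leqslant \frac{d^2-1}{d^2}n$, so $k_0$ is well-defined, lies in $\{1,\dots,n\}$, and satisfies $\ell_0\leqslant \frac{d^2-1}{d^2}(n-k_0+1)$. Item~1 is then immediate: for $k\leqslant k_0$ one has $\ell_0\leqslant \frac{d^2-1}{d^2}(n-k_0+1)\leqslant \frac{d^2-1}{d^2}(n-k+1)$, so the monotonicity fact with $m=n-k$ gives $\sum_{\ell\leqslant \ell_0}\binom{n-k}{\ell}(d^2-1)^\ell\leqslant n\binom{n-k}{\ell_0}(d^2-1)^{\ell_0}$.

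For item~2 I would split on whether $k_0=n$. If $k_0=n$ then $\ell_0\leqslant \frac{d^2-1}{d^2}<1$ forces $\ell_0=0$, and $(d^2)^0 = 1\leqslant n = n\binom{0}{0}(d^2-1)^0$. If $k_0<n$, then $k_0+1$ fails the defining condition, i.e. $\ell_0 > \frac{d^2-1}{d^2}(n-k_0)$; with $m:=n-k_0$ this reads $\frac{qm}{q+1}<\ell_0\leqslant \frac{q(m+1)}{q+1}$. The right inequality gives $t_{\ell_0}\geqslant t_{\ell_0-1}$, while the left gives $\ell_0+1 > \frac{qm}{q+1}+1 = \frac{q(m+1)+1}{q+1} > \frac{q(m+1)}{q+1}$, hence $t_{\ell_0+1}<t_{\ell_0}$; since the ratio is decreasing, $t_\ell<t_{\ell_0}$ for all $\ell>\ell_0$ as well, so $t_{\ell_0}$ is the maximal term. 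Moreover $\ell_0\leqslant \frac{d^2-1}{d^2}(m+1)<m+1$ forces $\ell_0\leqslant m$, so $\binom{m}{\ell_0}\geqslant 1$. Therefore $\sum_{\ell=0}^{m} t_\ell\leqslant (m+1)\,t_{\ell_0}\leqslant n\binom{n-k_0}{\ell_0}(d^2-1)^{\ell_0}$, and the left-hand side equals $(1+q)^m=(d^2)^{n-k_0}$ by the binomial theorem, which is item~2. I do not anticipate a genuine obstacle here: the only points needing a little care are the strict inequality $\ell_0+1>\frac{q(m+1)}{q+1}$ that certifies $\ell_0$ really is the mode at $m=n-k_0$, and the one-line integrality argument $\ell_0\leqslant m$.
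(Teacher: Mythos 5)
Your proof is correct and follows essentially the same route as the paper: define $k_0$ as the largest $k$ with $\ell_0 \leqslant \frac{d^2-1}{d^2}(n-k+1)$, use the term ratio $(d^2-1)\frac{m-\ell+1}{\ell}$ to show the terms increase up to $\ell_0$ (item 1) and that $\ell_0$ is the mode when $m=n-k_0$ (item 2), then bound each sum by the number of terms times the largest term. The only difference is cosmetic: you treat the $k_0=n$ edge case and the integrality point $\ell_0\leqslant m$ explicitly, which the paper leaves implicit.
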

\begin{proof}
We have for $\ell \geqslant 1$,
\begin{align*}
\frac{{n-k \choose {\ell}} (d^2-1)^{\ell}}{{n-k \choose {\ell-1}} (d^2-1)^{\ell-1}} 
&= (d^2-1) \frac{n-k-\ell+1}{\ell}.
\end{align*}
Now define $\ell_{\max}(k)$ to be the largest integer such that $\ell_{\max}(k) \leqslant \frac{d^2-1}{d^2}(n-k+1)$. In this case, we have for $\ell \leqslant \ell_{\max}(k)$, $(d^2-1) \frac{n-k-\ell+1}{\ell} \geqslant 1$. 
As a result, we have that for a fixed $k$, the expression $\binom{n-k}{\ell} \left(d^2-1\right)^\ell$ is increasing for $\ell \leqslant \ell_{\max}(k)$. In addition, if $\ell > \ell_{\max}(k)$, then $\ell > \frac{d^2-1}{d^2}(n-k+1)$ which means that for $\ell > \ell_{\max}(k)$, the expression ${n-k \choose {\ell}} (d^2-1)^{\ell}$ is decreasing. 

We choose $k_0$ to be the largest integer such that $\ell_0 \leqslant \frac{d^2-1}{d^2} (n-k_0+1)$. Note that such a $k_0$ exists because we assumed $\ell_0 \leqslant \frac{d^2-1}{d^2} n$. This means that $\ell_0 > \frac{d^2-1}{d^2} (n - k_0) \geqslant \frac{d^2-1}{d^2} (n - k_0+1) - 1$. This implies that $\ell_0 = \ell_{\max}(k_0)$ is the maximum of $\binom{n-k_0}{\ell} \left(d^2-1\right)^\ell$ over $\ell \in \{0, \dots, n-k_0\}$. Both points then follows from bounding the sum by $n$ times the largest term. 
\end{proof}
As a result, we have for $k \leqslant k_0$, 
\begin{align}
\sum_{\ell=0}^{\ell_0} \binom{n-k}{\ell} (d^2-1)^{\ell} &\leqslant n \cdot {n-k \choose \ell_0} (d^2-1)^{\ell_0} \notag \\
&= n \cdot {n \choose \ell_0} (d^2-1)^{\ell_0} \frac{(n-\ell_0) \cdots (n-\ell_0-k+1)}{n \cdots (n-k+1)}. \notag
\end{align}
Note that for $k = 1$, the result simply follows from the fact that $n-\ell_0-1 \geqslant 1$, which itself comes from our assumptions $\ell_0 \leqslant \frac{d^2-1}{d^2} n$ and $d^2 < n$. For $k > 1$, we can continue with
\begin{align*}
\sum_{\ell=0}^{\ell_0} \binom{n-k}{\ell} (d^2-1)^{\ell} 
&\leqslant n \cdot {n \choose \ell_0} (d^2-1)^{\ell_0} (n-\ell_0) \frac{(n-\ell_0-1) \cdots (n-\ell_0-k+1)}{n \cdots (n-k+1)} \\
&\leqslant n^2 \cdot {n \choose \ell_0} (d^2-1)^{\ell_0} \left(\frac{n-\ell_0-1}{n} \right)^k.
\end{align*}
For $k > k_0$,
\begin{align}
\sum_{\ell=0}^{\ell_0} \binom{n-k}{\ell} (d^2-1)^{\ell} &\leqslant d^{2(n-k)} \notag \\
&\leqslant n \cdot {n-k_0 \choose \ell_0} (d^2-1)^{\ell_0} d^{-2(k-k_0)} \notag \\
&= n \cdot {n \choose \ell_0} (d^2-1)^{\ell_0} \frac{(n-\ell_0) \cdots (n-\ell_0-k_0+1)}{n \cdots (n-k_0+1)} \left(\frac{1}{d^2}\right)^{k-k_0}. \label{eqn:bound-k_0-d2}
\end{align}
For $k_0 > 1$, we use the fact that $\ell_0 \leqslant \frac{d^2-1}{d^2} (n-k_0+1)$, which implies that $\frac{1}{d^2} \leqslant \frac{n-\ell_0-k_0+1}{n-k_0+1}$. Thus,
\begin{align*}
\sum_{\ell=0}^{\ell_0} \binom{n-k}{\ell} (d^2-1)^{\ell} &\leqslant d^{2(n-k)} \\
&\leqslant n \cdot {n \choose \ell_0} (d^2-1)^{\ell_0} \frac{(n-\ell_0) \cdots (n-\ell_0-k_0+1)^{(k-k_0)+1}}{n \cdots (n-k_0+1)^{(k-k_0)+1}} \\
&\leqslant n \cdot {n \choose \ell_0} (d^2-1)^{\ell_0} (n-\ell_0) \left(\frac{n-\ell_0-1}{n} \right)^{k} \\
&\leqslant n^2 \cdot {n \choose \ell_0} (d^2-1)^{\ell_0} \left(\frac{n-\ell_0-1}{n} \right)^{k}.
\end{align*}
For $k_0 = 1$, \eqref{eqn:bound-k_0-d2} becomes
\begin{align*}
\sum_{\ell=0}^{\ell_0} \binom{n-k}{\ell} (d^2-1)^{\ell} 
&\leq n \cdot {n \choose \ell_0} (d^2-1)^{\ell_0} \frac{n-\ell_0}{n} \left(\frac{1}{d^2}\right)^{k-1} \\
&\leq n^2 {n \choose \ell_0} (d^2-1)^{\ell_0} \left(\frac{1}{d^2}\right)^{k} \ ,
\end{align*}
using the assumption $n > d^2$.
\end{proof}

\section{Some useful properties of entropy measures}

\begin{lem}\label{lem:hmin-h2-fullyquantum}
	Let $\rho_{AB} \in \cS_{\leqslant}(AB)$. Then, $\hmin(A|B)_{\rho} \leqslant \htwo(A|B)_{\rho} \leqslant 2\hmin(A|B)_{\rho} + \log d_A$.
\end{lem}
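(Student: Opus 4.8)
\emph{Plan.} The plan is to reduce to a normalized state and then read both inequalities off the operational characterizations of $\hmin$ and $\htwo$ recalled in \eqref{eq:minDef} and \eqref{eq:H2again}. For the reduction: if $t := \tr[\rho_{AB}] < 1$, write $\rho_{AB} = t\,\rho'_{AB}$ with $\rho'_{AB}$ normalized. Since $2^{-\lambda}\id_A\otimes\sigma_B \geqslant \rho_{AB}$ iff $2^{-(\lambda+\log t)}\id_A\otimes\sigma_B \geqslant \rho'_{AB}$, one gets $\hmin(A|B)_{\rho} = \hmin(A|B)_{\rho'} + \log(1/t)$; and $\htwo(A|B)_{\rho} = \htwo(A|B)_{\rho'} + 2\log(1/t)$ by the scaling property noted right after \eqref{eqn:def-h2-rho-sigma}. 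Substituting into the two target inequalities, the $\log(1/t)$ terms cancel exactly in $\htwo \leqslant 2\hmin + \log d_A$ and add a nonnegative quantity to $\htwo - \hmin$, so it suffices to treat normalized $\rho_{AB}$.

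\emph{Normalized case.} Put $F_{\mathrm{opt}} := \max_{\Lambda_{B\to A'}} F(\Phi^N_{AA'}, \id_A\otimes\Lambda(\rho_{AB}))$ and $F_{\mathrm{pg}} := F(\Phi^N_{AA'}, \id_A\otimes\Lambda^{\rm pg}(\rho_{AB}))$, so that $2^{-\hmin(A|B)_{\rho}} = d_A F_{\mathrm{opt}}^2$ and $2^{-\htwo(A|B)_{\rho}} = d_A F_{\mathrm{pg}}^2$ by \eqref{eq:minDef} and \eqref{eq:H2again}. Since the pretty-good recovery map $\Lambda^{\rm pg}$ is one admissible $\Lambda$, we have $F_{\mathrm{pg}} \leqslant F_{\mathrm{opt}}$, hence $2^{-\htwo(A|B)_{\rho}} \leqslant 2^{-\hmin(A|B)_{\rho}}$, i.e.\ $\hmin(A|B)_{\rho} \leqslant \htwo(A|B)_{\rho}$. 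For the reverse inequality, invoke the Barnum--Knill near-optimality bound $F_{\mathrm{pg}} \geqslant F_{\mathrm{opt}}^2$ \cite{BarnKnil02}: then $2^{-\htwo(A|B)_{\rho}} = d_A F_{\mathrm{pg}}^2 \geqslant d_A F_{\mathrm{opt}}^4 = d_A^{-1}\bigl(d_A F_{\mathrm{opt}}^2\bigr)^2 = d_A^{-1}\, 2^{-2\hmin(A|B)_{\rho}}$, and taking $-\log$ yields $\htwo(A|B)_{\rho} \leqslant 2\hmin(A|B)_{\rho} + \log d_A$.

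\emph{Main obstacle.} The delicate step is the upper bound: the argument above is clean only once one has the Barnum--Knill inequality in precisely the form $F_{\mathrm{pg}}\geqslant F_{\mathrm{opt}}^2$ for this figure of merit (fidelity to the maximally entangled state after a recovery map acting on $B$), so the cited statement must be checked to apply verbatim. If instead one wants a self-contained proof, the lower bound has an elementary alternative — conjugate $2^{-\hmin(A|B)_{\rho}}\id_A\otimes\sigma_B \geqslant \rho_{AB}$ by $\id_A\otimes\sigma_B^{-1/4}$ for the optimal $\sigma_B$, pair the resulting operator inequality with $M := \sigma_B^{-1/4}\rho_{AB}\sigma_B^{-1/4}$ to get $\tr[M^2]\leqslant 2^{-\hmin(A|B)_{\rho}}\tr[\rho_{AB}]$, then use $\htwo(A|B)_{\rho}\geqslant\htwo(A|B)_{\rho|\sigma}$ — and the upper bound can be derived from the dual SDP $2^{-\hmin(A|B)_{\rho}} = \max\{\tr[X_{AB}\rho_{AB}] : X_{AB}\geqslant 0,\ \tr_A X_{AB}\leqslant\id_B\}$: the constraint forces $\lambda_{\max}(X_{AB})\leqslant d_A$ (and $\tr X_{AB}\leqslant d_B$, so $\tr[X_{AB}^2]\leqslant d_A d_B$), whence by Cauchy--Schwarz $2^{-\hmin(A|B)_{\rho}}\leqslant\sqrt{d_A d_B\,\tr[\rho_{AB}^2]}$, while choosing $\sigma_B = \id_B/d_B$ gives $2^{-\htwo(A|B)_{\rho}}\geqslant d_B\,\tr[\rho_{AB}^2]$; combining yields $(2^{-\hmin(A|B)_{\rho}})^2\leqslant d_A\,2^{-\htwo(A|B)_{\rho}}$. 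Notably, obtaining the constant $\log d_A$ rather than $\log(d_A d_B)$ in this self-contained route is exactly what requires the fact that $\rho_B$ optimizes $\htwo(A|B)_{\rho|\sigma}$ over $\sigma_B$ (the role played by Barnum--Knill in the short proof).
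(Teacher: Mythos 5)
Your main argument is correct, but it follows a genuinely different route from the paper's. You prove both inequalities from the fidelity characterizations: $2^{-\hmin(A|B)_\rho}=d_A F_{\mathrm{opt}}^2$ from \eqref{eq:minDef} and $2^{-\htwo(A|B)_\rho}=d_A F_{\mathrm{pg}}^2$ from \eqref{eq:H2again}, getting the lower bound from $F_{\mathrm{pg}}\leqslant F_{\mathrm{opt}}$ and the upper bound from Barnum--Knill $F_{\mathrm{opt}}^2\leqslant F_{\mathrm{pg}}$; your worry about whether Barnum--Knill applies verbatim is settled by the paper itself, which records exactly $F^2(A|B)\leqslant F^{\mathrm{pg}}(A|B)\leqslant F(A|B)$ in the appendix on the operational interpretation of $\htwo$. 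The paper instead argues directly on (possibly subnormalized) operators: for the first inequality it exhibits $\rho_B^{-1/2}\rho_{AB}\rho_B^{-1/2}$ as feasible in the dual SDP $2^{-\hmin}=\max_{E_{AB}:E_B=\id_B}\tr[E_{AB}\rho_{AB}]$, and for the second it uses the KRS09 reduction to a map $\mathcal{E}_{B\to A'}$, conditions on $\omega_{A'}=\sqrt{\tilde\rho_{A'}}/\tr[\sqrt{\tilde\rho_{A'}}]$, and combines $\snorm{\cdot}_\infty\leqslant\snorm{\cdot}_2$, $\tr[\sqrt{\tilde\rho_{A'}}]\leqslant\sqrt{d_A}$ and data processing for $\htwo$ --- in effect re-proving the Barnum--Knill-type estimate rather than citing it, and avoiding any normalization issue. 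Your approach buys brevity at the price of importing \eqref{eq:H2again} and the Barnum--Knill bound as black boxes and of the normalization reduction.

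Two inaccuracies should be fixed, though neither sinks the main proof. First, the scaling step: since $\rho_B$ scales together with $\rho_{AB}$, one gets $2^{-\htwo(A|B)_{t\rho'}}=t\cdot 2^{-\htwo(A|B)_{\rho'}}$, i.e.\ $\htwo(A|B)_\rho=\htwo(A|B)_{\rho'}+\log(1/t)$, not $+2\log(1/t)$ (the $\mu^2$ rule quoted from the paper is for \emph{fixed} $\sigma_B$). With the correct scaling the reduction still works: the first inequality is exactly equivalent to the normalized case, and the second only gains slack $\log(1/t)\geqslant 0$. Second, the closing aside relies on the claim that $\rho_B$ optimizes $\htwo(A|B)_{\rho|\sigma}$ over $\sigma_B$; this is false in general (already classically, the minimizer of $\sum_{x,b}p(x,b)^2/\sigma(b)$ is $\sigma(b)\propto\sqrt{\sum_x p(x,b)^2}$, not $p(b)$), so the ``elementary alternative'' for the lower bound does not go through as sketched --- the paper's dual-SDP argument, which only needs feasibility of $\rho_B^{-1/2}\rho_{AB}\rho_B^{-1/2}$ and not any optimality of $\rho_B$, is the way to make that variant rigorous.
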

\begin{proof}
	The first inequality can be proven as follows:
	\begin{align*}
		2^{-\hmin(A|B)_{\rho}} &= \max_{E_{AB}:E_B = \ident_B} \tr[E_{AB} \rho_{AB}]\\
		&\geqslant \tr[(\rho_B^{-1/2} \rho_{AB} \rho_B^{-1/2}) \rho_{AB}]\\
		&= 2^{-\htwo(A|B)_{\rho}}.
	\end{align*}
	For the second inequality, we proceed as follows. By \cite{KRS09}, there exists a CPTP map $\mathcal{E}_{B \rightarrow A'}$ with $A' \cong A$, such that $\hmin(A|B)_{\rho} = \hmin(A|A')_{\mathcal{E}(\rho)}$. Letting $\tilde{\rho} = \mathcal{E}(\rho)$ and $\omega_{A'} = \sqrt{\tilde{\rho}_{A'}}/\tr[\sqrt{\tilde{\rho}_{A'}}]$, we get
	\begin{align*}
		2^{-\hmin(A|B)_{\rho}} &= 2^{-\hmin(A|A')_{\tilde{\rho}}}\\
		&\leqslant 2^{-\hmin(A|A')_{\tilde{\rho}|\omega}}\\
		&= \left\| \omega_{A'}^{-1/2} \tilde{\rho}_{AA'} \omega_{A'}^{-1/2} \right\|_{\infty}\\
		&= \left\| \tilde{\rho}_{A'}^{-1/4} \tilde{\rho}_{AA'} \tilde{\rho}_{A'}^{-1/4} \right\|_{\infty} \tr[\sqrt{\tilde{\rho}_{A'}}]\\
		&\leqslant \left\| \tilde{\rho}_{A'}^{-1/4} \tilde{\rho}_{AA'} \tilde{\rho}_{A'}^{-1/4} \right\|_{2} \tr[\sqrt{\tilde{\rho}_{A'}}]\\
		&= \sqrt{2^{-\htwo(A|A')_{\tilde{\rho}|\tilde{\rho}}}} \tr[\sqrt{\tilde{\rho}_{A'}}]\\
		&\leqslant \sqrt{2^{-\htwo(A|B)_{\rho}}} \tr[\sqrt{\tilde{\rho}_{A'}}]\\
		&\leqslant \sqrt{d_A 2^{-\htwo(A|B)_{\rho}}},
	\end{align*}
	and the lemma follows.
\end{proof}

\begin{lem}\label{lem:hmin-h2}
	Let $\rho_{XB}\in\cS_{\leqslant}(XB)$ be a CQ state. 
	Then
\begin{align*}
	\entHmin(X|B)_{\rho} \leqslant \entHtwo(X|B)_{\rho} \leqslant 2 \entHmin(X|B)_{\rho} .
\end{align*}
\end{lem}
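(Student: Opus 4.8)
The left inequality $\entHmin(X|B)_{\rho}\leqslant\entHtwo(X|B)_{\rho}$ is already contained in Lemma~\ref{lem:hmin-h2-fullyquantum}, whose proof applies verbatim to an arbitrary (in particular sub-normalized) state; in operational terms it records that the pretty good measurement $\{\rho_B^{-1/2}\rho_B^x\rho_B^{-1/2}\}_x$ never guesses $X$ better than the optimal measurement. So the plan is to concentrate on the right inequality, which I rewrite as $\bigl(2^{-\entHmin(X|B)_{\rho}}\bigr)^{2}\leqslant 2^{-\entHtwo(X|B)_{\rho}}$. This is the conditional version of the Barnum--Knill/Hausladen--Wootters bound: the pretty good measurement does at least as well as the \emph{square} of the optimal guessing probability.

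Write the CQ state as $\rho_{XB}=\sum_x\proj{x}_X\otimes\rho_B^x$ with $\rho_B^x\in\Pos(B)$ and $\rho_B=\sum_x\rho_B^x$, and throughout read negative powers of $\rho_B$ as pseudo-inverse powers on $\supp(\rho_B)$. Since $\rho_B^x\leqslant\rho_B$, we have $\supp(\rho_B^x)\subseteq\supp(\rho_B)$, so $M_x:=\rho_B^{-1/4}\rho_B^x\rho_B^{-1/4}$ is well defined, $M_x\geqslant 0$, $\sum_x M_x=\rho_B^{-1/4}\rho_B\rho_B^{-1/4}=\rho_B^{1/2}$ on $\supp(\rho_B)$, and by~\eqref{eqn:def-h2} one has $2^{-\entHtwo(X|B)_{\rho}}=\sum_x\tr[M_x^2]$. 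For the min-entropy I would use the operational characterization (or, equivalently, the SDP form used in the proof of Lemma~\ref{lem:hmin-h2-fullyquantum}): $2^{-\entHmin(X|B)_{\rho}}=\max\sum_x\tr[N_x\rho_B^x]$, the maximum running over families $\{N_x\}_x$ of positive operators on $B$ with $\sum_x N_x\leqslant\ident_B$.

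Now fix an optimal such $\{N_x\}$ and set $\widetilde N_x:=\rho_B^{1/4}N_x\rho_B^{1/4}\geqslant 0$, so that $\sum_x\widetilde N_x\leqslant\rho_B^{1/2}$ and $\tr[N_x\rho_B^x]=\tr[N_x\rho_B^{1/4}M_x\rho_B^{1/4}]=\tr[\widetilde N_xM_x]$. Applying Cauchy--Schwarz for the Hilbert--Schmidt inner product termwise and then once more over the sum,
\[
2^{-\entHmin(X|B)_{\rho}}=\sum_x\tr[\widetilde N_xM_x]\leqslant\Bigl(\sum_x\tr[\widetilde N_x^2]\Bigr)^{1/2}\Bigl(\sum_x\tr[M_x^2]\Bigr)^{1/2}.
\]
For the first factor, $\widetilde N_x\leqslant\sum_y\widetilde N_y\leqslant\rho_B^{1/2}$ gives $\tr[\widetilde N_x^2]\leqslant\tr[\widetilde N_x\rho_B^{1/2}]$, hence $\sum_x\tr[\widetilde N_x^2]\leqslant\tr\bigl[\rho_B^{1/2}\sum_x\widetilde N_x\bigr]\leqslant\tr[\rho_B]\leqslant 1$; the second factor is exactly $2^{-\entHtwo(X|B)_{\rho}}$ as noted above. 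Combining and squaring yields $\bigl(2^{-\entHmin(X|B)_{\rho}}\bigr)^{2}\leqslant 2^{-\entHtwo(X|B)_{\rho}}$, i.e.\ $\entHtwo(X|B)_{\rho}\leqslant 2\,\entHmin(X|B)_{\rho}$ after taking $-\log$.

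The only delicate points are bookkeeping: that all pseudo-inverse powers live on $\supp(\rho_B)$, that $\sum_x M_x=\rho_B^{1/2}$ and $\sum_x\widetilde N_x\leqslant\rho_B^{1/2}$ there, and that $\rho_{XB}$ is merely sub-normalized so $\tr[\rho_B]\leqslant 1$ --- which is precisely what lets the bound $\sum_x\tr[\widetilde N_x^2]\leqslant 1$ come out with constant one. There is no substantive obstacle; the argument is a direct conditional adaptation of the pretty-good-measurement inequality.
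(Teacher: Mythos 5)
Your proof is correct, but it takes a genuinely different route from the paper's. The paper proves the upper bound by invoking the operational interpretation of $\entHmin$ from \cite{KRS09}: there is a measurement $\mathcal{M}_{B\to X'}$ with $\entHmin(X|B)_{\rho}=\entHmin(X|X')_{\mathcal{M}(\rho)}$, after which the problem is purely classical — for each outcome $x'$ one bounds the maximal conditional probability by the square root of the conditional collision probability (Cauchy--Schwarz), averages using concavity of the square root, and finally pulls the bound back to $\entHtwo(X|B)_{\rho}$ via monotonicity of $\entHtwo$ under CPTP maps (Theorem~\ref{thm:d2-monotone}). You instead argue directly at the level of the SDP/guessing formulation $2^{-\entHmin(X|B)_{\rho}}=\max\sum_x\tr[N_x\rho_B^x]$: conjugating by $\rho_B^{1/4}$, applying Hilbert--Schmidt Cauchy--Schwarz against $M_x=\rho_B^{-1/4}\rho_B^x\rho_B^{-1/4}$, and bounding $\sum_x\tr[\widetilde N_x^2]\leqslant\tr[\rho_B]\leqslant 1$ gives $2^{-2\entHmin(X|B)_{\rho}}\leqslant\sum_x\tr[M_x^2]=2^{-\entHtwo(X|B)_{\rho}}$. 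In effect you reprove the Barnum--Knill/pretty-good-measurement bound in this conditional setting, whereas the paper reduces to the known classical relation and pays for it by needing the data-processing inequality for $\entHtwo$ (operator-convexity machinery in the appendix). Your argument is more self-contained on that front — no monotonicity theorem and no detour through the measured state — while the paper's is shorter given that Theorem~\ref{thm:d2-monotone} is proved anyway for Lemma~\ref{lem:smoothhmin-h2}; both yield the same constant $2$, your support/pseudo-inverse bookkeeping is sound, and the treatment of sub-normalized $\rho$ (via $\tr[\rho_B]\leqslant 1$) is exactly what is needed.
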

\begin{proof} The lower bound is a special case of Lemma \ref{lem:hmin-h2-fullyquantum}. For the upper bound,
	from the operational interpretation of $\hmin$, we get that there exists a measurement $\mathcal{M}_{B \rightarrow X'}$ such that $\hmin(X|B)_{\rho} = \hmin(X|X')_{\mathcal{M}(\rho)}$. Using this, we get that
	\begin{align*}
		2^{-\hmin(X|B)_{\rho}} &= 2^{-\hmin(X|X')_{\mathcal{M}(\rho)}}\\
		&= \mbE_{\bar{X}'} 2^{-\hmin(X|X'=\bar{X}')_{\mathcal{M}(\rho)}}\\
		&\leqslant \mbE_{\bar{X}'} 2^{-\demi \htwo(X|X'=\bar{X}')_{\mathcal{M}(\rho)}}\\
		&\leqslant \sqrt{\mbE_{\bar{X}'} 2^{-\htwo(X|X'=\bar{X}')_{\mathcal{M}(\rho)}}}\\
		&= \sqrt{2^{-\htwo(X|X')_{\mathcal{M}(\rho)|\mathcal{M}(\rho)}}}\\
		&\leqslant \sqrt{2^{-\htwo(X|B)_{\rho}}}.
	\end{align*}
	where the first inequality follows from an application of Cauchy-Schwarz, the second from the concavity of the square root, and the third from the monotonicity of $\htwo$ under CPTP maps. The last equality is due to the following:
\begin{align*}
\mbE_{\bar{X}'} 2^{-\htwo(X|X'=\bar{X}')_{\mathcal{M}(\rho)}} &= \sum_{x'} \pr{X'=x'} \sum_{x} \pr{X=x|X'=x'}^2 \\
&= \tr[\left((\id_{X} \otimes \rho_{X'}^{-1/4}) \rho_{XX'} (\id_X \otimes \rho_{X'}^{-1/4}) \right)^2] \\
&= 2^{-\entHtwo(X|X')_{\cM(\rho)|\cM(\rho)}}.
\end{align*} 
\end{proof}

\begin{lem}\label{lem:smoothhmin-h2}
	Let $\rho_{AB} \in \cS_{\leqslant}(AB)$ and $\sigma_B \in \cS_{\leq}(B)$. Then, $\htwo(A|B)_{\rho|\sigma} \leqslant \hmin^{\eps}(A|B)_{\rho|\sigma} + \log \frac{2}{\eps^2} \leq \hmin^{\eps}(A|B)_{\rho} + \log \frac{2}{\eps^2}$.
\end{lem}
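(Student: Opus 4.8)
The plan is to prove the two inequalities separately. The right-hand one, $\hmin^{\e}(A|B)_{\rho|\sigma}\leq\hmin^{\e}(A|B)_{\rho}$, is essentially definitional: for any subnormalized $\tilde\rho_{AB}$ the definition of the min-entropy gives $\hmin(A|B)_{\tilde\rho}=\max_{\sigma'_B}\hmin(A|B)_{\tilde\rho|\sigma'}\geq\hmin(A|B)_{\tilde\rho|\sigma}$, and taking the maximum of both sides over $\tilde\rho\in\cB^{\e}(\rho)$ yields the claim. So all the content is in the left-hand inequality $\htwo(A|B)_{\rho|\sigma}\leq\hmin^{\e}(A|B)_{\rho|\sigma}+\log\frac{2}{\e^2}$, which I would restate as: there exists $\tilde\rho_{AB}\in\cB^{\e}(\rho)$ with $\tilde\rho_{AB}\leq\bigl(\tfrac{2}{\e^2}\,2^{-\htwo(A|B)_{\rho|\sigma}}\bigr)\,\id_A\otimes\sigma_B$, since by the definition of $\hmin(\,\cdot\,|\,\cdot\,)_{\cdot|\sigma}$ such a $\tilde\rho$ witnesses $\hmin(A|B)_{\tilde\rho|\sigma}\geq\htwo(A|B)_{\rho|\sigma}-\log\frac{2}{\e^2}$, hence a fortiori $\hmin^{\e}(A|B)_{\rho|\sigma}\geq\htwo(A|B)_{\rho|\sigma}-\log\frac{2}{\e^2}$.

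\textbf{Construction of $\tilde\rho$.} First I would dispose of the degenerate case $\supp(\rho_B)\not\subseteq\supp(\sigma_B)$, where $\sigma_B^{-1/4}\rho_{AB}\sigma_B^{-1/4}$ is unbounded, $\htwo(A|B)_{\rho|\sigma}=-\infty$, and the statement is vacuous. In the generic case, set $c:=\tr[(\sigma_B^{-1/4}\rho_{AB}\sigma_B^{-1/4})^2]=2^{-\htwo(A|B)_{\rho|\sigma}}$ and fix the cutoff $\lambda:=\tfrac{2}{\e^2}c$. Consider the operator $\nu_{AB}:=(\id_A\otimes\sigma_B^{-1/2})\rho_{AB}(\id_A\otimes\sigma_B^{-1/2})$, let $\Pi$ be its spectral projector onto eigenvalues $\leq\lambda$, and define $\tilde\rho_{AB}:=(\id_A\otimes\sigma_B^{1/2})\,\Pi\nu\Pi\,(\id_A\otimes\sigma_B^{1/2})$. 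By construction $0\leq\tilde\rho\leq\rho_{AB}$, and $(\id_A\otimes\sigma_B^{-1/2})\tilde\rho(\id_A\otimes\sigma_B^{-1/2})=\Pi\nu\Pi\leq\lambda\,\id$, so indeed $\tilde\rho\leq\lambda\,\id_A\otimes\sigma_B$, giving the required operator bound. Because $\tilde\rho\leq\rho$, a standard estimate bounds the purified distance $P(\rho,\tilde\rho)$ in terms of the trace loss $\tr\rho-\tr\tilde\rho$ (here it is genuinely convenient that $\cB^{\e}$ is a purified-distance ball); concretely $P(\rho,\tilde\rho)\leq\sqrt{2(\tr\rho-\tr\tilde\rho)}$ for normalized $\rho$. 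It then remains to show $\tr\rho-\tr\tilde\rho\leq\tfrac{\e^2}{2}$, i.e.\ a quantum Markov-type inequality: the mass removed by truncating at level $\lambda$ is at most $\tfrac{1}{\lambda}c$, and with $\lambda=\tfrac{2}{\e^2}c$ this is $\tfrac{\e^2}{2}$, whence $P(\rho,\tilde\rho)\leq\e$.

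\textbf{Main obstacle.} The delicate step — the one I would spend most effort on — is the Markov-type estimate $\tr\rho-\tr\tilde\rho\leq\tfrac{c}{\lambda}$, or more precisely the need to reconcile the two demands on $\tilde\rho$: the operator inequality must feature exactly the $\sigma_B$ that enters $\hmin$ (to the power $-1/2$), whereas the quantity $c$ controlling the trace loss features $\sigma_B^{-1/4}\rho_{AB}\sigma_B^{-1/4}$, the power $-1/4$ appearing in $\htwo$. Truncating $\sigma_B^{-1/4}\rho_{AB}\sigma_B^{-1/4}$ and conjugating back only gives $\tilde\rho\leq\lambda\,\id_A\otimes\sigma_B^{1/2}$, too weak; truncating $\nu=\sigma_B^{-1/2}\rho_{AB}\sigma_B^{-1/2}$ as above gives the right operator bound, but the naive estimate of $\tr\rho-\tr\tilde\rho=\tr[(\id_A\otimes\sigma_B)\,\Pi^{\perp}\nu\Pi^{\perp}]$ by $c$ is obstructed by operator convexity of $x\mapsto x^2$ pointing the wrong way across the non-commuting $\sigma_B$-factors. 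Carrying this through requires either a more careful choice of truncation/perturbation (effectively rotating the kept part toward the high-$\sigma_B$ subspace) or a sharper operator-ordering argument; once the inequality $\tr\rho-\tr\tilde\rho\leq\tfrac{c}{\lambda}$ is in place, the constant $\tfrac{2}{\e^2}$, the trace-loss-to-purified-distance step, and the reduction of the second inequality are all routine.
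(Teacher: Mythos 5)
Your reduction to the first inequality and your handling of the second (definitional) one are fine, and your overall plan — smooth $\rho$ by cutting it off where it exceeds a multiple of $\id_A\otimes\sigma_B$, at the cost of at most $\eps^2/2$ in trace — is the right shape. But the proof has a genuine gap exactly where you flag it: the ``Markov-type'' estimate $\tr\rho-\tr\tilde\rho\leqslant c/\lambda$ is not proven, and for the specific $\tilde\rho$ you construct it is actually false. Take $A$ trivial, $\sigma_B=\mathrm{diag}(1-t,t)$, and $\rho=\proj{\psi}$ with $\ket{\psi}=\alpha\ket{0}+\beta\ket{1}$ and $\beta^2=u\sqrt{t}$ for a small constant $u$. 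Then $\nu=\sigma_B^{-1/2}\rho\,\sigma_B^{-1/2}$ is rank one with eigenvalue $\approx 1+u/\sqrt{t}$, while $c=2^{-\htwo(A|B)_{\rho|\sigma}}\approx(1+u)^2$ stays bounded; so for $t$ small the single eigenvalue exceeds your cutoff $\lambda=\tfrac{2}{\eps^2}c$, your projector $\Pi$ annihilates $\nu$ entirely, $\tilde\rho=0$, the trace loss is $1$ rather than $\eps^2/2$, and the purified distance from $\rho$ to $\tilde\rho$ is $1$. The lemma of course still holds for this state (smooth $\rho$ towards $\proj{0}$ instead), which shows the required smoothing cannot be the naive spectral truncation of $\sigma^{-1/2}\rho\sigma^{-1/2}$: the obstruction you noticed — the comparison between $\tr[\sigma\nu^2]$ and $\tr[\sigma^{1/2}\nu\sigma^{1/2}\nu]$ pointing the wrong way — is real, and your sketch does not get around it.

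The paper's proof closes this by not constructing the smoothed state by hand. It invokes Lemma 15 of \cite{tcr09}: there is a $\lambda>0$ such that, with $\Delta=(\rho_{AB}-\lambda\,\id_A\otimes\sigma_B)_+$, one has $\hmin^{\eps}(A|B)_{\rho|\sigma}\geqslant-\log\lambda$ for $\eps=\sqrt{2\tr[\Delta]}$. It then bounds $\tr[\Delta]$ as follows: on the support $P$ of $\Delta$ one has $P\rho_{AB}P\geqslant\lambda\,P(\id_A\otimes\sigma_B)P$, hence $\tr[\Delta]\leqslant\tr[P\rho P]\leqslant\lambda^{-1}\tr[P\rho P\,(P\sigma P)^{-1/2}\,P\rho P\,(P\sigma P)^{-1/2}]=\lambda^{-1}2^{D_2(P\rho P\,\|\,P(\id_A\otimes\sigma_B)P)}$, and the projections are removed by the data-processing inequality for $D_2$ (Definition \ref{def:d2}, Theorem \ref{thm:d2-monotone}, proved via joint operator convexity), giving $\eps^2/2\leqslant\lambda^{-1}2^{-\htwo(A|B)_{\rho|\sigma}}$, which rearranges to the claim. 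Note that by cyclicity $2^{-\htwo(A|B)_{\rho|\sigma}}=\tr[\rho_{AB}(\id_A\otimes\sigma_B)^{-1/2}\rho_{AB}(\id_A\otimes\sigma_B)^{-1/2}]$, so the $-1/4$ versus $-1/2$ mismatch you were fighting never actually arises; the ``sharper operator-ordering argument'' you correctly suspected was needed is precisely this data-processing step (or, alternatively, the citation of the TCR09 lemma), and without one of these ingredients your outline does not close.
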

This lemma is very similar to Theorem 7 in \cite{tcr09}, but note that they use a slightly different definition of $\htwo$. The proof of this version of the lemma is, therefore, very similar to theirs.
\begin{proof}
	Let $\Delta = (\rho_{AB} - 2^{-\hmin^{\eps}(A|B)_{\rho|\rho}} \ident_A \otimes \sigma_B)_{+}$ (where $(\cdot)_{+}$ denotes the nonnegative part of an operator). Let $\lambda > 0$ be such that $\hmin^{\eps}(A|B)_{\rho|\sigma} \geqslant -\log \lambda$ and $\eps = \sqrt{2 \tr[\Delta]}$ (such a $\lambda$ exists by Lemma 15 of \cite{tcr09}). Furthermore, let $P$ be the projector onto the support of $\Delta$. We then have that
	\begin{align*}
		P\rho_{AB} P &\geqslant \lambda P (\ident_A \otimes \sigma_B) P\\
		(P \sigma_B P)^{-1/2} P \rho_{AB} P (P \sigma_B P)^{-1/2} &\geqslant \lambda P_{AB},
	\end{align*}
	where we have omitted the $\ident_A$'s in the second line. Using this, we get that
	\begin{align*}
		\frac{\varepsilon^2}{2} &= \tr[\Delta]\\
		&= \tr[P(\rho_{AB} - \lambda \ident_A \otimes \sigma_B) P]\\
		&\leqslant \tr[P \rho_{AB} P]\\
		&\leqslant \lambda^{-1} \tr[P \rho_{AB} P (P \sigma_B P)^{-1/2} P\rho_{AB}P (P \sigma_B P)^{-1/2}]\\
		&= \lambda^{-1} 2^{D_2(P \rho_{AB} P \| P(\ident_A \otimes \sigma_B) P)}\\
		&\leqslant \lambda^{-1} 2^{D_2(\rho_{AB} \| \ident_A \otimes \sigma_B)}\\
		&\leqslant 2^{\hmin^{\varepsilon}(A|B)_{\rho|\sigma}-\htwo(A|B)_{\rho|\sigma}},
	\end{align*}
	where $D_2$ is defined in Definition \ref{def:d2} and the next to last inequality is due to Theorem \ref{thm:d2-monotone}.
\end{proof}

\begin{defin}\label{def:d2}
	Let $D_2(X \| Y)$ be defined as
	\[ 2^{D_2(X \| Y)} := \tr[(Y^{-1/4} X Y^{-1/4})^2]. \]
\end{defin}

\begin{thm}\label{thm:d2-monotone}
	$D_2( \mathcal{E}(X) \| \mathcal{E}(Y)) \leqslant D_2(X \| Y)$ for any CPTP map $\mathcal{E}$.
\end{thm}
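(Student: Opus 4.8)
The plan is to first rewrite $D_2$ in a more transparent form and then reduce the monotonicity to a single known operator-convexity fact. Observe that
\[
2^{D_2(X\|Y)} = \tr\!\big[(Y^{-1/4}XY^{-1/4})^{2}\big] = \tr\!\big[XY^{-1/2}XY^{-1/2}\big] = \chi^{2}(X\|Y) + 2\tr[X] - \tr[Y],
\]
where $\chi^{2}(X\|Y) := \tr\!\big[(X-Y)\,Y^{-1/2}(X-Y)\,Y^{-1/2}\big]$ is the $\chi^{2}$-divergence associated with the geometric-mean monotone metric $\Omega_{Y}(A) = Y^{-1/2}AY^{-1/2}$, i.e.\ with the operator monotone function $f(x)=\sqrt{x}$ (the last equality uses $\mathrm{supp}(X)\subseteq\mathrm{supp}(Y)$, the only nontrivial case). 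Since $\tr[X]$ and $\tr[Y]$ are unchanged by a trace-preserving map, the claimed monotonicity of $D_2$ is \emph{equivalent} to the monotonicity of this $\chi^{2}$-divergence under CPTP maps, which is the classical statement that the $\chi^{2}$-divergence attached to any operator monotone function is non-increasing under CPTP maps; equivalently, $D_2$ is the sandwiched R\'enyi relative entropy of order $2$, for which data processing is known. That would be the quick route.

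For a more self-contained argument I would reduce to the partial trace via Stinespring's dilation theorem: write $\mathcal{E}(X) = \tr_{E}[VXV^{\dagger}]$ for an isometry $V$, so that it suffices to handle (i) isometric conjugation $X\mapsto VXV^{\dagger}$ and (ii) the partial trace $\tr_{E}$. Step (i) is in fact an equality: since $VYV^{\dagger}$ is supported on $V\,\mathrm{supp}(Y)$, on that subspace $(VYV^{\dagger})^{-1/4}=VY^{-1/4}V^{\dagger}$, and $V^{\dagger}V=\id$ gives $(VYV^{\dagger})^{-1/4}(VXV^{\dagger})(VYV^{\dagger})^{-1/4} = V(Y^{-1/4}XY^{-1/4})V^{\dagger}$, hence the traces of the squares coincide. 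For step (ii) I would use the Weyl twirl $\tr_{E}(Z)\otimes\tfrac{\id_{E}}{d_{E}} = \tfrac{1}{d_{E}^{2}}\sum_{j}(\id\otimes W_{j})Z(\id\otimes W_{j}^{\dagger})$ together with the unitary invariance of $D_2$, the identity $D_2\big(A\otimes\tfrac{\id}{d}\,\big\|\,B\otimes\tfrac{\id}{d}\big)=D_2(A\|B)$, and the joint convexity of $(X,Y)\mapsto 2^{D_2(X\|Y)}$ (which follows from the $\chi^{2}$ fact above by the standard ``flag'' argument); chaining these yields $D_2(\tr_{E}X\,\|\,\tr_{E}Y)=D_2\big(\tr_{E}X\otimes\tfrac{\id_{E}}{d_{E}}\,\big\|\,\tr_{E}Y\otimes\tfrac{\id_{E}}{d_{E}}\big)\leqslant D_2(X\|Y)$.

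The hard part, common to all routes, is exactly the operator-convexity input: that $(X,Y)\mapsto\tr[XY^{-1/2}XY^{-1/2}]$ is jointly convex (equivalently, that $\Omega_{Y}(A)=Y^{-1/2}AY^{-1/2}$ defines a monotone Riemannian metric). The obstacle is non-commutativity --- because $Y$ enters as $Y^{-1/2}$ on \emph{each} side of $X$ rather than as a single factor $Y^{-1}$, this is genuinely stronger than the elementary joint operator convexity of $(X,Y)\mapsto XY^{-1}X$, and one must invoke the theory of operator monotone functions (via an integral representation of $Y^{-1/2}$, a complex-interpolation argument, or Ando's convexity theorem). All remaining ingredients --- the two explicit $D_2$ identities and the twirl decomposition of the partial trace --- are routine calculations.
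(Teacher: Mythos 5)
Your proposal is correct and, structurally, rests on the same pillar as the paper's proof: everything reduces to the joint convexity of $(X,Y)\mapsto\tr[XY^{-1/2}XY^{-1/2}]=2^{D_2(X\|Y)}$, which neither you nor the paper establishes from first principles. The difference is in how the two remaining ingredients are handled. The paper gets the convexity by instantiating a general operator-convexity theorem from \cite{wolfnotes} (the noncommutative-perspective construction with $f(x)=x^2$, $g(R)=R^{1/2}\otimes(R^{1/2})^{\top}$, $h(L)=L\otimes\id$, combined with the maximally-entangled-state identity $\tr[LR^{-1/2}LR^{-1/2}]=\tr[\Phi\,(LR^{-1/2}L\otimes(R^{-1/2})^{\top})]$), and then delegates the passage from joint convexity to CPTP-monotonicity to a second cited theorem (Theorem 5.16 of \cite{wolfnotes}). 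You instead import the convexity core from the monotone-metric/$\chi^{2}$ literature (your identification of $2^{D_2}$ with the $\chi^{2}_{1/2}$-divergence up to trace terms, equivalently sandwiched R\'enyi-$2$ data processing, is correct; note the paper itself cites \cite{temme:mixing}, where exactly this monotonicity is proved), and you make explicit the convexity$\Rightarrow$monotonicity step that the paper leaves to its citation: Stinespring dilation, equality of $D_2$ under isometric conjugation, the Weyl-twirl representation of the partial trace, the identity $D_2(A\otimes\id/d\,\|\,B\otimes\id/d)=D_2(A\|B)$, and block-additivity of $2^{D_2}$ for the flag argument — all of these checks are right. So your version is more self-contained on the reduction while the paper is more specific about which convexity theorem does the heavy lifting; as you note yourself, the operator-monotone-function input is irreducible in either route (and if you take $\chi^{2}$ monotonicity under all CPTP maps as given, the Stinespring/twirl machinery is actually redundant — the "quick route" already finishes). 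One caveat: your parenthetical that $\supp(X)\not\subseteq\supp(Y)$ is trivial holds only under the convention $D_2=+\infty$ in that case; with the literal pseudo-inverse reading of Definition \ref{def:d2} the inequality can fail (e.g.\ orthogonal pure states fed into the depolarizing channel), a support subtlety the paper's proof also glosses over by implicitly working with an invertible second argument.
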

\begin{proof}
	Consider the map $(L,R) \mapsto LR^{-1/2}L \otimes R^{-1/2}$. Theorem 5.14 in \cite{wolfnotes} shows that it is jointly operator convex, by taking $g(R) = R^{1/2} \otimes (R^{1/2})^{\top}$ (which is operator concave by \cite[Corollary 5.5, point 1]{wolfnotes}), $h(L) = L \otimes \ident$, $f(x) = x^2$. This entails that $(L,R) \mapsto \tr[R^{-1/2}LR^{-1/2}L]$ is also jointly operator convex, via the fact that
	\[ \tr[R^{-1/2}LR^{-1/2}L] = \tr[\Phi (LR^{-1/2}L \otimes (R^{-1/2})^{\top})]. \]
	We now invoke Theorem 5.16 from \cite{wolfnotes} on this functional to conclude the proof.
\end{proof}

Defining $\htwo(A|B)_{\rho|\sigma}$ naturally as $\htwo(A|B)_{\rho|\sigma} = \tr\left[ \left(\sigma_B^{-1/4} \rho_{AB} \sigma_B^{-1/4} \right)^2 \right]$, we obtain the following corollary.

\begin{cor}
	Let $\rho_{AB} \in \cS_{\leqslant}(AB)$ and $\sigma_B \in \cS_{\leqslant}(B)$ such that $\rho_B$ is in the support of $\sigma_B$. Then, for any CPTP map $\mathcal{E}_{B \rightarrow C}$, we have that $\htwo(A|B)_{\rho|\sigma} \leqslant \htwo(A|C)_{\mathcal{E}(\rho)|\mathcal{E}(\sigma)}$. 
\end{cor}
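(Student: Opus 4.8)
The plan is to read this off directly from the data-processing inequality for $D_2$ proved in Theorem~\ref{thm:d2-monotone}. First I would unpack both sides in terms of $D_2$: by Definition~\ref{def:d2} together with the formula \eqref{eqn:def-h2-rho-sigma}, we have $2^{-\htwo(A|B)_{\rho|\sigma}} = \tr\!\left[(\sigma_B^{-1/4}\rho_{AB}\sigma_B^{-1/4})^2\right] = 2^{D_2(\rho_{AB}\,\|\,\id_A\otimes\sigma_B)}$, and similarly, using $(\id_A\otimes\mathcal{E})(\id_A\otimes\sigma_B) = \id_A\otimes\mathcal{E}(\sigma_B)$, we get $2^{-\htwo(A|C)_{\mathcal{E}(\rho)|\mathcal{E}(\sigma)}} = 2^{D_2((\id_A\otimes\mathcal{E})(\rho_{AB})\,\|\,(\id_A\otimes\mathcal{E})(\id_A\otimes\sigma_B))}$. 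Taking $-\log$ of both (which reverses the direction of the inequality), the corollary is equivalent to $D_2\!\big((\id_A\otimes\mathcal{E})(\rho_{AB})\,\big\|\,(\id_A\otimes\mathcal{E})(\id_A\otimes\sigma_B)\big) \leqslant D_2(\rho_{AB}\,\|\,\id_A\otimes\sigma_B)$.

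The second step is to invoke Theorem~\ref{thm:d2-monotone} with the map $\id_A\otimes\mathcal{E}_{B\to C}$, which is CPTP because $\mathcal{E}$ is, applied to $X = \rho_{AB}$ and $Y = \id_A\otimes\sigma_B$. This is precisely the displayed inequality, so the corollary follows.

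The one point that needs a little care — and the only (minor) obstacle I anticipate — is checking that the quantities are finite, so that the generalized inverses $\sigma_B^{-1/4}$ and $\mathcal{E}(\sigma_B)^{-1/4}$ are taken over the correct supports. The hypothesis that $\rho_B$ lies in the support of $\sigma_B$ gives $\supp(\rho_{AB}) \subseteq \supp(\id_A\otimes\sigma_B)$, so $D_2(\rho_{AB}\,\|\,\id_A\otimes\sigma_B)$ is finite; and since $\rho_B \leqslant \mu\,\sigma_B$ for some $\mu \geqslant 0$, complete positivity of $\mathcal{E}$ gives $\mathcal{E}(\rho_B) \leqslant \mu\,\mathcal{E}(\sigma_B)$, hence $\supp(\mathcal{E}(\rho_B)) \subseteq \supp(\mathcal{E}(\sigma_B))$ and the right-hand side is finite as well. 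With this observation in place the corollary is immediate from Theorem~\ref{thm:d2-monotone}.
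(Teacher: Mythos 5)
Your proposal is correct and matches the paper's intended argument: the corollary is stated as an immediate consequence of Theorem~\ref{thm:d2-monotone}, obtained exactly as you do by writing $2^{-\htwo(A|B)_{\rho|\sigma}} = 2^{D_2(\rho_{AB}\|\id_A\otimes\sigma_B)}$ and applying the data-processing inequality to the CPTP map $\id_A\otimes\mathcal{E}$. Your additional remark on supports and finiteness is a harmless (and sensible) refinement of what the paper leaves implicit.
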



\begin{lem}
\label{lem:condition-cl}
Suppose $\rho \in \cS(AQC)$ is such that the $C$ system is classical, i.e.,
$\rho_{AQC} = \sum_{c} p(c) \proj{c} \otimes \rho^c_{AQ}$ for some probability distribution $p$ and orthogonal vectors $\{\ket{c}\}_c$ in $C$.
Then
\[
\entHtwo(A|QC)_{\rho} = - \log \sum_{c} p(c) 2^{-\entHtwo(A|Q)_{\rho^c}}.
\]
In particular $\entHtwo(A|QC) \geqslant -\log |Q|$.
\end{lem}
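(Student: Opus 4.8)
The plan is to evaluate the definition~\eqref{eqn:def-h2} of $\entHtwo(A|QC)_\rho$ directly, exploiting the block-diagonal structure in the classical register $C$, and then to deduce the ``in particular'' statement from the elementary bound $\entHtwo(A|Q)\geqslant-\log|Q|$ for normalized states.

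First I would observe that the marginal $\rho_{QC}=\sum_c p(c)\proj{c}\otimes\rho^c_Q$ is block-diagonal with respect to the orthogonal family $\{\ket{c}\}_c$, so that (taking all inverses on supports) $\rho_{QC}^{-1/4}=\sum_c p(c)^{-1/4}\proj{c}\otimes(\rho^c_Q)^{-1/4}$. Plugging this into $(\id_A\otimes\rho_{QC}^{-1/4})\,\rho_{AQC}\,(\id_A\otimes\rho_{QC}^{-1/4})$ and using $\proj{c}\proj{c'}=\delta_{cc'}\proj{c}$ on both sides, every cross term in $c$ drops out, leaving
\[
(\id_A\otimes\rho_{QC}^{-1/4})\,\rho_{AQC}\,(\id_A\otimes\rho_{QC}^{-1/4})=\sum_c\sqrt{p(c)}\;\proj{c}\otimes\bigl(\id_A\otimes(\rho^c_Q)^{-1/4}\bigr)\rho^c_{AQ}\bigl(\id_A\otimes(\rho^c_Q)^{-1/4}\bigr).
\]
Squaring this operator, the cross terms again vanish by orthogonality of the $\proj{c}$, and taking the trace (using $\tr[\proj{c}]=1$) gives $\tr[(\rho_{QC}^{-1/4}\rho_{AQC}\rho_{QC}^{-1/4})^2]=\sum_c p(c)\,2^{-\entHtwo(A|Q)_{\rho^c}}$. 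Applying $-\log$ to both sides yields the claimed identity.

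For the last assertion it suffices to check $2^{-\entHtwo(A|Q)_{\rho^c}}\leqslant|Q|$ for each $c$, since $\sum_c p(c)=1$ then gives $\sum_c p(c)\,2^{-\entHtwo(A|Q)_{\rho^c}}\leqslant|Q|$. As each $\rho^c_{AQ}$ is a normalized state, I would invoke $\entHtwo(A|Q)_{\rho^c}\geqslant\entHmin(A|Q)_{\rho^c}$ (Lemma~\ref{lem:hmin-h2-fullyquantum}) together with $\entHmin(A|Q)_{\rho^c}\geqslant-\log|Q|$, the latter following from the definition~\eqref{eqn:def-hmin} by choosing $\sigma_Q=\id_Q/|Q|$ and using $\rho^c_{AQ}\leqslant\id_{AQ}$. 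I do not expect a genuine obstacle here; the only point requiring a little care is the use of generalized inverses on the supports, which is legitimate since $\supp(\rho^c_Q)$ automatically contains the $Q$-support of $\rho^c_{AQ}$ so that all the operators appearing are well defined.
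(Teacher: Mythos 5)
Your proof of the main identity is exactly the paper's argument: you exploit the block-diagonality of $\rho_{QC}$ in the classical register so that $\rho_{QC}^{-1/4}$ factorizes blockwise, the cross terms in $c$ vanish, and the trace splits into $\sum_c p(c)\,2^{-\entHtwo(A|Q)_{\rho^c}}$. The only (immaterial) difference is the final bound: the paper shows directly that $\tr\left[\left((\rho^c_Q)^{-1/4}\rho^c_{AQ}(\rho^c_Q)^{-1/4}\right)^2\right]\leqslant\tr\left[(\rho^c_Q)^{-1}\rho^c_{AQ}\right]\leqslant|Q|$ using $\rho^c_{AQ}\leqslant\id$, whereas you route it through $\entHtwo\geqslant\entHmin\geqslant-\log|Q|$; both are correct.
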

\begin{proof}
We have
\begin{align*}
&\tr \left[ \left(\id_{A} \otimes \rho_{QC}^{-1/4} \rho_{AQC} \id_{A} \otimes \rho_{QC}^{-1/4} \right)^2\right] \\ 
&=  \tr \left[ \left(\id_{A} \otimes (\sum_{c} p(c) \proj{c} \otimes \rho^c_{Q})^{-1/4} \left(\sum_{c} p(c) \proj{c} \otimes \rho^c_{AQ} \right) \id_{A} \otimes (\sum_{c} p(c) \proj{c} \otimes \rho^c_{Q})^{-1/4} \right)^2\right] \\
&= \sum_{c} \tr \left[ \left(\id_{A} \otimes ( p(c)\rho^c_{Q})^{-1/4} \rho^c_{AQ} \id_{A} \otimes (p(c)\rho^c_{Q})^{-1/4} \right)^2\right] \\
&= \sum_{c} p(c) \tr \left[ \left(\id_{A} \otimes (\rho^c_{Q})^{-1/4} \rho^c_{AQ} \id_{A} \otimes (\rho^c_{Q})^{-1/4} \right)^2\right].
\end{align*}
To conclude the proof, we simply observe that $\tr \left[ \left(\id_{A} \otimes (\rho_{Q})^{-1/4} \rho_{AQ} \id_{A} \otimes (\rho_{Q})^{-1/4} \right)^2\right] \leqslant \tr[\id_A \otimes \rho_Q^{-1} \rho_{AQ}] = \tr[ \rho_Q^{-1} \rho_Q ] = |Q|$.
\end{proof}

The following claim gives a bound on the function $\gamma$ from Theorem \ref{thm:ur-h2-bb84} for small values of $h_2$.
\begin{claim}
\label{claim:estimate-gamma}
Write $h_2 = -1 + x$ with $x \leqslant 1/3$, then we have
\[
\gamma(-1+x) \geqslant \frac{x}{10 \log(1/x)}. 
\]
\end{claim}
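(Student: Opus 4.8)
The plan is to exploit the explicit formula for $\gamma$. Since $-1+x \leqslant -1 + \tfrac13 < \tfrac12$, we are in the regime $\gamma(-1+x) = g^{-1}(-1+x)$ with $g(\alpha) = \binent(\alpha) + \alpha - 1$, and $g$ is strictly increasing on $[0,\tfrac12]$ because $g'(\alpha) = \log\frac{1-\alpha}{\alpha} + 1 > 0$ there. Writing $\alpha_* := \frac{x}{10\log(1/x)}$, it therefore suffices to prove $g(\alpha_*) \leqslant -1+x$; the monotonicity of $g^{-1}$ then yields $\gamma(-1+x) = g^{-1}(-1+x) \geqslant \alpha_*$, which is the claim. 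Since $g(\alpha_*) = \binent(\alpha_*) + \alpha_* - 1$, the target inequality is equivalent to $\binent(\alpha_*) + \alpha_* \leqslant x$, and I would establish this by bounding $\alpha_* \leqslant x/2$ and $\binent(\alpha_*) \leqslant x/2$ separately.

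The first bound is immediate: $x \leqslant \tfrac13$ gives $\log(1/x) \geqslant \log 3$, hence $\alpha_* \leqslant \frac{x}{10\log 3} < \frac{x}{2}$; in particular $\alpha_* < \tfrac12$, so the branch $\gamma = g^{-1}$ and the entropy estimate below are both legitimate. For the second bound I would use the elementary inequality $\binent(\alpha) \leqslant \alpha\log(2/\alpha)$, valid for $\alpha \in [0,\tfrac12]$, which follows from $(1-\alpha)\log\frac{1}{1-\alpha}\leqslant\alpha$ (a one-line convexity argument: $\beta \mapsto 1-\beta+\beta\log\beta$ is convex and vanishes at $\beta=\tfrac12$ and $\beta=1$, hence is $\leqslant 0$ on $[\tfrac12,1]$). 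Setting $L := \log(1/x) \geqslant \log 3$, so that $\alpha_* = \frac{x}{10L}$ and $\log(2/\alpha_*) = 1 + L + \log(10L)$, this gives $\binent(\alpha_*) \leqslant \frac{x}{10}\bigl(1 + \tfrac{1+\log(10L)}{L}\bigr)$. The function $L \mapsto \frac{1+\log(10L)}{L}$ is decreasing for $L \geqslant \log 3$ — its derivative has numerator $\tfrac{1}{\ln 2} - 1 - \log(10L)$, which is negative as soon as $10L > 2^{1/\ln 2 - 1}$ — so it is bounded by its value at $L=\log 3$, which is under $4$. Hence $\binent(\alpha_*) \leqslant \frac{x}{10}\cdot 5 = \frac{x}{2}$.

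Putting the two pieces together, $\binent(\alpha_*) + \alpha_* \leqslant \frac{x}{2} + \frac{x}{2} = x$, i.e., $g(\alpha_*) \leqslant -1+x$, and the monotonicity of $g$ finishes the argument. There is no genuine obstacle; the only delicate point is purely quantitative, namely that the constant $10$ in the statement is large enough for both $\alpha_* \leqslant x/2$ and $\binent(\alpha_*) \leqslant x/2$ to hold throughout $0 < x \leqslant \tfrac13$ — this is precisely where the numerical facts $10\log 3 > 2$ and $\frac{1+\log(10\log 3)}{\log 3} < 4$ enter. (The boundary case $x=0$ is trivial, as both sides vanish.)
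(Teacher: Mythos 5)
Your overall strategy is exactly the paper's: set $\alpha_* = \frac{x}{10\log(1/x)}$, show $g(\alpha_*)\leqslant -1+x$, and invert the (monotone) function $g$. However, the step where you bound the entropy term contains a genuine error: the ``elementary inequality'' $\binent(\alpha)\leqslant \alpha\log(2/\alpha)$ is \emph{false} on $(0,\tfrac12)$, and in particular in the regime $\alpha_*\lesssim 0.02$ where you apply it. Indeed, $\binent(\alpha)-\alpha\log(2/\alpha)=(1-\alpha)\log\frac{1}{1-\alpha}-\alpha$, and for small $\alpha$ one has $(1-\alpha)\log\frac{1}{1-\alpha}\approx \alpha\log e\approx 1.44\,\alpha>\alpha$, so the difference is strictly positive (e.g.\ at $\alpha=0.02$: $\binent(0.02)\approx 0.141$ versus $0.02\log(100)\approx 0.133$). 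Your own convexity argument betrays this: the function $\phi(\beta)=1-\beta+\beta\log\beta$ is convex and vanishes at $\beta=\tfrac12,1$, hence $\phi\leqslant 0$ on $[\tfrac12,1]$, which with $\beta=1-\alpha$ reads $\alpha\leqslant(1-\alpha)\log\frac{1}{1-\alpha}$ --- the \emph{reverse} of the inequality you claim it proves. So as written, the bound $\binent(\alpha_*)\leqslant x/2$ is unjustified.

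The good news is that the flaw is local and the constants are forgiving. Replace your false bound by the correct standard one, $\binent(\alpha)\leqslant\alpha\log(e/\alpha)$ (which follows from $\ln\frac{1}{1-\alpha}\leqslant\frac{\alpha}{1-\alpha}$, i.e.\ $(1-\alpha)\log\frac{1}{1-\alpha}\leqslant\alpha\log e$), or by the paper's choice $\binent(\alpha)\leqslant 2\alpha\log(1/\alpha)$ valid on $[0,\tfrac12]$. With the former, writing $L=\log(1/x)\geqslant\log 3$, you get $\binent(\alpha_*)\leqslant\frac{x}{10}\bigl(1+\frac{\log e+\log(10L)}{L}\bigr)$; the function $L\mapsto\frac{\log e+\log(10L)}{L}$ has derivative with numerator $\log e-\log e-\log(10L)<0$, so it is decreasing and at $L=\log 3$ its value is about $3.43$, giving $\binent(\alpha_*)\leqslant\frac{x}{10}\cdot 4.43<\frac{x}{2}$. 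Your split $\alpha_*\leqslant x/2$, $\binent(\alpha_*)\leqslant x/2$ then survives unchanged and the claim follows as you intended; this repaired version differs from the paper's proof only in which elementary entropy bound is invoked (the paper uses $\binent(\alpha)\leqslant 2\alpha\log(1/\alpha)$ and checks $\log 10+\log\log(1/x)+\log(1/x)+\tfrac12\leqslant 5\log(1/x)$), while your write-up makes the monotonicity of $g$ and the branch $\gamma=g^{-1}$ explicit, which the paper leaves implicit.
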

\begin{proof}
Recall that $\gamma$ is the inverse of $g(x) = h(x) + x - 1$. We have
\begin{align*}
g\left(\frac{x}{10\log(1/x)}\right) &= h\left(\frac{x}{10\log(1/x)}\right) + \frac{x}{10\log(1/x)} - 1 \\
				&\leqslant 2 \cdot \frac{x}{10\log(1/x)} \log \frac{10\log(1/x)}{x} + \frac{x}{10 \log(1/x)} - 1\\
				&\leqslant \frac{x}{5\log(1/x)} \left( \log 10 + \log \log(1/x) + \log(1/x) + \frac{1}{2} \right) - 1 \\
				&\leqslant x - 1,
\end{align*}
which proves the desired result.
\end{proof}

\begin{lem}
\label{lem:sum-binomial}
Let $a$ be a positive integer. We have for any $\frac{\ell}{n} \leqslant \frac{a}{a+1}$,
\[
\sum_{k=0}^{\ell} \binom{n}{k} a^k \leqslant 2^{nh(\ell/n)} a^{\ell}
\]
\end{lem}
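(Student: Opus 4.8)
The plan is to prove this by a standard Chernoff-type weighting argument: we inflate every term $a^k$ in the partial sum up to $a^{\ell}$ at the cost of a geometric factor, using precisely the hypothesis $\ell/n \le a/(a+1)$, and then drop the constraint $k \le \ell$ so as to recognise a full binomial expansion.

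Concretely, write $\alpha = \ell/n$, so the hypothesis reads $\alpha \le \tfrac{a}{a+1}$, equivalently $\tfrac{\alpha}{1-\alpha} \le a$ (the case $\alpha = 0$ being trivial, since then both sides of the claim equal $1$). First I would observe that for every $k \le \ell$ we have $a^{\ell-k} \ge \bigl(\tfrac{\alpha}{1-\alpha}\bigr)^{\ell-k}$ because $\ell - k \ge 0$ and the base $a$ dominates $\tfrac{\alpha}{1-\alpha}$; hence $a^k = a^{\ell}\,a^{-(\ell-k)} \le a^{\ell}\bigl(\tfrac{1-\alpha}{\alpha}\bigr)^{\ell-k}$. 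Substituting this into the sum and then extending the range to all $k \in \{0,\dots,n\}$ (every summand is nonnegative) gives, by the binomial theorem,
\[
\sum_{k=0}^{\ell}\binom{n}{k}a^k \;\le\; a^{\ell}\Bigl(\tfrac{1-\alpha}{\alpha}\Bigr)^{\ell}\sum_{k=0}^{n}\binom{n}{k}\Bigl(\tfrac{\alpha}{1-\alpha}\Bigr)^{k} \;=\; a^{\ell}\Bigl(\tfrac{1-\alpha}{\alpha}\Bigr)^{\ell}\Bigl(\tfrac{1}{1-\alpha}\Bigr)^{n}.
\]
It then remains only to verify the routine algebraic identity $\bigl(\tfrac{1-\alpha}{\alpha}\bigr)^{\ell}(1-\alpha)^{-n} = \alpha^{-\alpha n}(1-\alpha)^{-(1-\alpha)n} = 2^{n h(\alpha)}$ (using $\ell = \alpha n$ and the definition of the binary entropy $h$), which turns the right-hand side into $a^{\ell}2^{n h(\ell/n)}$ and finishes the proof.

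I do not expect any real obstacle here: the only delicate points are bookkeeping ones, namely getting the direction of the inequality $a^{\ell-k}\ge(\alpha/(1-\alpha))^{\ell-k}$ right and disposing of the degenerate endpoint $\alpha=0$ separately. The whole content is concentrated in the choice of the reweighting factor $\tfrac{\alpha}{1-\alpha}$, which is exactly the tilt that moves the mode of the binomial-type weights to $k=\ell$; once that is fixed, everything is elementary.
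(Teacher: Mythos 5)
Your proof is correct: the tilting inequality $a^{k}\leq a^{\ell}\bigl(\tfrac{1-\alpha}{\alpha}\bigr)^{\ell-k}$ is valid for $k\leq\ell$ precisely because the hypothesis $\ell/n\leq a/(a+1)$ is equivalent to $\tfrac{\alpha}{1-\alpha}\leq a$, the extension of the sum and the binomial theorem give the factor $(1-\alpha)^{-n}$, and the closing identity $\bigl(\tfrac{1-\alpha}{\alpha}\bigr)^{\alpha n}(1-\alpha)^{-n}=2^{nh(\alpha)}$ is exact, with the degenerate case $\alpha=0$ handled separately as you note. The paper itself gives no argument here -- it simply cites Lemma 5 of a coding-theory reference for this standard bound on the volume of a Hamming ball -- and your Chernoff-type reweighting is exactly the standard proof of that cited lemma, so your write-up is a correct, self-contained substitute for the citation.
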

\begin{proof}
	See for example \cite{venkat-codingtheory}, Lemma 5.
\end{proof}

\section{Operational interpretation of $\htwo$}
\label{sec:oper-inter-h2}

When $X$ is classical, then it is already known~\cite{BCHLW08} that 
\begin{align*}
\entHtwo(X|E) = - \log P_{\rm guess}^{\rm pg}(X|E)\ , 
\end{align*}
where $P_{\rm guess}^{\rm pg}$ is the guessing probability using the pretty good measurement 
which performs very well~\cite{Hausladen94}.
For completeness, we here include the arguments of~\cite{bcw13} regarding the operational interpretation of $\htwo$ for quantum information $A$. 
Like the min-entropy, it is a natural measure of the entanglement between $A$ and $B$ in that $\htwo(A|B)=-\log [|A| F^{\rm pg}(A|B)^2]$ with
\begin{align}\label{eq:h2alt}
F^{\rm pg}(A|B)=F(\Phi^N_{AA'},\id_A \otimes \Lambda_{B\rightarrow A'}^{\rm pg}(\rho_{AB}))\ ,
\end{align}
and $\Lambda^{\rm pg}_{B\rightarrow A'}$ is the pretty good recovery map~\cite{BarnKnil02}. To see this, we note that the pretty good recovery map can be written as
\begin{align*}
\Lambda^{\rm pg}_{B\rightarrow A'}(\cdot)=\frac{1}{|A|}\cdot\mathcal{E}^\dagger_{B\rightarrow A'}\left(\rho_B^{-1/2}(\cdot) \rho_B^{-1/2}\right)\ ,
\end{align*}
where $\mathcal{E}^\dagger_{B\rightarrow A'}$ denotes the adjoint of the Choi-Jamiolkowski map of $\rho_{AB}$,
\begin{align*}
\mathcal{E}_{A\rightarrow B}(\cdot)=|A|\cdot\tr_{A}\left[\left((\cdot)^T \otimes \id_{B}\right)\rho_{AB}\right]\ .
\end{align*}
Putting this in~\eqref{eqn:def-h2} we arrive at~\eqref{eq:h2alt}. The map $\Lambda^{\rm pg}_{B\rightarrow A'}$ is pretty good in the sense that it is close to optimal for recovering the maximally entangled state, i.e., the following bound holds~\cite{BarnKnil02}
\begin{align*}
F^{2}(A|B)\leqslant F^{\rm pg}(A|B)\leqslant  F(A|B)\ ,
\end{align*}
where $F(A|B)$ is the fidelity achievable by the optimal map given in~\eqref{eq:minDef}.

\end{document}